\newtheorem{thm}{Theorem}[section]
\newtheorem{lem}[thm]{Lemma}
\newtheorem{proof}[thm]{proof}
\newtheorem{defn}[thm]{Definition}
\newtheorem{rem}[thm]{Remark}
\newtheorem{exam}[thm]{Example}
\fi \hyphenation{op-tical net-works
semi-conduc-tor}
\begin{document}
\title{Network Coding for $3$s$/n$t Sum-Networks}

\author{\IEEEauthorblockN{Wentu Song\IEEEauthorrefmark{1},
Chau Yuen\IEEEauthorrefmark{1}, Kai Cai\IEEEauthorrefmark{2}, and
Rongquan Feng\IEEEauthorrefmark{3}}
\IEEEauthorblockA{\IEEEauthorrefmark{1}Singapore University of
Technology and Design}
\IEEEauthorblockA{\IEEEauthorrefmark{2}Department of Mathematics,
University of Hong Kong}
\IEEEauthorblockA{\IEEEauthorrefmark{3}School of Mathematical
Sciences, Peking University, Peking, China\\ Emails:
wentu$\_$song@sutd.edu.sg; yuenchau@sutd.edu.sg;
eecaikai@gmail.com; fengrq@math.pku.edu.cn}}

\maketitle

\begin{abstract}
A sum-network is a directed acyclic network where each source
independently generates one symbol from a given field $\mathbb F$
and each terminal wants to receive the sum $($over $\mathbb F)$ of
the source symbols. For sum-networks with two sources or two
terminals, the solvability is characterized by the connection
condition of each source-terminal pair \cite{Rama08}. A necessary
and sufficient condition for the solvability of the $3$-source
$3$-terminal $(3$s$/3$t$)$ sum-networks was given by Shenvi and
Dey \cite{Shenvi10}. However, the general case of arbitrary
sources/sinks is still open. In this paper, we investigate the
sum-network with three sources and $n$ sinks using a region
decomposition method. A sufficient and necessary condition is
established for a class of $3$s$/n$t sum-networks. As a direct
application of this result, a necessary and sufficient condition
of solvability is obtained for the special case of $3$s$/3$t
sum-networks.
\end{abstract}


\IEEEpeerreviewmaketitle

\section{Introduction}
Network coding allows intermediate nodes of a communication
network to combine the incoming information before forwarding it,
and was shown to have significant throughput advantages as opposed
to the conventional store-and-forward scheme \cite{Ahlswede00,
Li03}.

Most of the existent works of network coding focus on how the
terminal nodes recover the whole or part of the original messages.
Recently, network coding for communicating the sum of source
messages to the terminal nodes was investigated
\cite{Rama08}-\cite{Rai13}. Such a network is called as a
sum-network. The problem of communicating sums over networks is in
fact a subclass of the problem of distributed function
computation, which has been considered in different contexts
\cite{Giridhar}-\cite{Kannan}.

It was shown in \cite{Rama08} that for directed acyclic graphs
with unit capacity edges and independent, unit-entropy sources, if
there are two sources or two terminals in the network, then the
network is solvable if and only if every source is connected to
every terminal. For the $3$-source $3$-terminal $(3$s$/3$t$)$
sum-networks, a necessary and sufficient condition for the
solvability over any field is given in \cite{Shenvi10}. However,
for networks with arbitrary number of sources and terminals, no
necessary and sufficient condition is known.

In this paper, we consider the sum-networks with three sources
using the technique of region decomposition
\cite{Wentu11,Wentu12}. We give a necessary and sufficient
condition for the solvability of a subclass of $3$s$/n$t
sum-networks. As a result, we give a simple characterization of
solvability for the special case of $3$s$/3$t sum-networks.

This paper is organized as follows. In Section
\uppercase\expandafter{\romannumeral 2}, we introduce the network
model and the notations.  The methodology is proposed in section
\uppercase\expandafter{\romannumeral 3}. The main result is
presented in Section \uppercase\expandafter{\romannumeral 4}. The
paper is concluded in Section \uppercase\expandafter{\romannumeral
5}.

\section{Models and Notations}
We consider a directed, acyclic, finite graph $G=(V,E)$ with a set
of $k$ sources $\{s_1,\cdots,s_k\}$ and a set of $n$ terminals
(sinks) $\{t_1,\cdots, t_n\}$. Each source $s_i$ generates a
message $X_i\in\mathbb F$ and each terminal $t_j$ wants to get the
sum $\sum_{i=1}^kX_i$, where $\mathbb F$ is a finite field. We
assume that each link is error-free, delay-free and can carry one
symbol from the field in each use. We call such network as a
$k$s$/n$t
sum-network. 

For a link $e=(u,v)\in E$, $u$ is called the \emph{tail} of $e$
and $v$ is called the \emph{head} of $e$, and are denoted by
$u=\text{tail}(e)$ and $v=\text{head}(e)$, respectively. We call
$e$ an incoming link of $v~($an outgoing link of $u)$. For two
links $e,e'\in E$, we call $e'$ an {\em incoming link} of $e$ ($e$
an {\em outgoing link} of $e'$) if
$\text{tail}(e)=\text{head}(e')$. For any $e\in E$, denoted by
$\text{In}(e)$ the set of incoming links of $e$.

To aid analysis, we assume that each source $s_i$ has an imaginary
incoming link, called the $X_{i}$ {\em source link} $($or a
\emph{source link} for short$)$, and each terminal $t_{j}$ has an
imaginary outgoing link, called a {\em terminal link}. Note that
the source links have no tail and the terminal links have no head.
As a result, the source links have no incoming link. For the sake
of convenience, if $e\in E$ is not a source link, we call $e$ a
\emph{non-source link}.

\renewcommand\figurename{Fig}
\begin{figure*}[htbp]
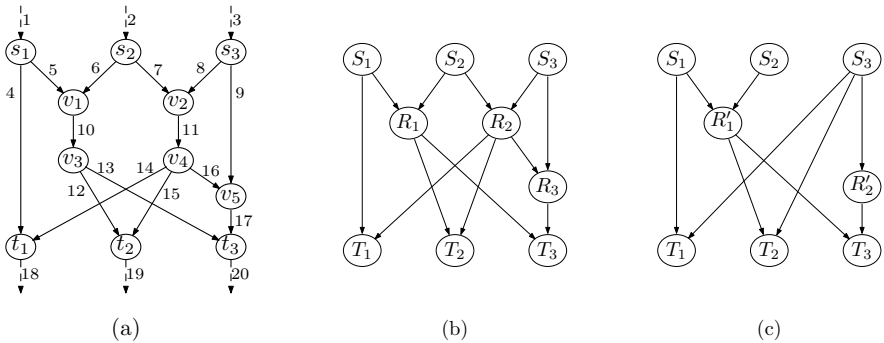

\begin{center}
\includegraphics[height=4.5cm]{3t-1.1}
\includegraphics[height=4cm]{3t-1.2}
\includegraphics[height=4cm]{3t-1.3}
\end{center}
\vspace{-0.4cm}\caption{Examples of region graph: (a) is a
$3$s$/3$t sum-network $G_1$, where all links are sequentially
indexed as $1,2,\cdots,20$. Here, the imaginary links $1,2,3$ are
the $X_1,X_2,X_3$ source link, and $18,19,20$ are the terminal
links at terminal $t_1,t_2,t_3$ respectively. (b) is the region
graph $\text{RG}(D)$, where $S_1=\{1,4,5\},S_2=\{2,6,7\},
S_3=\{3,8,9\}, R_1=\{10,12,13\}, R_2=\{11,14,15,16\}, R_3=\{17\},
T_1=\{18\}, T_2=\{19\}, T_3=\{20\}$ and
$D=\{S_1,S_2,S_3,R_1,R_2,R_3,T_1,T_2,T_3\}$. (c) is the region
graph $\text{RG}(D')$, where $S_1=\{1,4,5\},S_2=\{2,6,7\},
S_3=\{3,8,9\}, R_1'=\{10,12,13\}, R_2'=\{11,14,15,16,17\},
T_1=\{18\}, T_2=\{19\}, T_3=\{20\}$ and
$D'=\{S_1,S_2,S_3,R_1',R_2',T_1,T_2,T_3\}$.}\label{g-1}\vspace{-0.4cm}
\end{figure*}

Let $\mathbb F^k$ be the $k$-dimensional vector space over the
finite field $\mathbb F$. For any subset $A\subseteq\mathbb F^k$,
let $\langle A\rangle$ denote the subspace of $\mathbb F^k$
spanned by $A$. For $i\in\{1,\cdots,k\}$, we let $\alpha_i$ denote
the vector of $\mathbb F^k$ with the $i$th component being one and
all other components being zero. Meanwhile, we let
$\bar{\alpha}=\sum_{i=1}^k\alpha_i=(1,1,\cdots,1)$, i.e., the
vector with all components being one.

For any linear network coding scheme, the message along any link
$e$ is a linear combination $M_e=\sum_{i=1}^kc_iX_i$ of the source
messages and we use the corresponding coding vector
$d_e=(c_1,\cdots,c_k)$ to represent the message, where
$c_i\in\mathbb F$. To ensure the computability of network coding,
the outgoing message, as a $k$-dimensional vector, must be in the
span of all incoming messages. Moreover, to ensure that all
terminals receive the sum $\sum_{i=1}^kX_i$, if $e$ is a terminal
link of the sum-network, then
$d_e=\sum_{i=1}^k\alpha_i=\bar{\alpha}$. Thus, we can define a
linear network code of a $k$s$/n$t sum-network as follows:
\begin{defn}[Linear Network Code]\label{lnc}
Let $G=(V,E)$ be a $k$s$/n$t sum-network. A \emph{linear code}
(LC) of $G$ over the field $\mathbb F$ is a collection of vectors
$C=\{d_{e}\in\mathbb F^k; e\in E\}$ such that
\begin{itemize}
    \item[(1)] $d_{e}=\alpha_{i}$ if $e$ is the $X_i$ source link
    $(i=1,\cdots,k)$;
    \item[(2)] $d_e\in\langle d_{e'}; e'\in\text{In}(e)\rangle$
    if $e$ is a non-source link.
\end{itemize}
The code $C=\{d_{e}\in\mathbb F^k; e\in E\}$ is said to be a
\emph{linear solution} of $G$ if $d_{e}=\bar{\alpha}$ for all
terminal link $e$.
\end{defn}

The vector $d_e$ is called the {\em global encoding vector} of
link $e$. The network $G$ is said to be {\em solvable} if it has a
linear solution over some finite field $\mathbb F$.

\section{Region Decomposition and Network Coding}
In this section, we present the region decomposition approach,
which will take a key role in our discussion. The basic idea of
region decomposition is proposed in \cite{Wentu11,Wentu12}.

\subsection{Region Decomposition and Region Graph}
\begin{defn}[Region and Region Decomposition]\label{Reg}
Let $R$ be a non-empty subset of $E$. $R$ is called a region of
$G$ if there is an $e_{l}\in R$ such that for any $e\in R$ and
$e\neq e_{l}$, $R$ contains an incoming link of $e$. If $E$ is
partitioned into mutually disjoint regions, say
$R_{1},R_{2},\cdots,R_{N}$, then we call
$D=\{R_{1},R_{2},\cdots,R_{N}\}$ a region decomposition of $G$.
\end{defn}

The edge $e_{l}$ in Definition \ref{Reg} is called the leader of
$R$ and is denoted as $e_{l}=\text{lead}(R)$. A region $R$ is
called the $X_{i}$ \emph{source region} (or a \emph{source region}
for short) if $\text{lead}(R)$ is the $X_{i}$ source link; $R$ is
called a \emph{terminal region} if $R$ contains a terminal link.
If $R$ is neither a source region nor a terminal region, we call
$R$ a coding region. If $R$ is not a source region, we call $R$ a
\emph{non-source region}.

Since the source links have no incoming link, then each source
region contains exactly one source link, i.e., its leader. But a
terminal region may contains more than one terminal links. So
there are exactly $k$ source region and at most $n$ terminal
regions for any $k$s$/n$t sum-network. We will always denote the
$k$ source regions as $S_1,\cdots,S_k$ and the $n$ terminal
regions as $T_1,\cdots,T_n$.

\begin{defn}[Region Graph]\label{reg-g}
Let $D$ be a region decomposition of $G$. The region graph of $G$
about $D$ is a directed, simple graph with vertex set $D$ and edge
set $\mathcal E_D$, where $\mathcal E_D$ is the set of all ordered
pairs $(R',R)$ such that $R'$ contains an incoming link of
$\text{lead}(R)$.
\end{defn}

Consider the example network $G_1$ in Fig. \ref{g-1} (a). Examples
of two region graphs are shown in Fig. \ref{g-1} (b) and (c). In
general, $G$ may have many region decompositions.

We use $\text{RG}(D)$ to denote the region graph of $G$ about $D$,
i.e., $\text{RG}(D)=(D,\mathcal E_D)$. If $(R',R)$ is an edge of
$\text{RG}(D)$, we call $R'$ a parent of $R$. For $R\in D$, we use
$\text{In}(R)$ to denote the set of parents of $R$ in
$\text{RG}(D)$. Since the source links have no incoming link, then
the source regions have no parent. Moreover, since $G$ is acyclic,
then clearly, $\text{RG}(D)$ is acyclic.

For $R,R'\in D$, a path in $\text{RG}(D)$ from $R'$ to $R$ is a
sequence of regions $\{R_0=R',R_1,\cdots,R_p=R\}$ such that
$R_{i-1}$ is a parent of $R_i, i=1,\cdots,p$. If there is a path
from $R'$ to $R$, we say $R'$ is connected to $R$ and denote
$R'\rightarrow R$. Else, we say $R'$ is not connected to $R$ and
denote $R'\nrightarrow R$. In particular, we regard $R\rightarrow
R$ for all $R\in D$.

\subsection{Network Coding on Region Graph}
\begin{defn}[Codes on Region Graph]\label{lnc-reg-g}
A \emph{linear code} (LC) of the region graph $\text{RG}(D)$ over
the field $\mathbb F$ is a collection of vectors
$\tilde{C}=\{d_{R}\in\mathbb F^k; R\in D\}$ such that
\begin{itemize}
    \item[(1)] $d_{S_i}=\alpha_{i}$, where $S_i$ is the $X_i$ source
    region for each $i\in\{1,\cdots,k\}$;
    \item[(2)] $d_R\in\langle d_{R'}; R'\in\text{In}(R)\rangle$
    if $R$ is a non-source region.
\end{itemize}
The code $\tilde{C}=\{d_{R}\in\mathbb F^k; R\in D\}$ is said to be
a \emph{linear solution} of $\text{RG}(D)$ if
$d_{T_j}=\bar{\alpha}$ for each terminal region $T_j$.
\end{defn}

The vector $d_R$ is called the {\em global encoding vector} of
$R$. The region graph $\text{RG}(D)$ is said to be {\em feasible}
if it has a linear solution over some finite field $\mathbb F$.

By Definition \ref{lnc-reg-g}, for any linear solution of
$\text{RG}(D)$, it is always be that $d_{S_i}=\alpha_i$ and
$d_{T_j}=\bar{\alpha}$. So in order to obtain a solution, we only
need to specify the global encoding vector for each coding region.

Let $D$ be a region decomposition of $G$. Clearly, any linear
solution of $\text{RG}(D)$ can be extended to a linear solution of
$G$ by letting $d_e=d_R$ for each $R\in D$ and each $e\in R$. So
if $\text{RG}(D)$ is feasible, then $G$ is solvable. But
conversely, if $G$ is solvable, it is not necessary that
$\text{RG}(D)$ is feasible.

For the region graph $\text{RG}(D)$ in Fig. \ref{g-1} (b), let
$d_{R_1}=\alpha_1$ and $d_{R_2}=d_{R_3}=\alpha_2+\alpha_3$. Then
$\tilde{C}=\{d_R;R\in D\}$ is a linear solution of $\text{RG}(D)$
and we can obtain a linear solution of $G_1$ by letting $d_e=d_R$
for each $R\in D$ and each $e\in R$. However, the region graph
$\text{RG}(D')$ in Fig. \ref{g-1} (c) is not feasible because for
any linear code, by conditions (1), (2) of Definition
\ref{lnc-reg-g}, $d_{T_1}\in\langle\alpha_1,\alpha_3\rangle$. So
it is impossible that $d_{T_1}=\bar{\alpha}$.

In the following, we shall define a special region decomposition
$D^{**}$ of $G$, called the basic region decomposition of $G$,
which is unique and has the property that $G$ is solvable if and
only if the region graph $\text{RG}(D^{**})$ is feasible.

\begin{defn}[Basic Region Decomposition\cite{Wentu11}]\label{BRD}
Let $D^{**}$ be a region decomposition of $G$. $D^{**}$ is called
a basic region decomposition of $G$ if the following conditions
hold:
\begin{itemize}
    \item [(1)] For any $R\in D^{**}$ and any $e\in
    R\setminus\{\text{lead}(R)\}$, $\text{In}(e)\subseteq R$;
    \item [(2)] Each non-source region $R$ in $D^{**}$ has at least two
    parents in $\text{RG}(D^{**})$.
\end{itemize}
\end{defn}
Accordingly, the region graph $\text{RG}(D^{**})$ is called a
basic region graph of $G$.

For example, one can check that for the network $G_1$ in Fig.
\ref{g-1} (a), the region graph $\text{RG}(D)$ in Fig. \ref{g-1}
(b) a the basic region graph of $G_1$.

The basic region decomposition $D^{**}$ can be decided within time
$O(|E|)~($See Algorithm 5 in \cite{Wentu11}. Note that this
Algorithm can be generalized to networks with any $k$ sources
directely.$)$. The following two theorems were also derived in
\cite{Wentu11} $($See Theorem 4.4 and 4.5 of \cite{Wentu11}
respectively.$)$ and we omit their proof.

\begin{thm}\label{b-reg-unq} $G$ has a unique
basic region decomposition, hence has a unique basic region graph.
\end{thm}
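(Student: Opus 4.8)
The plan is to prove uniqueness by exhibiting an intrinsic characterization of the leaders and of the region membership of every link that does not presuppose any particular basic region decomposition, and then observing that this characterization forces two basic region decompositions to coincide. Existence can be read off from the construction that this characterization suggests (matching Algorithm 5 of \cite{Wentu11}); I concentrate on uniqueness, which is the substantive part.

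The central step is the following characterization, valid inside \emph{any} fixed basic region decomposition $D^{**}$. A source link is always the leader of its region, since it has no incoming link and Definition \ref{Reg} forbids a non-leader from lacking an incoming link inside its region. For a non-source link $e$ I claim that $e$ is the leader of its region if and only if the incoming links $\text{In}(e)$ meet at least two distinct regions of $D^{**}$, and that when $e$ is \emph{not} a leader it lies in the unique region containing all of $\text{In}(e)$. Indeed, if $e=\text{lead}(R)$ for a non-source region $R$, then condition (2) of Definition \ref{BRD} gives $R$ at least two parents, and by Definition \ref{reg-g} each parent contains an incoming link of $e$, so $\text{In}(e)$ meets at least two regions. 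Conversely, if $e$ is a non-leader of its region $R$, condition (1) of Definition \ref{BRD} gives $\text{In}(e)\subseteq R$, so all of $\text{In}(e)$ lies in the single region $R$, which also contains $e$.

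Given this characterization, I would finish by induction along a fixed topological order of the links of $G$ (available since $G$ is acyclic). Let $D_1$ and $D_2$ be two basic region decompositions. Using that each region has a unique leader, so that two links lie in a common region exactly when they share the same leader, I would show that every link has the same leader-status and the same region-leader in $D_1$ and in $D_2$. The base case is the source links, which are leaders in both. For the inductive step, the incoming links of a non-source link $e$ all precede $e$ in the order, so by the induction hypothesis they are grouped into regions identically in $D_1$ and $D_2$; hence the Boolean ``$\text{In}(e)$ meets at least two regions'' is the same for both decompositions, and by the characterization $e$ has the same leader-status in both, and if $e$ is a non-leader it joins the same region (the common one containing $\text{In}(e)$). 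Thus $D_1=D_2$, and since the region graph is determined by the decomposition via Definition \ref{reg-g}, the basic region graph is unique as well.

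The step I expect to be the main obstacle is making the characterization airtight, in particular the clean equivalence packaged by condition (2): I must rule out a non-source link with all its incoming links inside a single region ever being a leader (else it would have only one parent, contradicting condition (2)), and simultaneously rule out a link whose incoming links span two regions ever being a non-leader (else condition (1) would force those links into one region). The degenerate cases, a non-source link with zero or exactly one incoming link, must be checked to fall correctly on the non-leader side, and I must verify that the grouping ``same region $\Leftrightarrow$ same leader'' is legitimate so that the induction transfers cleanly between the two decompositions independently of the chosen topological order.
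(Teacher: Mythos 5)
The paper never actually proves this theorem: it is imported from \cite{Wentu11} (Theorem 4.4 there), and the authors explicitly omit the proof. So there is no in-paper argument to compare against; your proposal has to stand on its own, and it essentially does. Your leader/non-leader dichotomy is the right invariant: condition (2) of Definition \ref{BRD} forces the incoming links of the leader of a non-source region to meet at least two regions (the parents are distinct vertices of $\text{RG}(D^{**})$, each containing an incoming link of the leader), while condition (1) forces the incoming links of a non-leader to lie entirely inside its own region; the topological induction then determines every link's region-leader without reference to which basic decomposition one started from, so any two must coincide. Two loose ends you flagged deserve explicit resolution. First, the grouping ``same region $\Leftrightarrow$ same leader'' needs the fact that a region cannot have two distinct leaders; this follows from acyclicity of $G$: if $e_1\neq e_2$ were both leaders of $R$ in the sense of Definition \ref{Reg}, then \emph{every} link of $R$ would have an incoming link inside $R$, producing an infinite backward chain of links in a finite acyclic graph. (The paper's notation $\text{lead}(R)$ silently presupposes this uniqueness.) Second, the degenerate case of a non-source link $e$ with $\text{In}(e)=\emptyset$ threatens existence rather than uniqueness: such a link could never satisfy the non-leader clause of Definition \ref{Reg}, so it would have to be a leader, but then its region has no parent at all, violating condition (2) of Definition \ref{BRD}; hence if such links occur, no basic region decomposition exists and the theorem is implicitly assuming every non-source link has at least one incoming link. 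With these two remarks added, and with existence supplied by the greedy topological construction your characterization suggests (make $e$ a new leader exactly when its already-placed incoming links meet two or more regions, otherwise absorb it into the unique region containing $\text{In}(e)$, which is Algorithm 5 of \cite{Wentu11}), your argument is a complete and self-contained proof of the statement.
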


\begin{thm}\label{solv-eqvlt}
$G$ is solvable if and only if $\text{RG}(D^{**})$ is feasible,
where $D^{**}$ is the basic region decomposition of $G$.
\end{thm}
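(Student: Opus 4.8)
The plan is to prove Theorem~\ref{solv-eqvlt}, which asserts that $G$ is solvable if and only if the basic region graph $\text{RG}(D^{**})$ is feasible. One direction is already free: the excerpt observes that any linear solution of $\text{RG}(D)$ extends to a linear solution of $G$ by setting $d_e = d_R$ for every $e \in R$, so feasibility of $\text{RG}(D^{**})$ immediately gives solvability of $G$. Hence the entire content lies in the converse: \emph{if $G$ is solvable, then $\text{RG}(D^{**})$ is feasible}. I would state this as the only thing requiring work.

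For the converse, suppose $C = \{d_e \in \mathbb{F}^k; e \in E\}$ is a linear solution of $G$ in the sense of Definition~\ref{lnc}. The strategy is to show that this link-level solution is already constant on each region of $D^{**}$, and that the induced assignment $d_R := d_{\text{lead}(R)}$ is a valid linear solution of $\text{RG}(D^{**})$. First I would prove that $d_e = d_{\text{lead}(R)}$ for every $e \in R$, for each $R \in D^{**}$. This is where condition~(1) of Definition~\ref{BRD} is essential: since $\text{In}(e) \subseteq R$ for every non-leader $e \in R$, the coding constraint $d_e \in \langle d_{e'}; e' \in \text{In}(e)\rangle$ forces $d_e$ to lie in the span of coding vectors already attached to links of $R$. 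Ordering the links of $R$ so that each non-leader has an incoming link appearing earlier (possible by Definition~\ref{Reg}), an induction along this order should show that every $d_e$ with $e \in R$ lies in the span of $d_{\text{lead}(R)}$ together with the $d_{e'}$ for $e'$ in the \emph{parent} regions of $R$; the point is to argue that in fact $d_e$ equals $d_{\text{lead}(R)}$. This step is the main obstacle: one must rule out the solution assigning genuinely different vectors to links inside a single region, and this is exactly the structural guarantee the basic decomposition is designed to provide.

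Granting the constancy claim, I would then set $d_R := d_{\text{lead}(R)}$ and verify the two conditions of Definition~\ref{lnc-reg-g}. Condition~(1) is immediate: if $S_i$ is the $X_i$ source region then its leader is the $X_i$ source link, so $d_{S_i} = \alpha_i$. For condition~(2), recall that $(R', R)$ is an edge of $\text{RG}(D^{**})$ exactly when $R'$ contains an incoming link of $\text{lead}(R)$. The original constraint $d_{\text{lead}(R)} \in \langle d_{e'}; e' \in \text{In}(\text{lead}(R))\rangle$ together with the constancy $d_{e'} = d_{R'}$ for $e' \in R'$ then yields $d_R \in \langle d_{R'}; R' \in \text{In}(R)\rangle$, which is exactly what condition~(2) requires. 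Finally, for each terminal region $T_j$ the solution $C$ satisfies $d_e = \bar{\alpha}$ on its terminal link $e$, and by constancy $d_{T_j} = \bar{\alpha}$, so $\{d_R\}$ is a linear solution of $\text{RG}(D^{**})$.

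I would remark that condition~(2) of Definition~\ref{BRD} (each non-source region has at least two parents) is not strictly needed for this equivalence; it is a normalization that makes $D^{**}$ minimal and unique, supporting Theorem~\ref{b-reg-unq}, whereas condition~(1) is the property doing the real work above. The cleanest writeup therefore isolates the constancy lemma, proves it by the region-ordering induction driven by Definition~\ref{BRD}(1), and then reads off both conditions of Definition~\ref{lnc-reg-g} almost mechanically. The delicate point to get right in the induction is that a non-leader link $e \in R$ may a priori draw on incoming links outside $R$; condition~(1) forbids precisely this, so care is needed to invoke it at every inductive step and to confirm no incoming link escapes $R$.
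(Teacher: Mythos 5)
The paper itself gives no proof of this theorem (it is quoted from \cite{Wentu11}, Theorem 4.5), so your proposal can only be judged on its own merits. The overall architecture is right: the forward direction is exactly the paper's observation that a solution of $\text{RG}(D^{**})$ extends to $G$ by setting $d_e=d_R$ on each region, and the converse must convert a link-level solution into a region-level one. But the pivotal step, your constancy claim, is false as stated --- and you flag it yourself as ``the main obstacle'' without closing it. Exact equality $d_e=d_{\text{lead}(R)}$ for all $e\in R$ cannot be forced: if $e\in R$ has $\text{In}(e)=\{\text{lead}(R)\}$, a linear solution of $G$ may perfectly well set $d_e=2\,d_{\text{lead}(R)}$ over a field with more than two elements, because every coding constraint in Definition \ref{lnc} is a span condition and spans are invariant under nonzero scaling, so the rescaled assignment propagates to a valid solution of $G$. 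Hence no induction can prove equality; Definition \ref{BRD}(1) only buys proportionality.

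What the induction does give --- and what suffices --- is the weaker claim $d_e\in\langle d_{\text{lead}(R)}\rangle$ for every $e\in R$: each non-leader $e\in R$ has $\emptyset\neq\text{In}(e)\subseteq R$, so inducting along a topological order inside $R$ keeps every vector in the one-dimensional span of the leader's vector. (Note also that every incoming link of $\text{lead}(R)$ lies outside $R$, hence in a parent region: by Definition \ref{Reg} the leader reaches every link of $R$ within $R$, so an incoming link of the leader inside $R$ would create a cycle in $G$. With that, your verification of condition (2) of Definition \ref{lnc-reg-g} goes through once ``equals'' is replaced by ``is proportional to.'') The terminal condition then needs one explicit repair: the induced vector $d_{\text{lead}(T_j)}$ is only a nonzero scalar multiple of $\bar{\alpha}$ (nonzero because the terminal link $e\in T_j$ carries $d_e=\bar{\alpha}\neq 0$ and lies in $\langle d_{\text{lead}(T_j)}\rangle$). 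So define $d_R:=d_{\text{lead}(R)}$ for coding regions, $d_{S_i}:=\alpha_i$, and $d_{T_j}:=\bar{\alpha}$; conditions (1) and (2) of Definition \ref{lnc-reg-g} are span conditions and therefore survive this rescaling, yielding a linear solution of $\text{RG}(D^{**})$. With this repair your argument is complete, and your closing remark is correct: only condition (1) of Definition \ref{BRD} does the work here, condition (2) serving minimality and uniqueness (Theorem \ref{b-reg-unq}) rather than this equivalence.
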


\subsection{Super Region}
In this subsection, we always assume that $D$ is a region
decomposition of $G$ such that each non-source region has at least
two parents in $\text{RG}(D)$.

\begin{defn}[Super Region \cite{Wentu12}]\label{g-reg}
Let $D$ be a region decomposition of $G$ and
$\emptyset\neq\Theta\subseteq D$. The super region generated by
$\Theta$, denoted by $\text{reg}(\Theta)$, is defined recursively
as follows:
\begin{itemize}
    \item[(1)] If $R\in\Theta$, then $R\in\text{reg}(\Theta)$;
    \item[(2)] If $R\in D$ and $\text{In}(R)
    \subseteq\text{reg}(\Theta)$, then $R\in\text{reg}(\Theta)$.
\end{itemize}
\end{defn}
We define
$\text{reg}^\circ(\Theta)=\text{reg}(\Theta)\setminus\Theta$.
Moreover, if $\Theta=\{R_1,\cdots,R_k\}$, then we denote
$\text{reg}(\Theta)=\text{reg}(R_1,\cdots,R_k)$.

Since $\text{RG}(D)$ is acyclic, then $\text{reg}(\Theta)$ is well
defined.

\renewcommand\figurename{Fig}
\begin{figure}[htbp]
\begin{center}
\includegraphics[height=2.9cm]{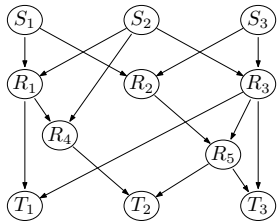}
\end{center}
\vspace{-0.2cm}\caption{An example of region graph.}\label{g-2}
\end{figure}

Consider the region graph in Fig. \ref{g-2}. We have
$\text{reg}(S_1,S_2)=\{S_1,S_2,R_1,R_4\}$. and
$\text{reg}(S_2,R_3)=\{S_2,S_3,R_3\}$.

\begin{rem}\label{rem-g-reg-code}
From Definition \ref{lnc-reg-g} and \ref{g-reg}, it is easy to see
that if $\tilde{C}=\{d_{R}\in\mathbb F^k; R\in D\}$ is a linear
code of $\text{RG}(D)$ and $\emptyset\neq\Theta\subseteq D$, then
$d_R\in\langle d_{R'};R'\in\Theta\rangle$ for all
$R\in\text{reg}(\Theta)$.
\end{rem}

\begin{lem}\label{g-reg-cap}
Suppose $\Theta_1$ and $\Theta_2$ are two subsets of $D$. Then
$\text{reg}(\Theta_1)\cap\text{reg}(\Theta_2)=\text{reg}(\Theta)$,
where
\begin{align*}
\Theta=(\text{reg}(\Theta_1)\cap\Theta_2)
\cup(\text{reg}(\Theta_2)\cap\Theta_1).
\end{align*}
\end{lem}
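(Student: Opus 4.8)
The plan is to regard $\text{reg}(\cdot)$ as a closure operator on the subsets of $D$ and to prove the two inclusions separately. I would first unroll Definition~\ref{g-reg} into a staged form: setting $\Theta^{(0)}=\Theta$ and $\Theta^{(m+1)}=\Theta^{(m)}\cup\{R\in D:\text{In}(R)\subseteq\Theta^{(m)}\}$, one has $\text{reg}(\Theta)=\bigcup_{m\ge 0}\Theta^{(m)}$, the union stabilizing because $D$ is finite. From this I extract two facts used throughout. \emph{Membership characterization:} inspecting the least stage at which $R$ appears, every $R\in\text{reg}(\Theta)$ satisfies $R\in\Theta$ or $\text{In}(R)\subseteq\text{reg}(\Theta)$. \emph{Minimality:} calling $S\subseteq D$ closed when $\text{In}(R)\subseteq S$ forces $R\in S$, an easy induction on $m$ shows every closed set containing $\Theta$ contains each $\Theta^{(m)}$, so $\text{reg}(\Theta)$ is the smallest closed set containing $\Theta$.

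Write $A=\text{reg}(\Theta_1)$, $B=\text{reg}(\Theta_2)$, and $\Theta=(A\cap\Theta_2)\cup(B\cap\Theta_1)$. For the inclusion $\text{reg}(\Theta)\subseteq A\cap B$ I would note first that $\Theta\subseteq A\cap B$: indeed $A\cap\Theta_2\subseteq A$ and $A\cap\Theta_2\subseteq\Theta_2\subseteq B$, and symmetrically $B\cap\Theta_1\subseteq A\cap B$. Next, $A\cap B$ is closed, since $\text{In}(R)\subseteq A\cap B$ gives $\text{In}(R)\subseteq A$ and $\text{In}(R)\subseteq B$, whence $R\in A$ and $R\in B$ by closedness of the two super regions. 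By minimality, $\text{reg}(\Theta)\subseteq A\cap B$.

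The substance is the reverse inclusion $A\cap B\subseteq\text{reg}(\Theta)$, which I would establish by induction along a topological order of $\text{RG}(D)$ (legitimate since $\text{RG}(D)$ is acyclic, so every parent precedes its child). Let $R\in A\cap B$ and assume every element of $A\cap B$ strictly earlier than $R$ already lies in $\text{reg}(\Theta)$. Applying the membership characterization to both $A$ and $B$: if $R\in\Theta_1$ then $R\in\Theta_1\cap B\subseteq\Theta\subseteq\text{reg}(\Theta)$; if $R\in\Theta_2$ then $R\in\Theta_2\cap A\subseteq\Theta\subseteq\text{reg}(\Theta)$; and in the remaining case $R\notin\Theta_1\cup\Theta_2$ forces $\text{In}(R)\subseteq A$ and $\text{In}(R)\subseteq B$, i.e. $\text{In}(R)\subseteq A\cap B$. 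Since every parent of $R$ is strictly earlier, the induction hypothesis gives $\text{In}(R)\subseteq\text{reg}(\Theta)$, and closedness of $\text{reg}(\Theta)$ then yields $R\in\text{reg}(\Theta)$. Combining the two inclusions proves the claimed equality.

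I expect the reverse inclusion to be the only real obstacle: an element of $A\cap B$ may have been generated in $A$ and in $B$ for different reasons, so a purely set-theoretic manipulation will not suffice and one must descend to the parents — which is exactly what forces the induction on the acyclic structure of $\text{RG}(D)$. The points to check carefully are that the case split on how $R$ enters $A$ versus $B$ is exhaustive (source regions, having no parents, sit at the base of the order and are caught by the $\Theta$-cases), and that the passage to parents is always legitimate.
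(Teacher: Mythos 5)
Your proof is correct and takes essentially the same approach as the paper: the easy inclusion $\text{reg}(\Theta)\subseteq\text{reg}(\Theta_1)\cap\text{reg}(\Theta_2)$ via monotonicity/minimality of the closure, and the hard inclusion via the observation that any element of $\text{reg}(\Theta_1)\cap\text{reg}(\Theta_2)$ outside $\Theta_1\cup\Theta_2$ has all of its parents back in $\text{reg}(\Theta_1)\cap\text{reg}(\Theta_2)$. The only difference is presentational: you run a well-founded induction along a topological order of $\text{RG}(D)$, whereas the paper runs the contrapositive of the same argument as an infinite descent (a counterexample $R_0$ yields an infinite ancestor chain $R_0,R_1,R_2,\cdots$, contradicting finiteness of the acyclic graph).
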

\begin{proof}
Clearly, $\Theta\subseteq\text{reg}(\Theta_1)$ and
$\Theta\subseteq\text{reg}(\Theta_2)$. Then by Definition
\ref{g-reg}, we have
$\text{reg}(\Theta)\subseteq\text{reg}(\Theta_1)
\cap\text{reg}(\Theta_2)$.

Now, suppose
$\text{reg}(\Theta_1)\cap\text{reg}(\Theta_2)\neq\text{reg}(\Theta)$.
Then there is an $R_0$ such that
$$R_0\in\text{reg}(\Theta_1)\cap\text{reg}(\Theta_2)
\backslash\text{reg}(\Theta).$$ By assumption of $\Theta$, we have
$R_0\notin\Theta_1\cup\Theta_2$. $($Otherwise, without loss of
generality, assume $R_0\in\Theta_1$. Then
$R_0\in(\text{reg}(\Theta_2)\cap\Theta_1)
\subseteq\Theta\subseteq\text{reg}(\Theta)$, which contradict to
the assumption of $R_0\notin\text{reg}(\Theta).)$ So
$R_0\in\text{reg}^\circ(\Theta_1)\cap\text{reg}^\circ(\Theta_2)$.
Then by Definition \ref{g-reg}, we have
$$\text{In}(R_0)\subseteq\text{reg}(\Theta_1)
\cap\text{reg}(\Theta_2).$$ Since $R_0\notin\text{reg}(\Theta)$,
then by Definition \ref{g-reg}, there exists an
$R_1\in\text{In}(R_0)$ such that $R_1\notin\text{reg}(\Theta)$.
Then
$$R_1\in\text{reg}(\Theta_1)\cap\text{reg}
(\Theta_2)\setminus\text{reg}(\Theta).$$ Similarly, $R_1$ has a
parent $R_2$ such that
$$R_2\in\text{reg}(\Theta_1)\cap\text{reg}
(\Theta_2)\setminus\text{reg}(\Theta).$$ By repeating this
process, we can find a series of infinite regions $R_0$, $R_1$,
$R_2$, $\cdots$ such that $R_i$ is a parent of $R_{i-1}$ and
$$R_i\in\text{reg}(\Theta_1)\cap\text{reg}
(\Theta_2)\setminus\text{reg}(\Theta),~ i=1, 2, \cdots.$$ This
contradicts to the fact that $\text{RG}(D)$ is a finite graph. So
$\text{reg}(\Theta_1)\cap\text{reg}(\Theta_2)=\text{reg}(\Theta)$.
\end{proof}

\section{A Sufficient and Necessary Condition for a Subclass of
$3$-source Sum-networks}

Throughout this section, we always assume that $G$ is a $3$s$/n$t
sum-network and $D^{**}$ is the basic region decomposition of $G$.
By Theorem \ref{solv-eqvlt}, $G$ is solvable if and only if
$\text{RG}(D^{**})$ is feasible. So we only need to consider
coding on $\text{RG}(D^{**})$.

Since $G$ is a $3$s$/n$t sum-network, then $\text{RG}(D^{**})$ has
exactly three source regions and at most $n$ terminal regions.
Without loss of generality, we assume $\text{RG}(D^{**})$ has
exactly $n$ terminal regions. Let $S_i ~(i\in\{1,2,3\})$ denote
the $X_i$ source region and $T_j,j=1,\cdots,n,$ denote the $n$
terminal regions. For any $i\in\{1,2,3\}$, by Lemma
\ref{g-reg-cap}, we have
\begin{align}
\text{reg}(S_{i},S_{j_1})\cap\text{reg}(S_{i},S_{j_2})=\{S_i\}
\label{eq-nt-1}
\end{align}
where $\{j_1,j_2\}=\{1,2,3\}\backslash\{i\}$. Thus
\begin{align}
\text{reg}^\circ(S_{i},S_{j_1})\cap\text{reg}^\circ(S_{i},S_{j_2})
=\emptyset.\label{eq-nt-2}
\end{align}
i.e., $\text{reg}^\circ(S_{1},S_{2})$,
$\text{reg}^\circ(S_{1},S_{3})$ and
$\text{reg}^\circ(S_{2},S_{3})$ are mutually disjoint. Denote
$\{1,\cdots,n\}=[n]$ for any positive integer $n$.
\begin{defn}\label{lmd-omd}
We define some subsets of $D^{**}$ as follows:
\begin{itemize}
    \item[\textbf{(1)}] $\Pi\triangleq\text{reg}(S_1,S_2)
    \cup\text{reg}(S_1,S_3)\cup\text{reg}(S_2,S_3)$;
    \item[\textbf{(2)}] For any $I\subseteq[n]$,
    $\Omega_I$ is the set of all $R\in D^{**}\backslash\Pi$ such
    that $R\rightarrow T_{j},\forall j\in I$, and
    $R\nrightarrow T_{j'},\forall j'\in[n]\setminus I$;
    \item[\textbf{(3)}] $\Lambda_I$ is the set of all $Q\in\Pi$
    such that $Q$ has a child $R\in\Omega_I$.
\end{itemize}
\end{defn}
We also denote $\Omega_I=\Omega_{i_1,\cdots,i_p}$ and
$\Lambda_I=\Lambda_{i_1, \cdots, i_p}$ if the subset
$I=\{i_1,\cdots,i_p\}$.

From the above definition, the following remark is obvious.
\begin{rem}\label{omg-intc}
If $I,I'\subseteq[n]$ and $I\neq I'$, then
$\Omega_{I}\cap\Omega_{I'}=\emptyset$.
\end{rem}

Since $\text{RG}(D^{**})$ is acyclic, regions in $D^{**}$ can be
sequentially indexed as $D^{**}=\{R_1,R_2,R_3,\cdots,R_N\}$ such
that $R_i=S_i, i=1,2,3$, $R_{N-n+j}=T_j, j=1,2,\cdots,n,$ and
$\ell <\ell'$ if $R_{\ell}$ is a parent of $R_{\ell'}$. For all
$I\subseteq[n]$, we can determine $\Omega_{I}$ by a the following
simple labelling algorithm.\vspace{-5mm}
\begin{center}
\setlength{\unitlength}{1mm}
\begin{picture}(90,-42)(0,35)
\put(0,0){\line(1,0){87}} \put(0,35){\line(1,0){87}}
\put(0,0){\line(0,1){35}} \put(87,0){\line(0,1){35}}
\end{picture}
\end{center}
\vspace{-0.05in} \noindent { \small \textbf{Algorithm 1}:
Labelling Algorithm $(\text{RG}(D^{**}))$:

\vspace{0.075in} $j\leftarrow$ from $1$ to $n$

\vspace{0.025in} ~ ~ Label $R_{N-n+j}$ with $j$;

\vspace{0.025in} $\ell\leftarrow$ from $N$ to $1$;

\vspace{0.025in} ~ ~ \textbf{if} $R_\ell$ has a child $R_{\ell'}$
such that $R_{\ell'}$ is labelled with $j$ for

\vspace{0.025in} ~ ~ some $j\in[n]$ \textbf{then}

\vspace{0.025in} ~ ~ ~ ~ Label $R_{\ell}$ with $j$;}
\vspace{0.55cm}

\vspace{0.05cm}Note that for each $R\in D^{**}$ and $j\in[n]$, if
$R$ has a child $R'$ such that $R'\rightarrow T_j$, then
$R\rightarrow T_j$. So $R\rightarrow T_j$ if and only if $R$ is
labelled with $j$ by Algorithm 1. Let $I_R=\{j\in[n]; R \text{~is
labelled with~} j\}$. Then for each $I\subseteq[n]$,
$R\in\Omega_I$ if and only if $I_R=I$. Thus, by Algorithm 1, we
can easily determine $\Omega_{I}$ for all $I\subseteq[n]$.
Clearly, the run time of Algorithm 1 is $O(|D^{**}|)$.

Consider the region graph in Fig. \ref{g-2}. We have
$\Omega_i=\{T_i\}$ for $i\in\{1,2,3\}$, $\Omega_{2,3}=\{R_5\}$ and
$\Omega_{1,2}=\Omega_{1,3}=\Omega_{1,2,3}=\emptyset$. Thus, we
have $\Lambda_1=\{R_1,R_3\}$, $\Lambda_2=\{R_4\}$,
$\Lambda_3=\{R_3\}$, $\Lambda_{2,3}=\{R_2,R_3\}$ and
$\Lambda_{1,2}=\Lambda_{1,3}=\Lambda_{1,2,3}=\emptyset$.

If $\tilde{C}=\{d_{R}\in\mathbb F^k; R\in D\}$ is a linear
solution of $\text{RG}(D^{**})$, $\{i,j\}\subseteq\{1,2,3\}$ and
$R\in\text{reg}(S_i,S_j)$, then by Remark \ref{rem-g-reg-code},
$\bar{\alpha}\neq d_R\in\langle\alpha_i,\alpha_j\rangle$. So
$T_\ell\notin\text{reg}(S_i,S_j)$ for all terminal region
$T_\ell$, which implies that $T_\ell\in
D^{**}\backslash\Pi,\forall \ell\in[n]$. For this reason, in this
section, we assume:

\textbf{\emph{Assumption} 1}: $T_j\notin \Pi$ for all $j\in[n]$.

\begin{defn}\label{t-sep} The region graph $\text{RG}(D^{**})$ is
said to be terminal-separable if $\Omega_{I}=\emptyset$ for all
$I\subseteq[n]$ such that $|I|>1$.
\end{defn}

For example, the region graph in Fig. \ref{g-3} is
terminal-separable. However, the region graph in Fig. \ref{g-2} is
not because $\Omega_{2,3}=\{R_5\}\neq\emptyset$.

We shall give a necessary and sufficient condition of feasibility
for terminal-separable region graph, by which it is easy to check
whether a terminal-separable region graph is feasible.

\begin{rem}\label{non-ts}
Terminal-separable region graphs is of interesting because, if a
region graph $\text{RG}(D)$ is not terminal-separable, then we can
view it as a region graph with fewer terminal regions. For
example, for the region graph in Fig. \ref{g-2}, we can view $T_1$
and $R_5$ as two terminal regions and construct a linear solution
of $\text{RG}(D)$. Then the sum  of sources can be transmitted
from $R_5$ to $T_2$ and $T_3$. In fact, let
$d_{R_1}=d_{R_2}=\alpha_1, d_{R_3}=\alpha_2+\alpha_3$ and
$d_{R_5}=\alpha_1+\alpha_2+\alpha_3$. Then $\{d_R;R\in D\}$ is a
linear solution of $\text{RG}(D)$.
\end{rem}

\begin{lem}\label{in-reg-lmd}
If $\text{RG}(D^{**})$ is terminal-separable, then for all
$j\in[n]$ and $\{i_1,i_2\}\subseteq\{1,2,3\}$, we have
$T_j\in\Omega_j\subseteq\text{reg}^\circ(\Lambda_j)$ and
$\Lambda_j\nsubseteq\text{reg}(S_{i_1},S_{i_2})$. In particular,
we have $|\Lambda_j|\geq 2$.
\end{lem}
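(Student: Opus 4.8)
The plan is to establish the three assertions in sequence, since each builds on the previous. First I would prove $T_j \in \Omega_j$. By Assumption~1, $T_j \notin \Pi$, so $T_j \in D^{**}\setminus\Pi$, which makes $T_j$ a candidate for membership in some $\Omega_I$. Trivially $T_j \rightarrow T_j$ (we regard $R\rightarrow R$ for all $R$), so $j \in I_{T_j}$. The key point is to rule out $j' \in I_{T_j}$ for any $j' \neq j$: if $T_j \rightarrow T_{j'}$ for some $j'\neq j$, then $T_j$ would lie in $\Omega_I$ for some $I \ni j,j'$ with $|I|\geq 2$, contradicting terminal-separability. Hence $I_{T_j} = \{j\}$, i.e.\ $T_j \in \Omega_j$.

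Next I would show $\Omega_j \subseteq \text{reg}^\circ(\Lambda_j)$. By Definition~\ref{lmd-omd}(3), $\Lambda_j$ consists of those $Q\in\Pi$ having a child in $\Omega_j$, so $\Lambda_j \subseteq \Pi$ while $\Omega_j \subseteq D^{**}\setminus\Pi$; this disjointness already gives $\Omega_j \cap \Lambda_j = \emptyset$, so it suffices to prove $\Omega_j \subseteq \text{reg}(\Lambda_j)$. I expect to argue by induction along the acyclic order of $\text{RG}(D^{**})$: take $R \in \Omega_j$ with smallest index and examine $\text{In}(R)$. Every parent $R'$ of $R$ either lies in $\Pi$ or not. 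If $R' \in \Pi$, then $R'$ is a parent of $R\in\Omega_j$, hence $R' \in \Lambda_j$ by definition. If $R' \notin \Pi$, I would show $R' \in \Omega_j$ as well (the child relation forces $R' \rightarrow T_j$; and terminal-separability together with the reachability structure prevents $R'$ from reaching any other terminal that $R$ cannot), so by the induction hypothesis $R' \in \text{reg}(\Lambda_j)$. Thus every parent of $R$ lies in $\text{reg}(\Lambda_j)$, so $\text{In}(R)\subseteq\text{reg}(\Lambda_j)$ and hence $R\in\text{reg}(\Lambda_j)$ by Definition~\ref{g-reg}(2). The main obstacle here is the case $R'\notin\Pi$: I must carefully verify $I_{R'}=\{j\}$ rather than merely $j\in I_{R'}$, which is exactly where terminal-separability is indispensable.

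For $\Lambda_j \nsubseteq \text{reg}(S_{i_1},S_{i_2})$, I would argue by contradiction. Suppose $\Lambda_j \subseteq \text{reg}(S_{i_1},S_{i_2})$ for some pair $\{i_1,i_2\}\subseteq\{1,2,3\}$. Combining this with $\Omega_j \subseteq \text{reg}^\circ(\Lambda_j)\subseteq\text{reg}(\Lambda_j)\subseteq\text{reg}(\text{reg}(S_{i_1},S_{i_2}))=\text{reg}(S_{i_1},S_{i_2})$ and the fact that $T_j\in\Omega_j$, we would get $T_j \in \text{reg}(S_{i_1},S_{i_2})$. But by Remark~\ref{rem-g-reg-code}, in any linear code $d_{T_j}\in\langle\alpha_{i_1},\alpha_{i_2}\rangle$, which can never equal $\bar{\alpha}$; this contradicts $T_j\notin\Pi$ (Assumption~1 already forbids $T_j\in\text{reg}(S_{i_1},S_{i_2})\subseteq\Pi$). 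So no such pair exists.

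Finally, $|\Lambda_j|\geq 2$ follows as a corollary: $\Lambda_j$ is nonempty since $\Omega_j\supseteq\{T_j\}$ is nonempty and every region in $\Omega_j$ that is not a source region has parents, at least tracing back to a member of $\Pi$ in $\Lambda_j$. If $|\Lambda_j|=1$, say $\Lambda_j=\{Q\}$ with $Q\in\text{reg}(S_{i_1},S_{i_2})$ for the pair determined by which source regions $Q$ descends from, then $\Lambda_j\subseteq\text{reg}(S_{i_1},S_{i_2})$, contradicting the assertion just proved. Hence $|\Lambda_j|\geq 2$.
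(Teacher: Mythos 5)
Your proposal is correct and follows essentially the same route as the paper: the same use of terminal-separability to force $I_R=\{j\}$ for non-$\Pi$ ancestors of $T_j$, the same disjointness argument $\Omega_j\cap\Lambda_j=\emptyset$, and the same contradictions via Assumption~1 for the last two claims; your topological induction for $\Omega_j\subseteq\text{reg}(\Lambda_j)$ is just the paper's infinite-descent argument phrased positively. No gaps worth flagging.
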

\begin{proof}
Since $T_j\rightarrow T_j$ and $\text{RG}(D^{**})$ is
terminal-separable, then $T_j\nrightarrow T_{j'}, \forall
j'\in[n]\backslash\{j\}$. So $T_j\in\Omega_j$.

We now prove $\Omega_j\subseteq\text{reg}(\Lambda_j)$ by
contradiction. For this purpose, suppose there is an
$R\in\Omega_j$ such that $R\notin\text{reg}(\Lambda_j)$. Then by
Definition \ref{g-reg}, $R$ has a parent, say $P_1$, such that
$P_1\notin\text{reg}(\Lambda_j)$. Clearly, $P_1\notin\Pi$.
$($Otherwise, by the definition of $\Lambda_{j}$,
$P_1\in\Lambda_{j}\subseteq\text{reg}(\Lambda_j)$, which
contradicts to the assumption that
$P_1\notin\text{reg}(\Lambda_j)$.$)$ Since $\text{RG}(D^{**})$ is
terminal-separable and $P_1\rightarrow R\rightarrow T_j$, then
$P_1\nrightarrow T_{j'}, \forall j'\neq j$. So $P_1\in\Omega_j$.
Similarly, $P_1$ has a parent $P_2$ such that
$P_2\notin\text{reg}(\Lambda_j)$ and $P_2\in\Omega_{j}$. By
repeating this process, we can obtain a series of infinite
regions, $P_1,P_2,\cdots$ such that
$P_i\notin\text{reg}(\Lambda_j)$ and $P_i\in\Omega_{j}$, which
contradicts to the fact that $\text{RG}(D^{**})$ is a finite
graph. So $\Omega_j\subseteq\text{reg}(\Lambda_j)$.

Note that $\Omega_j\subseteq D^{**}\setminus\Pi$ and
$\Lambda_j\subseteq\Pi$. So $\Omega_j\cap\Lambda_j=\emptyset$.
Thus, we have
$T_j\in\Omega_j\subseteq\text{reg}^\circ(\Lambda_j)$.

Moreover, if $\Lambda_j\subseteq\text{reg}(S_{i_1},S_{i_2})$, then
by Definition \ref{g-reg}, we have
$T_j\in\Omega_j\subseteq\text{reg}^\circ(\Lambda_j)\subseteq
\text{reg}(S_{i_1},S_{i_2})$, which contradicts to Assumption 1.
So $\Lambda_j\nsubseteq\text{reg}(S_{i_1},S_{i_2})$.

Finally, if $|\Lambda_j|=1$, say $\Lambda_j=\{Q\}$, then by the
definition of $\Pi$ and $\Lambda_j$, we have
$Q\in\text{reg}(S_{i_1},S_{i_2})$ for some
$\{i_1,i_2\}\subseteq\{1,2,3\}$. Thus,
$\Lambda_j=\{Q\}\subseteq\text{reg}(S_{i_1},S_{i_2})$, which
contradicts to the proved result that
$\Lambda_j\nsubseteq\text{reg}(S_{i_1},S_{i_2})$. So
$|\Lambda_j|\geq 2$.
\end{proof}

\begin{lem}\label{lmd-solv}
Suppose $\text{RG}(D^{**})$ is terminal-separable. Then
$\text{RG}(D^{**})$ is feasible if and only if there is a
collection of vectors $\tilde{C}_\Pi=\{d_R;
R\in\Pi\}\subseteq\mathbb F^3$ satisfying the following three
conditions:
\begin{itemize}
  \item [(1)] $d_{S_i}=\alpha_i, i=1,2,3$;
  \item [(2)] $d_R\in\langle d_{R'}; R'\in\text{In}(R)\rangle,
  \forall R\in\Pi\setminus\{S_1,S_2,S_3\}$;
  \item [(3)] $\bar{\alpha}\in\langle d_R; R\in\Lambda_j\rangle,
  \forall j\in[n]$.
\end{itemize}
\end{lem}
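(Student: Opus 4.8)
The plan is to treat the two implications separately: the forward direction (feasibility $\Rightarrow$ existence of $\tilde C_\Pi$) is a restriction argument, while the converse (existence of $\tilde C_\Pi$ $\Rightarrow$ feasibility) requires explicitly extending the code over $D^{**}\setminus\Pi$, and this is where the real work lies.

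For the forward direction I would take a linear solution $\tilde C=\{d_R;R\in D^{**}\}$ and let $\tilde C_\Pi$ be its restriction to $\Pi$. Condition (1) is immediate. For (2), every non-source $R\in\Pi$ lies in some $\text{reg}(S_i,S_j)$, and since $R$ is not a source region, Definition \ref{g-reg} forces $\text{In}(R)\subseteq\text{reg}(S_i,S_j)\subseteq\Pi$; hence the relation $d_R\in\langle d_{R'};R'\in\text{In}(R)\rangle$ coming from $\tilde C$ involves only vectors of $\tilde C_\Pi$. For (3), Lemma \ref{in-reg-lmd} gives $T_j\in\text{reg}^\circ(\Lambda_j)\subseteq\text{reg}(\Lambda_j)$, so Remark \ref{rem-g-reg-code} yields $\bar\alpha=d_{T_j}\in\langle d_Q;Q\in\Lambda_j\rangle$.

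For the converse I would start from $\tilde C_\Pi$ and extend it over $D^{**}\setminus\Pi$. By terminal-separability and Remark \ref{omg-intc}, $D^{**}\setminus\Pi$ is the disjoint union of $\Omega_\emptyset$ and the sets $\Omega_j$, $j\in[n]$, so the pieces can be handled independently; on $\Omega_\emptyset$ I set $d_R=0$, which satisfies (2) and touches no terminal. The structural fact I would record first is that a non-source region of $\Pi$ has all its parents inside $\Pi$ (by Definition \ref{g-reg}), so no region of $\Pi$ has a parent outside $\Pi$; consequently every child of an $\Omega_j$-region lies outside $\Pi$, and if it reaches a terminal then terminal-separability forces it into $\Omega_j$. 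Hence each $Q\in\Lambda_j$ has a child in $\Omega_j$ from which a directed path reaches $T_j$ without leaving $\Omega_j$.

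The heart of the matter is to realize $\bar\alpha$ at $T_j$ using only the coding rule inside $\Omega_j$. Condition (3) gives coefficients with $\bar\alpha=\sum_{Q\in\Lambda_j}\lambda_Q d_Q$, but the naive assignment $d_R=\bar\alpha$ on all of $\Omega_j$ fails at the \emph{entry} regions, whose $\Lambda_j$-parents need not individually span $\bar\alpha$; this double-counting is the main obstacle. I would resolve it by choosing an in-tree $\mathcal T_j$ rooted at $T_j$ that spans $\Lambda_j\cup\Omega_j$, sending each $Q\in\Lambda_j$ first to a child in $\Omega_j$ and then along $\Omega_j$ to $T_j$ (such a tree exists by the structural fact above). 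Writing $\text{ch}(R)$ for the tree-successor of $R$, I process $\Omega_j$ in topological order and set $d_R=\sum_{R':\,\text{ch}(R')=R}c_{R'}d_{R'}$, with $c_{R'}=\lambda_{R'}$ when $R'\in\Lambda_j$ and $c_{R'}=1$ when $R'\in\Omega_j$. Each such $R'$ is a parent of $R$, so (2) holds, and an easy induction along $\mathcal T_j$ gives $d_R=\sum_{Q\preceq R}\lambda_Q d_Q$, the sum ranging over the sources routed through $R$; at the root this yields $d_{T_j}=\sum_{Q\in\Lambda_j}\lambda_Q d_Q=\bar\alpha$. Because the $\Omega_I$ are pairwise disjoint these assignments do not interfere, so combining them with $\tilde C_\Pi$ produces a linear solution of $\text{RG}(D^{**})$, proving feasibility.
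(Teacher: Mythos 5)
Your proof is correct and takes essentially the same route as the paper's: the forward direction is the same restriction argument via Lemma \ref{in-reg-lmd} and Remark \ref{rem-g-reg-code}, and the converse is the same extension of $\tilde{C}_\Pi$ along paths inside the pairwise-disjoint sets $\Omega_j$ leading to $T_j$. The only difference is one of detail: your in-tree/partial-sum construction spells out the code on the paths that the paper merely asserts can be built from $\bar{\alpha}\in\langle d_R; R\in\Lambda_j\rangle$, correctly identifying and handling the merging-paths issue the paper glosses over.
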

\begin{proof}
Suppose $\text{RG}(D^{**})$ is feasible and $\tilde{C}=\{d_R; R\in
D^{**}\}$ is a linear solution of $\text{RG}(D^{**})$. By Lemma
\ref{in-reg-lmd},
$T_j\in\Omega_j\subseteq\text{reg}^\circ(\Lambda_j)$ and
$\Lambda_j\nsubseteq\text{reg}(S_{i_1},S_{i_2})$, $\forall j\in
[n]$ and $\{i_1,i_2\}\subseteq\{1,2,3\}$. By Remark
\ref{rem-g-reg-code}, $\bar{\alpha}=d_{T_j}\in\langle
d_R;R\in\Lambda_j\rangle$. Let $\tilde{C}_\Pi=\{d_R;R\in\Pi\}$.
Then $\tilde{C}_\Pi$ satisfies conditions (1)-(3).

Conversely, suppose there is a collection $\tilde{C}_\Pi=\{d_R;
R\in\Pi\}\subseteq\mathbb F^3$ satisfying conditions (1)-(3). We
can construct a linear solution of $\text{RG}(D^{**})$ as follows:

Since $\text{RG}(D^{**})$ is terminal-separable, for each
$j\in[n]$ and $Q\in\Lambda_j$, by the Definition of $\Lambda_j$
and $\Omega_j$, we can find a path
$\{R_1,\cdots,R_\ell\}\subseteq\Omega_j$ such that $R_\ell=T_j$
and $Q$ is a parent of $R_1$. Let $\Gamma_j$ be the union of all
such paths. Then $\Gamma_j\subseteq\Omega_j$. Since
$\bar{\alpha}\in\langle d_R; R\in\Lambda_j\rangle$, then we can
construct a code $\tilde{C}_{\Gamma_j}=\{d_R; R\in\Gamma_j\}$ such
that $d_{T_j}=\bar{\alpha}$ and $d_R\in\langle d_{R'};
R'\in\text{In}(R)\rangle$ for all $R\in\Gamma_j$. By Remark
\ref{omg-intc}, $\Omega_j,j=1,\cdots,n,$ are mutually disjoint. So
$\Gamma_j,j=1,\cdots,n,$ are mutually disjoint and
$\tilde{C}=\tilde{C}_\Pi\cup\tilde{C}_{\Gamma_1}\cup\cdots
\cup\tilde{C}_{\Gamma_n}$ is a linear solution of
$\text{RG}(D^{**})$. Thus, $\text{RG}(D^{**})$ is feasible.
\end{proof}

\subsection{Partitioning of $\Pi$}
To give a simple characterization of feasibility of
$\text{RG}(D^{**})$, we need to make some discussion on
partitioning $\Pi$.

Let $\mathcal I=\{\Delta_1,\cdots,\Delta_K\}$ be a partition of
$\Pi$. For the sake of convenience, we shall call each $\Delta_i$
an \emph{equivalent class} of $\mathcal I$. If $R\in\Delta_i$, we
denote $\Delta_i=[R]$. Thus, for each $\Delta_i$, we can choose an
$R_i\in\Delta_i$ and denote $\mathcal I=\{[R_1],\cdots,[R_K]\}$.

Let $\mathcal I=\{[S_1],[S_2],[S_3],\cdots,[R_K]\}$ be an
arbitrary partition of $\Pi$.\footnote{When we use the notation
$\mathcal I=\{[S_1],[S_2],[S_3],\cdots,[R_K]\}$, we always assume
that $[S_1],[S_2],[S_3],\cdots,[R_K]$ are mutually different.} For
each equivalent class $[R]\in\mathcal I$ and each subset
$\{i,j\}\subseteq\{1,2,3\}$, we denote
\begin{align} \vspace{-0.1cm}
[R]_{i,j}=[R]\cap\text{reg}(S_{i},S_{j}). \label{eq-nt-3}
\end{align}
For each $i\in\{1,2,3\}$, we denote
\begin{align} \vspace{-0.1cm}
[S_i]_i=[S_i]_{i,j_1}\cup[S_i]_{i,j_2} \label{eq-nt-4}
\end{align}
where $\{j_1,j_2\}=\{1,2,3\}\backslash\{i\}$. Then we can divide
each equivalent class as follows:
\begin{defn}\label{sub-class}
For $i\in\{1,2,3\}$, $[S_i]$ is divided into two \emph{subclasses}
$[S_i]_i$ and $[S_i]_{j_1,j_2}$, where
$\{j_1,j_2\}=\{1,2,3\}\backslash\{i\}$; For $i\in\{4,\cdots,K\}$,
$[R_i]$ is divided into three \emph{subclasses} $[R_i]_{1,2},
[R_i]_{1,3}$ and $[R_i]_{2,3}$.
\end{defn}

\renewcommand\figurename{Fig}
\begin{figure}[htbp]
\begin{center}
\includegraphics[height=3cm]{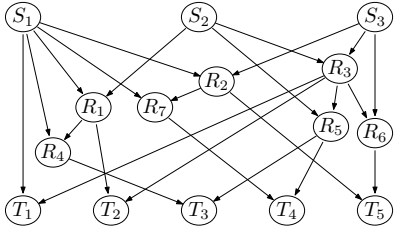}
\end{center}
\vspace{-0.2cm}\caption{An example of terminal-separable region
graph: By Definition \ref{g-reg}, we can check that
$\text{reg}(S_1,S_2)=\{S_1,S_2,R_1,R_4\}$,
$\text{reg}(S_1,S_3)=\{S_1,S_3,R_2,R_7\}$ and
$\text{reg}(S_2,S_3)=\{S_2,S_3,R_3,R_5,R_6\}$. By Definition
\ref{lmd-omd}, we have $\Omega_j=\{T_j\}, j=1,\cdots,6$ and
$\Omega_I=\emptyset, \forall I\subseteq\{1,\cdots,6\}$ such that
$|I|\geq 2$. So this region graph is terminal-separable.
}\label{g-3}\vspace{-0.05cm}
\end{figure}

For any equivalent class $[R]\in\mathcal I$, we use $[[R]]$ to
denote a subclass of $[R]$. Note that a subclass $[[R]]$ of $[R]$
is possibly an empty set. By Equations
(\ref{eq-nt-1})$-$(\ref{eq-nt-4}), each equivalent class is a
disjoint union of its all subclasses. Thus, $\{[S_1]_1,
[S_1]_{2,3}\}\cup\{[S_2]_2$, $[S_2]_{1,3}\}\cup\{[S_3]_3,
[S_3]_{1,2}\}\cup(\cup_{i=4}^K\{[R_i]_{1,2}, [R_i]_{1,3},
[R_i]_{2,3}\})$ is still a partition of $\Pi$.

\begin{exam}\label{ex-ptn}
Consider the region graph in Fig. \ref{g-3}. By Definition
\ref{g-reg}, $\text{reg}(S_1,S_2)=\{S_1,S_2,R_1,R_4\}$,
$\text{reg}(S_1,S_3)=\{S_1,S_3,R_2,R_7\}$ and
$\text{reg}(S_2,S_3)=\{S_2,S_3,R_3,R_5,R_6\}$. Let
$[S_1]=\{S_1,R_1,R_3,R_4,R_5,R_7\}$, $[S_2]=\{S_2\}$,
$[S_3]=\{S_3\}$, $[R_2]=\{R_2,R_6\}$ and $\mathcal
I=\{[S_1],[S_2],[S_3],[R_2]\}$. Then $\mathcal I$ is a partition
of $\Pi$ and $[S_1]_1=\{S_1,R_1,R_4,R_7\}$,
$[S_1]_{2,3}=\{R_3,R_5\}$, $[S_2]_2=\{S_2\}$,
$[S_2]_{1,3}=\emptyset$, $[S_3]_3=\{S_3\}$,
$[S_3]_{1,2}=\emptyset$, $[R_2]_{1,2}=\emptyset$,
$[R_2]_{1,3}=\{R_2\}$, $[R_2]_{2,3}=\{[R_6]\}$ are all subclasses
of $\mathcal I$ and they also form a partition of $\Pi$.
\end{exam}

\begin{defn}\label{cntd}
Let $\mathcal I=\{[S_1],[S_2],[S_3],\cdots,[R_K]\}$ be a partition
of $\Pi$. Two equivalent classes $[R']$ and $[R'']$ are said to be
\emph{connected} if one of the following conditions hold:
\begin{itemize}
  \item [(1)] There is a $j\in[n]$ such that
  $\Lambda_j\subseteq[[R']]\cup[[R'']]$, where $[[R']]~($resp.
  $[[R'']])$ is a subclass of $[R']~($resp. $[R''])$;
  \item [(2)] There is an $\{i_1,i_2\}\subseteq\{1,2,3\}$ such
  that $\text{reg}([R']_{i_1,i_2})\cap\text{reg}([R'']_{i_1,i_2})
  \neq\emptyset$.
\end{itemize}
\end{defn}

\begin{defn}\label{cmptl}
Let $\mathcal I=\{[S_1],[S_2],[S_3],\cdots,[R_K]\}$ be a partition
of $\Pi$. $\mathcal I$ is said to be \emph{compatible} if the
following two conditions hold:
\begin{itemize}
  \item [(1)] No pair of equivalent classes of $\mathcal I$
  are connected;
  \item [(2)] $\Lambda_j\nsubseteq[S_{i_1}]_{i_1} \cup[S_{i_2}]_{i_2}
  \cup(\cup_{\ell=4}^K[R_\ell]_{i_1,i_2})$ for all $j\in[n]$ and
  $\{i_1,i_2\} \subseteq\{1,2,3\}$.
\end{itemize}
\end{defn}

Clearly,  in Example \ref{ex-ptn}, the partition $\mathcal I$ of
$\Pi$ is compatible.

Suppose $\mathcal I$ is a partition of $\Pi$ and $\{[R'],
[R'']\}\subseteq\mathcal I$. By combining $[R']$ and $[R'']$ into
one equivalent class $[R']\cup[R'']$, we obtain a partition
$\mathcal I'=\mathcal I\cup\{[R']\cup[R'']\}\setminus\{[R'],
[R'']\}$ of $\Pi$. We call $\mathcal I'$ a \emph{contraction} of
$\mathcal I$ by combining $[R']$ and $[R'']$.

\subsection{Main Result}
Let $\mathcal I_0=\{[R]; R\in\Pi\}$, where $[R]=\{R\}, \forall
R\in\Pi$. Then $\mathcal I_0$ is a partition of $\Pi$. We call
$\mathcal I_0$ the trivial partition of $\Pi$.

\begin{defn}\label{chpt}
Let $\mathcal I_0,\mathcal I_1,\cdots,\mathcal I_L=\mathcal I_c$
be a sequence of partitions of $\Pi$ such that $\mathcal I_\ell$
is a contraction of $\mathcal I_{\ell-1}$ by combining two
connected equivalent classes in $\mathcal I_{\ell-1}$ and, for any
$\{i,j\}\subseteq\{1,2,3\}$, $[S_i]\neq[S_j]$ in $\mathcal
I_{\ell-1}$, where $\ell=1,\cdots,L$. $\mathcal I_c$ is called a
character partition of $\Pi$ if one of the following conditions
hold:
\begin{itemize}
  \item [(1)] $[S_i]=[S_j]$ for some $\{i,j\}\subseteq\{1,2,3\}$;
  \item [(2)] No pair of equivalent classes in $\mathcal I_c$ are
  connected.
\end{itemize}
\end{defn}

\begin{exam}\label{ex-infsb}
Consider the region graph in Fig. \ref{g-4} (a). We have
$\Pi=\{S_1,S_2,S_3,R_1,R_2,R_3,R_4\}$ and
$\Lambda_2=\{S_2,R_1\}\subseteq([S_2]_2\cup[R_1]_{1,3})$. By (1)
of Definition \ref{cntd}, $[S_2]$ and $[R_1]$ are connected. So
$\mathcal I_1=\{\{S_1\},\{S_2,R_1\},\{S_3\},\{R_2\}$,
$\{R_3\},\{R_4\}\}$ is obtained from $\mathcal I_0$ by combining
$[S_2]$ and $[R_1]$, where $\mathcal I_0$ is the trivial partition
of $\Pi$. Similarly, let $\mathcal
I_2=\{\{S_1\},\{S_2,R_1,R_2\},\{S_3\},\{R_3\},\{R_4\}\}$ and
$\mathcal I_3=\{\{S_1\},\{S_2,R_1,R_2,R_3\},\{S_3\},\{R_4\}\}$.
Then $\mathcal I_j, j=2,3,$ is obtained from $\mathcal I_{j-1}$ by
combining two connected equivalent classes. Note that in $\mathcal
I_3$,
$\text{reg}([S_2]_{2,3})=\text{reg}(R_2,R_3)=\{R_2,R_3,R_4\}$ and
$\text{reg}([R_4]_{2,3})=\text{reg}(R_4)=\{R_4\}$. So by (2) of
Definition \ref{cntd}, $[S_2]$ and $[R_4]$ are connected and
$\mathcal I_4=\{\{S_1\}$, $\{S_2,R_1,R_2,R_3,R_4\},\{S_3\}\}$ is
obtained from $\mathcal I_{3}$ by combining two connected
equivalent classes. In $\mathcal I_{4}$, again by (1) of
Definition \ref{cntd}, $[S_2]$ and $[S_1]$ are connected and
$\mathcal I_5=\{\{S_1,S_2,R_1,R_2,R_3$, $R_4\}$, $\{S_3\}\}$ is
obtained from $\mathcal I_{4}$ by combining two connected
equivalent classes. Thus, $\mathcal I_0,\mathcal
I_1,\cdots,\mathcal I_5$ satisfy the conditions of Definition
\ref{chpt}. So $\mathcal I_c=\mathcal I_5$ is a character
partition of $\Pi$. Since $[S_1]=[S_2]$, then $\mathcal I_c$ is
not compatible.

Similarly, for the region graph in Fig. \ref{g-4} (b), we can find
that $\mathcal I_c=\{\{S_1,P_1,P_2,P_3,P_4\},\{S_2\},\{S_3\}\}$ is
a character partition of $\Pi$. Since $\Lambda_2\subseteq[S_1]_1$,
then $\mathcal I_c$ is not compatible.
\end{exam}

\renewcommand\figurename{Fig}
\begin{figure}[htbp]
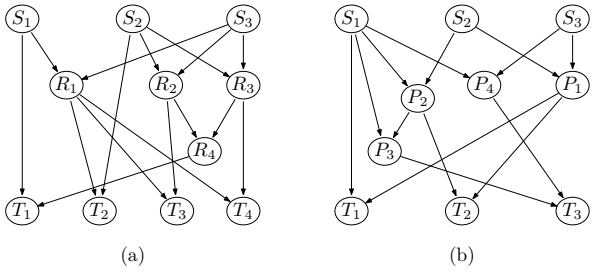

\begin{center}
\includegraphics[height=3.5cm]{fsb.2}
\includegraphics[height=3.5cm]{fsb.3}
\end{center}
\vspace{-0.2cm}\caption{Examples of region
graph.}\label{g-4}\vspace{-0.4cm}
\end{figure}

\begin{lem}\label{comp-code}
Let $\mathcal I$ be a partition of $\Pi$. If $\mathcal I$ is
compatible, then $\text{RG}(D^{**})$ is feasible.
\end{lem}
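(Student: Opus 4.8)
The plan is to use the characterization of feasibility in Lemma \ref{lmd-solv}: it suffices to produce a collection $\tilde C_\Pi=\{d_R\in\mathbb F^3;\,R\in\Pi\}$ obeying conditions (1)--(3) of that lemma, and then feasibility of $\text{RG}(D^{**})$ follows (we may invoke Lemma \ref{lmd-solv} since $\text{RG}(D^{**})$ is assumed terminal-separable). The assignment will be constant on each subclass: I will give one vector $d_{[[R]]}$ to every region of a subclass $[[R]]\subseteq\text{reg}(S_i,S_j)$, chosen inside the coordinate plane $\langle\alpha_i,\alpha_j\rangle$; in particular I set $d_R=\alpha_i$ for every $R\in[S_i]_i$, so that condition (1) is forced and the two occurrences of each source region stay consistent.

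First I would extract the structural payoff of compatibility condition (1). Fixing $\{i,j\}$ and using that no two classes are connected in the sense of condition (2) of Definition \ref{cntd}, together with Lemma \ref{g-reg-cap} and the fact that every region of $\text{reg}(S_i,S_j)$ lies in $[R_\ell]_{i,j}$ for its own class $\ell$, I would show that the super regions $\text{reg}([R_\ell]_{i,j})$ are pairwise disjoint, partition $\text{reg}(S_i,S_j)$, and each coincides with its seed subclass $[R_\ell]_{i,j}$. This closedness makes the constant-on-subclass assignment well defined and is the mechanism by which condition (2) of Lemma \ref{lmd-solv} is met: a region all of whose parents lie in its own subclass inherits that subclass's vector trivially, so that only the subclass-entry regions (those with a parent in another subclass) impose genuine constraints $d_{[[R]]}\in\langle d_{R'};\,R'\in\text{In}(R)\rangle$, with all $d_{R'}\in\langle\alpha_i,\alpha_j\rangle$ already.

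The core of the argument is the span requirement, condition (3): $\bar\alpha\in\langle d_Q;\,Q\in\Lambda_j\rangle$ for every $j$. Here compatibility condition (2) enters. With the source subclasses set to pure coordinate vectors, the vectors attached to $[S_{i_1}]_{i_1}$, $[S_{i_2}]_{i_2}$ and each $[R_\ell]_{i_1,i_2}$ all lie in $\langle\alpha_{i_1},\alpha_{i_2}\rangle$; hence if $\Lambda_j$ were contained in $[S_{i_1}]_{i_1}\cup[S_{i_2}]_{i_2}\cup(\cup_\ell[R_\ell]_{i_1,i_2})$, the whole span would sit inside one coordinate plane and could never contain $\bar\alpha$. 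Compatibility condition (2) forbids exactly this, for all three choices of $\{i_1,i_2\}$, so the subclasses meeting $\Lambda_j$ involve at least two distinct coordinate planes; since two distinct coordinate planes already sum to all of $\mathbb F^3$, I can decompose $\bar\alpha$ across them and choose the free vectors so that $\langle d_Q;\,Q\in\Lambda_j\rangle\ni\bar\alpha$.

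The main obstacle is to make these local choices globally coherent: a single subclass typically feeds several different $\Lambda_j$ and simultaneously participates in the subclass-entry constraints above, so the vectors cannot be re-optimised for each $j$ separately. I would handle this by processing $\text{RG}(D^{**})$ in the topological order guaranteed by acyclicity, fixing each subclass vector when it is first reached so that the already-determined entry constraints are respected, and invoking compatibility condition (1) to guarantee that distinct classes never over-constrain one another (no shared super region, and no $\Lambda_j$ trapped in the union of two subclasses of different classes). Carrying this out---showing that the freedom left by condition (1) always suffices to realise the plane-spanning demanded by condition (2), if necessary after enlarging $\mathbb F$---is the delicate part; the structural and local-span steps above are comparatively routine.
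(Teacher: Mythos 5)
Your skeleton matches the paper's: reduce to Lemma \ref{lmd-solv}, assign one vector per subclass with each subclass confined to its coordinate plane $\langle\alpha_{i_1},\alpha_{i_2}\rangle$, and use compatibility (via condition (2) of Definition \ref{cntd}) to show each subclass is closed under $\text{reg}(\cdot)$ so that condition (2) of Lemma \ref{lmd-solv} follows (the paper does this through Lemma \ref{lem-weak-dec-code}). But the heart of the proof is condition (3), and there your argument has a genuine gap. Your key inference --- ``compatibility condition (2) forbids $\Lambda_j\subseteq[S_{i_1}]_{i_1}\cup[S_{i_2}]_{i_2}\cup(\cup_\ell[R_\ell]_{i_1,i_2})$, hence the subclasses meeting $\Lambda_j$ involve two distinct coordinate planes, hence free choices realize $\bar{\alpha}$'' --- is false, because the union appearing in compatibility condition (2) omits the subclass $[S_{i_3}]_{i_1,i_2}$ (the part of the \emph{third} source's class lying in $\text{reg}(S_{i_1},S_{i_2})$), whose vector is also trapped in the plane $\langle\alpha_{i_1},\alpha_{i_2}\rangle$. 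Concretely, take $\Lambda_j=\{Q,Q',Q''\}$ with $Q\in[S_3]_{1,2}$, $Q'\in[S_1]_1$, $Q''\in[S_2]_2$: this violates none of the three instances of compatibility condition (2), yet in any scheme of your type $d_{Q'}=\alpha_1$ and $d_{Q''}=\alpha_2$ are forced, and $d_Q$ is forced into $\langle\alpha_1,\alpha_2\rangle$ because $Q\in\text{reg}(S_1,S_2)$; so $\langle d_Q,d_{Q'},d_{Q''}\rangle\subseteq\langle\alpha_1,\alpha_2\rangle$ and $\bar{\alpha}$ is unreachable no matter how the ``free'' vectors are chosen. So this is not a coherence problem that topological-order processing can fix; the construction itself dies on such configurations, and ruling them out (or repartitioning to avoid them) needs an argument you do not supply.

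The second missing ingredient is the coordination \emph{within} a class, which your proposal never mentions. Compatibility only forbids connections between \emph{distinct} classes, so it is perfectly possible (Case 1 of the paper's proof) that some $\Lambda_j$ is covered by two subclasses of one and the same class $[R_\ell]$; then those two vectors alone must span $\bar{\alpha}$. This is exactly why the paper proves Lemma \ref{genc-code}: it builds, once and for all, a family $\mathcal B_1,\dots,\mathcal B_K$ in which the triple $\mathcal B_\ell=\{\beta^{(\ell)}_{1,2},\beta^{(\ell)}_{1,3},\beta^{(\ell)}_{2,3}\}$ assigned to the three subclasses of a single class lies in a common plane through $\bar{\alpha}$ (so any two of them span $\bar{\alpha}$), while every ``transversal'' pair or non-coplanar triple across different classes is linearly independent (which also supplies the pairwise independence your condition-(2) step silently needs: parents from two different subclasses must \emph{span} the plane, not merely lie in it). Your greedy, per-$\Lambda_j$ choice of vectors cannot reproduce these two simultaneous requirements, since one subclass vector is shared by many $\Lambda_j$'s and by its sibling subclasses at once; the paper resolves this with the static generic family of Lemma \ref{genc-code} together with the notion of an $\mathcal I$-independent set (Definition \ref{ind-set-ptn}), and this machinery --- not the plane-counting argument you give --- is what the proof actually rests on.
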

\begin{proof}
The proof is given in Appendix A.
\end{proof}

\begin{lem}\label{chpt-lem}
Suppose $\tilde{C}_\Pi=\{d_R; R\in\Pi\}\subseteq\mathbb F^3$
satisfies the conditions of Lemma \ref{lmd-solv} and $\mathcal
I_c$ is a character partition of $\Pi$. For any $[R]\in\mathcal
I_c$ and $\{i_1,i_2\}\subseteq\{1,2,3\}$, if $Q\in[R]_{i_1,i_2}$
and $d_{Q}\neq 0$, then $d_{Q'}\in\langle d_{Q}\rangle, \forall
Q'\in[R]_{i_1,i_2}$.
\end{lem}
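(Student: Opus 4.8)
The plan is to argue by induction on the length $L$ of the contraction sequence $\mathcal I_0,\mathcal I_1,\cdots,\mathcal I_L=\mathcal I_c$ that defines the character partition in Definition \ref{chpt}. The invariant I would carry is exactly the conclusion, phrased for every stage: for each $\mathcal I_\ell$, each class $[R]\in\mathcal I_\ell$, and each pair $\{i_1,i_2\}$, the vectors $\{d_Q:Q\in[R]_{i_1,i_2}\}$ span a subspace of $\langle\alpha_{i_1},\alpha_{i_2}\rangle$ of dimension at most one. The base case $\mathcal I_0$ is immediate, since every class is a singleton, so each subclass $[R]_{i_1,i_2}$ is empty or a single vector. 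For the inductive step, passing from $\mathcal I_{\ell-1}$ to $\mathcal I_\ell$ replaces two connected classes $[R']$, $[R'']$ by their union, and only the merged class can break the invariant; since $([R']\cup[R''])_{i_1,i_2}=[R']_{i_1,i_2}\cup[R'']_{i_1,i_2}$, it suffices to show that for every pair $\{i_1,i_2\}$ in which both $[R']_{i_1,i_2}$ and $[R'']_{i_1,i_2}$ contain a nonzero vector, the two one-dimensional spans (one-dimensional by the inductive hypothesis) coincide.

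The key tool I would isolate first is a domination property read off from Lemma \ref{g-reg-cap}: applying it with $\Theta_1=\{Q_1\}$ and $\Theta_2=\{Q_2\}$ shows that $\text{reg}(Q_1)\cap\text{reg}(Q_2)=\text{reg}(\Theta)$ with $\Theta=(\text{reg}(Q_1)\cap\{Q_2\})\cup(\text{reg}(Q_2)\cap\{Q_1\})$, so a nonempty intersection forces $Q_1\in\text{reg}(Q_2)$ or $Q_2\in\text{reg}(Q_1)$. By Remark \ref{rem-g-reg-code} this forces $d_{Q_1}\in\langle d_{Q_2}\rangle$ or $d_{Q_2}\in\langle d_{Q_1}\rangle$; equivalently, linearly independent global vectors have disjoint super-regions. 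This is what powers the condition-(2) merges of Definition \ref{cntd}: if $[R']$ and $[R'']$ are connected because $\text{reg}([R']_{i_1,i_2})\cap\text{reg}([R'']_{i_1,i_2})$ contains some $P$, then by Remark \ref{rem-g-reg-code} $d_P$ lies in both one-dimensional spans, and as soon as $d_P\neq 0$ the two spans must agree.

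For condition-(1) merges I would invoke condition (3) of Lemma \ref{lmd-solv}, namely $\bar\alpha\in\langle d_R:R\in\Lambda_j\rangle$. Writing the witnessing subclasses as $[R']_{a,b}$ and $[R'']_{c,d}$, the inductive hypothesis confines their spans to single lines $\langle w'\rangle\subseteq\langle\alpha_a,\alpha_b\rangle$ and $\langle w''\rangle\subseteq\langle\alpha_c,\alpha_d\rangle$; since $\Lambda_j\subseteq[R']_{a,b}\cup[R'']_{c,d}$ gives $\bar\alpha\in\langle w',w''\rangle$ and $\bar\alpha$ has all three coordinates nonzero, the two pairs must differ and $w',w''$ are pinned down up to scalars. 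This records how the merge links two distinct coordinate pairs, and is the ingredient I would reuse to verify compatibility downstream.

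The step I expect to be the real obstacle is cross-pair consistency: a single connection witnesses collinearity in at most one pair (or, for condition (1), ties together two \emph{different} pairs), yet the merged class must satisfy the one-dimensionality invariant \emph{simultaneously in every pair}. Concretely, I must rule out that $[R']$ and $[R'']$ each carry nonzero but non-collinear content in a common pair $\{i_1,i_2\}$ that is not the witnessing pair. Here I would lean again on the domination property: non-collinear vectors in $[R']_{i_1,i_2}$ and $[R'']_{i_1,i_2}$ would have disjoint super-regions, so that pair cannot itself satisfy condition (2); I then intend to show, using the exhaustiveness built into Definition \ref{chpt} (we stop only when no pair is connected or two sources coincide, and sources are kept distinct until the end) together with the acyclicity of $\text{RG}(D^{**})$, that a surviving non-collinear same-pair configuration inside one class would force either an earlier available contraction equating the two directions or a forbidden coincidence of source classes. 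Closing this case carefully---in particular handling the degenerate situation $d_P=0$ in the condition-(2) argument by passing to the set-level domination statement via Lemma \ref{g-reg-cap}---is where the bulk of the work lies.
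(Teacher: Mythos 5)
Your reduction to the witnessing pair is fine, but the cross-pair consistency step that you yourself flag as ``the real obstacle'' is a genuine gap, and the repair you sketch does not close it. The difficulty is that your inductive invariant (each subclass $[R]_{i_1,i_2}$ spans at most a line) records no relation between the lines of \emph{different} pairs inside one class, while both merge types only constrain the witnessing pair(s). Your proposed escape --- that a surviving non-collinear same-pair configuration would force ``an earlier available contraction'' or a coincidence of source classes --- is not a contradiction: Definition \ref{chpt} permits merging \emph{any} connected pair at each step (character partitions are not unique), so the availability of some other contraction earlier in the sequence forbids nothing; moreover, by your own domination argument the two non-collinear directions have disjoint super-regions, so that pair is simply \emph{not} connected by condition (2) of Definition \ref{cntd}, and no violation of the stopping rule arises. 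Concretely: if $[R']=\{Q_1,Q_3\}$ and $[R'']=\{Q_2,Q_4\}$ with $Q_1,Q_2\in\text{reg}^\circ(S_1,S_2)$, $Q_3,Q_4\in\text{reg}^\circ(S_1,S_3)$, and the current merge is witnessed by $\Lambda_j=\{Q_1,Q_4\}$, then nothing in your hypothesis relates $d_{Q_2}$ to $d_{Q_1}$ or $d_{Q_3}$ to $d_{Q_4}$; the needed collinearity is in fact true, but only because of \emph{how} $[R']$ and $[R'']$ were themselves assembled --- information your invariant has already discarded by the time you perform the current merge.

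The paper closes exactly this hole by strengthening the invariant to its Property (a): after a preprocessing step that makes every $d_R\neq 0$ (redefining zero vectors along the acyclic order, w.l.o.g.), one shows that every class $[R]$ along the contraction sequence satisfies $d_{Q'}\in\langle d_Q,\bar{\alpha}\rangle$ for all $\{Q,Q'\}\subseteq[R]$, i.e.\ all vectors of a class lie in one common plane containing $\bar{\alpha}$ (Lemmas \ref{claim 1}--\ref{claim 3}). This plane invariant \emph{is} inductive: a condition-(1) merge forces $\bar{\alpha}\in\langle d_{P'},d_{P''}\rangle$, a condition-(2) merge forces a shared nonzero direction, and either way the two planes coincide. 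And it yields your per-pair statement for free, since a plane containing $\bar{\alpha}$ meets each coordinate plane $\langle\alpha_{i_1},\alpha_{i_2}\rangle$ in exactly one line --- which is precisely the cross-pair consistency you are missing. So your route is salvageable, but only after replacing the per-pair invariant by this stronger one (together with the $d_R\neq 0$ normalization that your remark about the degenerate case gestures at but does not supply).
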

\begin{proof}
The proof is given in Appendix B.
\end{proof}

\begin{thm}\label{lc-md}
Let $\text{RG}(D^{**})$ be terminal-separable and $\mathcal I_c$
be a character partition of $\Pi$. Then $\text{RG}(D^{**})$ is
feasible if and only if $\mathcal I_c$ is compatible. Moreover, it
is $\{|\Pi|,n\}$-polynomial time complexity to determine
feasibility of $\text{RG}(D^{**})$.
\end{thm}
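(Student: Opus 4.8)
The plan is to prove the biconditional by handling its two directions with the three lemmas already in place, and then to describe the deciding procedure. The ``if'' direction is immediate: if $\mathcal I_c$ is compatible, then Lemma~\ref{comp-code} applied with $\mathcal I=\mathcal I_c$ yields that $\text{RG}(D^{**})$ is feasible. Thus all the work is in the ``only if'' direction. Here I would assume $\text{RG}(D^{**})$ is feasible and, since it is terminal-separable, invoke Lemma~\ref{lmd-solv} to fix a collection $\tilde{C}_\Pi=\{d_R; R\in\Pi\}$ satisfying conditions (1)--(3) there. Against this fixed solution I would verify the two conditions of Definition~\ref{cmptl} for $\mathcal I_c$, using Lemma~\ref{chpt-lem} (collinearity of $\tilde{C}_\Pi$ inside every subclass of the character partition) as the main tool.

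For compatibility condition (1), I would first exclude case (1) of Definition~\ref{chpt}. If $[S_i]=[S_j]$ for some $\{i,j\}\subseteq\{1,2,3\}$, then, as $S_i,S_j\in\text{reg}(S_i,S_j)$, both source links lie in the same subclass $[R]_{i,j}$ of their common class; applying Lemma~\ref{chpt-lem} with $Q=S_i$ (where $d_{S_i}=\alpha_i\neq 0$) forces $d_{S_j}\in\langle\alpha_i\rangle$, contradicting $d_{S_j}=\alpha_j$. Hence a feasible graph cannot produce a character partition of type (1), so by Definition~\ref{chpt} the partition $\mathcal I_c$ is of type (2), i.e. no two of its classes are connected -- which is exactly compatibility condition (1).

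The crux is compatibility condition (2), and this is where I expect the main difficulty. I would argue by contradiction, supposing $\Lambda_j\subseteq W_{i_1,i_2}$ for some $j\in[n]$ and $\{i_1,i_2\}\subseteq\{1,2,3\}$, where $W_{i_1,i_2}=[S_{i_1}]_{i_1}\cup[S_{i_2}]_{i_2}\cup(\cup_{\ell=4}^K[R_\ell]_{i_1,i_2})$. The goal is to show every $d_R$ with $R\in W_{i_1,i_2}$ lies in the coordinate plane $\langle\alpha_{i_1},\alpha_{i_2}\rangle$. For $R\in[R_\ell]_{i_1,i_2}\subseteq\text{reg}(S_{i_1},S_{i_2})$ this follows from Remark~\ref{rem-g-reg-code}. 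For $R\in[S_{i_1}]_{i_1}=[S_{i_1}]_{i_1,i_2}\cup[S_{i_1}]_{i_1,i_3}$ (with $i_3$ the third index), the subtle point is the part $[S_{i_1}]_{i_1,i_3}$, whose vectors a priori lie only in $\langle\alpha_{i_1},\alpha_{i_3}\rangle$; but $S_{i_1}$ belongs to this subclass and $d_{S_{i_1}}=\alpha_{i_1}\neq 0$, so Lemma~\ref{chpt-lem} collapses all of $[S_{i_1}]_{i_1}$ into $\langle\alpha_{i_1}\rangle\subseteq\langle\alpha_{i_1},\alpha_{i_2}\rangle$, and symmetrically for $[S_{i_2}]_{i_2}$. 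Therefore $\langle d_R; R\in\Lambda_j\rangle\subseteq\langle\alpha_{i_1},\alpha_{i_2}\rangle$, which does not contain $\bar{\alpha}$ (its $i_3$-th coordinate is $1$), contradicting condition (3) of $\tilde{C}_\Pi$. Confining the whole of $W_{i_1,i_2}$ to a single plane -- in particular taming the ``cross'' subclasses $[S_{i_1}]_{i_1,i_3}$ and $[S_{i_2}]_{i_2,i_3}$ via Lemma~\ref{chpt-lem} -- is the step I regard as the real obstacle, as it is exactly where the remaining freedom of a feasible solution is shown to be unusable.

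For the complexity assertion I would spell out a decision procedure and bound its cost. First compute the three super-regions $\text{reg}(S_i,S_j)$ by the closure of Definition~\ref{g-reg}, hence $\Pi$; then obtain all $\Lambda_j$ from $\Omega_I$ via Algorithm~1 in time $O(|D^{**}|)$. Next build a character partition from the trivial partition $\mathcal I_0$ by repeatedly merging a connected pair -- testing Definition~\ref{cntd} with at most $n$ containment checks for clause (1) and three super-region intersection tests for clause (2) -- while keeping the three source classes distinct until forced; each contraction strictly reduces the number of classes, so at most $|\Pi|$ rounds occur, each scanning $O(|\Pi|^2)$ pairs at a per-pair cost of $O(n)$ plus a super-region computation polynomial in $|\Pi|$. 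Finally test the two clauses of Definition~\ref{cmptl} on the resulting $\mathcal I_c$ in the same way and output ``feasible'' iff $\mathcal I_c$ is compatible, which is correct by the equivalence above. Every step is bounded by a polynomial in $|\Pi|$ and $n$, establishing the $\{|\Pi|,n\}$-polynomial complexity.
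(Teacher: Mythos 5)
Your proposal is correct and follows essentially the same route as the paper: the ``if'' direction via Lemma~\ref{comp-code}, the ``only if'' direction by fixing $\tilde{C}_\Pi$ from Lemma~\ref{lmd-solv} and using Lemma~\ref{chpt-lem} to rule out $[S_i]=[S_j]$ and to collapse $[S_{i_1}]_{i_1}\cup[S_{i_2}]_{i_2}\cup(\cup_\ell[R_\ell]_{i_1,i_2})$ into $\langle\alpha_{i_1},\alpha_{i_2}\rangle$, contradicting condition (3), and the complexity claim by bounding the iterative contraction procedure (the paper's Algorithm 2) polynomially in $|\Pi|$ and $n$. Your treatment of the cross subclasses $[S_{i_1}]_{i_1,i_3}$ via $S_{i_1}$ as the nonzero anchor in Lemma~\ref{chpt-lem} is exactly the paper's use of Equation~(\ref{eq-nt-4}), just spelled out more explicitly.
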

\begin{proof}
If $\mathcal I_c$ is compatible, then by Lemma \ref{comp-code},
$\text{RG}(D^{**})$ is feasible. Conversely, suppose
$\text{RG}(D^{**})$ is feasible and $\tilde{C}_\Pi=\{d_R;
R\in\Pi\}\subseteq\mathbb F^3$ satisfies the conditions of Lemma
\ref{lmd-solv}. We shall prove that $\mathcal I_c$ is compatible.

For any $\{i_1,i_2\}\subseteq\{1,2,3\}$, if $[S_{i_1}]=[S_{i_2}]$,
then $S_{i_1}\in[S_{i_2}]_{i_1,i_2}$. By Definition
\ref{lnc-reg-g} and Lemma \ref{chpt-lem},
$d_{S_{i_1}}=\alpha_{i_1}\in\langle
d_{S_{i_2}}\rangle=\langle\alpha_{i_2}\rangle$, a contradiction.
So $[S_{i_1}]\neq[S_{i_2}]$. Thus, by proper naming, we can assume
$\mathcal I_c=\{[S_1],[S_2],[S_3],[R_4],\cdots,[R_K]\}$. Moreover,
by Definition \ref{chpt}, no pair of equivalent classes of
$\mathcal I_c$ are connected.

For any $j\in[n]$ and $\{i_1,i_2\}\subseteq\{1,2,3\}$, suppose
$\Lambda_j\subseteq[S_{i_1}]_{i_1} \cup[S_{i_2}]_{i_2}
\cup(\cup_{\ell=4}^K[R_\ell]_{i_1,i_2})$. Then by Lemma
\ref{chpt-lem} and Equation (\ref{eq-nt-4}), we have
\begin{align}
d_{Q}\in\langle d_{S_{i_1}}\rangle=\langle\alpha_{i_1}\rangle,
\forall Q\in[S_{i_1}]_{i_1} \label{eq-lc-md-1}
\end{align}
and
\begin{align}
d_{Q}\in\langle d_{S_{i_2}}\rangle=\langle\alpha_{i_2}\rangle,
\forall Q\in[S_{i_1}]_{i_2}. \label{eq-lc-md-2}
\end{align}
Note that $\tilde{C}_\Pi$ satisfies condition (2) of Lemma
\ref{lmd-solv}. By Equation (\ref{eq-lc-md-1}), (\ref{eq-lc-md-2})
and Definition \ref{g-reg}, we can easily see that
$d_R\in\langle\alpha_{i_1},\alpha_{i_2}\rangle$ for all
$R\in[S_{i_1}]_{i_1} \cup[S_{i_2}]_{i_2}
\cup(\cup_{\ell=4}^K[R_\ell]_{i_1,i_2})$. Then
$d_R\in\langle\alpha_{i_1},\alpha_{i_2}\rangle$ for all
$R\in\Lambda_j$ and $\bar{\alpha}\notin\langle
d_R;R\in\Lambda_j\rangle$, which contradicts to the assumption
that $\tilde{C}_\Pi$ satisfies condition (3) of Lemma
\ref{lmd-solv}. Thus, $\Lambda_j\nsubseteq[S_{i_1}]_{i_1}
\cup[S_{i_2}]_{i_2}\cup(\cup_{\lambda=4}^K[R_\lambda]_{i_1,i_2})$.
By Definition \ref{cmptl}, $\mathcal I_c$ is compatible.

By Definition \ref{chpt}, the following algorithm output a
character partition of $\Pi$.\vspace{-1.0cm}
\begin{center}
\setlength{\unitlength}{1mm}
\begin{picture}(90,-40)(0,58)
\put(0,0){\line(1,0){87}} \put(0,54){\line(1,0){87}}
\put(0,0){\line(0,1){54}} \put(87,0){\line(0,1){54}}
\end{picture}
\end{center}
\vspace{0.1in} \noindent { \small \textbf{Algorithm 2}:
Partitioning algorithm $(\Pi,\mathcal S)$:

\vspace{0.075in} $L=0$;

\vspace{0.025in} \textbf{While} there are $R',R''\in\mathcal I_L$
which are $\mathcal S-$connected \textbf{do}

\vspace{0.025in} ~ ~ Let $\mathcal I_{L+1}$ be a contraction of
$\mathcal I_{L}$ by combining $R'$ and $R''$;

\vspace{0.025in} ~ ~ \textbf{If} $[S_i]=[S_j]$ for some
$\{i,j\}\subseteq\{1,2,3\}$ \textbf{then}

\vspace{0.025in} ~ ~ ~ ~ $\mathcal I_c=\mathcal I_L$;

\vspace{0.025in} ~ ~ ~ ~ \textbf{return} $\mathcal I_c$;

\vspace{0.025in} ~ ~ ~ ~ \textbf{stop};

\vspace{0.025in} ~ ~ \textbf{else}

\vspace{0.025in} ~ ~ ~ ~ $L=L+1$;

\vspace{0.025in} $\mathcal I_c=\mathcal I_L$;

\vspace{0.025in} \textbf{return} $\mathcal I_c$; }

\vspace{0.45cm} Clearly, there are at most $|\Pi|$ rounds in
Algorithm 2 before output $\mathcal I_c$. In each round, we need
to determine wether there are two $\mathcal S$-connected
equivalent classes, which can be done in time $O(|\mathcal
S|)=O(n)$ by Definition \ref{cntd}. Thus, it is
$\{|\Pi|,n\}$-polynomial time complexity to determine whether
$\text{RG}(D^{**})$ is feasible.
\end{proof}

Consider the region graph in Fig. \ref{g-3}. We can check that the
partition $\mathcal I$ in Example \ref{ex-ptn} is a character
partition of $\Pi$. Since $\mathcal I$ is compatible, so the
region graph is feasible. Let $\mathbb F=GF(p)$ for a sufficiently
large prime $p$. Let $d_{R_1}=d_{R_4}=d_{R_7}=\alpha_1$,
$d_{R_2}=2\alpha_1+3\alpha_3$,
$d_{R_3}=d_{R_5}=\alpha_2+\alpha_3$, $d_{R_6}=\alpha_1+3\alpha_2$.
Then $\{d_R;R\in\Pi\}$ is a linear solution of the graph.

Similar to the information flow decomposition technique used in
\cite{Fragouli06}, we can reduce any compatible partition of $\Pi$
into a minimal compatible partition $\mathcal I_m$, i.e.,
$\mathcal I_m$ is a compatible partition of $\Pi$ but any
contraction of $\mathcal I_m$ is not compatible. Then we can
construct an optimal linear solution of $\text{RG}(D^{**})$ on
$\mathcal I_m$ using the method in the proof of Lemma
\ref{comp-code}.

For the two region graphs in Fig. \ref{g-4}, we have seen that
there is a character partition of $\Pi$ that is not compatible. So
by Theorem \ref{lc-md}, these two region graphs are not feasible.

In \cite{Shenvi10}, a necessary and sufficient condition for
solvability of a $3$s$/3$t sum-network was given based on a set of
connection conditions. By our method, we can give another
sufficient and necessary condition for solvability of $3$s$/3$t
sum-networks which is different from [6]:
\begin{thm}\label{3s-3t}
Suppose $\text{RG}(D^{**})$ has three terminal regions. Then
$\text{RG}(D^{**})$ is not feasible if and only if it is terminal
separable and the following condition (C-IR) hold:
\\ (\textbf{C-IR}) By proper naming, there is a
$P_1\in\text{reg}^\circ(S_2,S_3)$ and a
$P_2\in\text{reg}^\circ(S_1,S_2)$ such that
$\Lambda_1=\{S_1,P_1\}$, $\Lambda_2=\{P_1,P_2\}$ and
$\Lambda_3\subseteq\text{reg}(S_1,P_2)\cup\text{reg}(S_1,S_3)$.
\end{thm}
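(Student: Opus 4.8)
The plan is to establish Theorem~\ref{3s-3t} by specializing the general machinery of Theorem~\ref{lc-md} to the case $n=3$, and by unwinding what it means for the character partition $\mathcal I_c$ to fail compatibility. By Theorem~\ref{lc-md}, $\text{RG}(D^{**})$ (once it is terminal-separable) is infeasible if and only if the character partition $\mathcal I_c$ is \emph{not} compatible. So the real content is a combinatorial translation: I must show that, for exactly three terminals, ``$\mathcal I_c$ is not compatible'' is equivalent to condition (C-IR). I would first dispose of the ``non-terminal-separable'' direction: if $\text{RG}(D^{**})$ is not terminal-separable, then some $\Omega_I\neq\emptyset$ with $|I|>1$, and I would argue (as in Remark~\ref{non-ts}) that the sum can be routed through the common descendant region, producing a feasible solution; hence infeasibility forces terminal-separability, which pins down one half of the ``if and only if.''

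For the main direction I would analyze the two ways compatibility can fail (Definition~\ref{cmptl}). Failure of condition~(1) means $\mathcal I_c$ contains a connected pair, which by Definition~\ref{chpt} can only happen through its alternative clause $[S_i]=[S_j]$; failure of condition~(2) means some $\Lambda_j\subseteq[S_{i_1}]_{i_1}\cup[S_{i_2}]_{i_2}\cup(\cup_{\ell}[R_\ell]_{i_1,i_2})$. In either case I would trace back through the contraction sequence $\mathcal I_0,\ldots,\mathcal I_c$ to extract the specific regions forcing the merges. Using Lemma~\ref{in-reg-lmd}, each $\Lambda_j$ has $|\Lambda_j|\geq 2$ and is not contained in a single $\text{reg}(S_{i_1},S_{i_2})$; with only three terminals this rigidly constrains how the three sets $\Lambda_1,\Lambda_2,\Lambda_3$ can interlock across the three pairwise super-regions $\text{reg}(S_i,S_j)$. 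The goal is to show that the only obstruction pattern that survives is, up to relabeling, the chain described in (C-IR): a region $P_1\in\text{reg}^\circ(S_2,S_3)$ and $P_2\in\text{reg}^\circ(S_1,S_2)$ with $\Lambda_1=\{S_1,P_1\}$, $\Lambda_2=\{P_1,P_2\}$, and $\Lambda_3\subseteq\text{reg}(S_1,P_2)\cup\text{reg}(S_1,S_3)$.

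The hard part will be the forward direction: showing that infeasibility \emph{forces} the precise structure (C-IR), rather than merely some failure of compatibility. I expect to argue that the connectedness relations (Definition~\ref{cntd}) generated by the $\Lambda_j$'s must, through Algorithm~2, chain together the sources $S_1,S_2,S_3$ or collapse a $\Lambda_j$ into a two-source super-region. I would examine each $\Lambda_j$: since $|\Lambda_j|\geq 2$ and $\Lambda_j$ straddles at least two of the regions $\text{reg}(S_i,S_j)$, the membership $Q\in\Lambda_j$ records which source-pair super-region each $Q$ lies in. Enumerating the finitely many admissible configurations of the three $\Lambda_j$'s over the three super-regions, and discarding those that yield a compatible partition by Theorem~\ref{lc-md}, I expect a single surviving template, which after relabeling is exactly (C-IR). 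The reverse direction should be the routine check: assuming (C-IR), I would verify that every attempt to run Algorithm~2 forces $[S_i]=[S_j]$ or violates condition~(2) of Definition~\ref{cmptl}, so $\mathcal I_c$ cannot be compatible and hence, by Theorem~\ref{lc-md}, $\text{RG}(D^{**})$ is infeasible. The delicate bookkeeping is in ensuring the case analysis is exhaustive and that the extracted $P_1,P_2$ genuinely lie in the open super-regions $\text{reg}^\circ(S_2,S_3)$ and $\text{reg}^\circ(S_1,S_2)$ as claimed, which I would justify using Equation~(\ref{eq-nt-2}) and Lemma~\ref{in-reg-lmd}.
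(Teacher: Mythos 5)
Your overall skeleton (infeasibility forces terminal-separability; then translate infeasibility of a terminal-separable graph into a combinatorial condition on the $\Lambda_j$'s) matches the paper's strategy, but two of your steps have genuine gaps. First, the non-terminal-separable case: routing the sum through a common descendant, as in Remark \ref{non-ts}, only disposes of the case $\Omega_{1,2,3}\neq\emptyset$. If $\Omega_{1,2,3}=\emptyset$ but $\Omega_{i_1,i_2}\neq\emptyset$ for some pair, the common region $Q$ reaches only $T_{i_1}$ and $T_{i_2}$, and one must deliver the sum simultaneously to $Q$ and to the third terminal $T_{i_3}$; this is exactly a two-terminal-region sum problem, and its feasibility is not a routing fact --- the paper proves it as Lemma \ref{3s-2t} by an explicit case analysis (Cases 2.1--2.4) that constructs compatible partitions and invokes Lemma \ref{comp-code}. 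Your proposal assumes this result without proof.

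Second, your plan to run both directions of the terminal-separable case through Theorem \ref{lc-md} (compatibility of the character partition $\mathcal I_c$) underestimates the work, and it is not what the paper does. For the direction ``not (C-IR) $\Rightarrow$ feasible,'' the paper never computes $\mathcal I_c$: in each case of the enumeration (Lemma \ref{sgl-3t}) it exhibits some ad hoc compatible partition and applies the one-directional Lemma \ref{comp-code}; determining the character partition in each configuration, as you propose, requires tracking Algorithm 2's contractions, which depend on the whole graph and not only on the $\Lambda_j$'s. For the direction ``(C-IR) $\Rightarrow$ infeasible,'' your ``routine check'' is not routine: to violate condition (2) of Definition \ref{cmptl} for the pair $\{1,3\}$ you need $\Lambda_3\subseteq[S_1]_1\cup[S_3]_3\cup(\cup_{\ell\geq 4}[R_\ell]_{1,3})$ in the final partition, so you must prove (by induction over the contraction sequence) that no region of $\Lambda_3$ can end up in $[S_2]$'s class; nothing in your sketch addresses this. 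The paper sidesteps character partitions entirely here: Lemma \ref{sgl-3t-infsb} derives a contradiction directly from Lemma \ref{lmd-solv}, since (C-IR) forces $d_{P_1}\in\langle\alpha_2+\alpha_3\rangle$, then $d_{P_2}\in\langle\alpha_1\rangle$, hence $\langle d_R; R\in\Lambda_3\rangle\subseteq\langle\alpha_1,\alpha_3\rangle\not\ni\bar{\alpha}$. I would adopt that short algebraic argument; as written, your proposal is a plausible plan whose two load-bearing steps (the exhaustive enumeration and the incompatibility check) are precisely the ones left unexecuted.
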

\begin{proof}
The proof is given in Appendix C.
\end{proof}

Fig. \ref{g-4} (b) is an illustration of infeasible region graph
of $3$s$/3$t sum-network.

\section{Conclusions and Discussions}
We investigated the network coding problem of a special subclass
of $3$s$/n$t sum-networks termed as terminal-separable networks
using a network region decomposition method. We give a necessary
and sufficient condition for solvability of terminal separable
networks as well as a simple characterization of solvability of
$3$s$/3$t sum-networks. The region decomposition method is shown
to be an efficient tool for analyzing the structure of a network
and helps to investigate the network coding problem of a
communication network. By more intensive analysis, we can also
give a characterization of solvability of $3$s$/4$t sum-networks,
which is our future work.

\appendices
\section{Proof of Lemma \ref{comp-code}}

Here, we prove Lemma \ref{comp-code}. First, we prove two lemmas.

\begin{lem}\label{genc-code}
Let $\mathcal B_1=\{\alpha_1,\alpha_2+\alpha_3\}$, $\mathcal
B_2=\{\alpha_2,\alpha_1+\alpha_3\}$, $\mathcal
B_3=\{\alpha_3,\alpha_1+\alpha_2\}$ and $K\geq 3$ is an integer.
If $\mathbb F$ is sufficiently large, then there are $K-3$ subsets
$\mathcal
B_4=\{\beta^{(4)}_{1,2},\beta^{(4)}_{1,3},\beta^{(4)}_{2,3}\}$,
$\cdots$, $\mathcal
B_K=\{\beta^{(K)}_{1,2},\beta^{(K)}_{1,3},\beta^{(K)}_{2,3}\}\subseteq\mathbb
F^3$ such that $\{\mathcal B_1,\mathcal B_2,\mathcal
B_3,\cdots,\mathcal B_K\}$ satisfies the following conditions:
\begin{itemize}
  \item [(1)] For any $\ell\in\{4,\cdots,K\}$ and $\{i_1,i_2\}
  \subseteq\{1,2,3\}$, $\beta^{(\ell)}_{i_1,i_2}\in\langle\alpha_{i_1},
  \alpha_{i_2}\rangle$;
  \item [(2)] For any $\ell\in\{1,\cdots,K\}$ and $\{\gamma,
  \gamma'\}\subseteq\mathcal B_\ell$, $\bar{\alpha}\in\langle\gamma,
  \gamma'\rangle$;
  \item [(3)] If $\{\gamma, \gamma',\gamma''\}\subseteq
  \cup_{\ell=1}^K\mathcal B_\ell$ such that $\{\gamma, \gamma',\gamma''\}
  \nsubseteq\langle\alpha_{i_1},\alpha_{i_2}\rangle, \forall\{i_1,i_2\}
  \subseteq\{1,2,3\}$, and $\{\gamma,\gamma',\gamma''\}
  \neq\{\beta^{(\ell)}_{1,2},\beta^{(\ell)}_{1,3},\beta^{(\ell)}_{2,3}\},
  \forall \ell\in\{4,\cdots,K\}$, then $\gamma, \gamma'$ and $\gamma''$
  are linearly independent;
  \item [(4)] For any pair $\{\gamma, \gamma'\}\subseteq
  \cup_{\ell=1}^K\mathcal
  B_\ell$, $\gamma$ and $\gamma'$ are linearly independent.
\end{itemize}
\end{lem}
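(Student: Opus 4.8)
The plan is to recast the four conditions geometrically in the projective plane $\mathbb P^2(\mathbb F)$ and then place $\mathcal B_4,\dots,\mathcal B_K$ by a greedy argument. Regard each nonzero vector of $\mathbb F^3$ as a point of $\mathbb P^2(\mathbb F)$: the three coordinate planes $\langle\alpha_{i_1},\alpha_{i_2}\rangle$ are the sides of the triangle with vertices $\alpha_1,\alpha_2,\alpha_3$, and $\bar\alpha$ is a point lying on none of the sides, since all of its coordinates are nonzero. Condition (1) then says $\beta^{(\ell)}_{i_1,i_2}$ lies on the side $\langle\alpha_{i_1},\alpha_{i_2}\rangle$; condition (2) says the members of each $\mathcal B_\ell$ span a $2$-dimensional subspace containing $\bar\alpha$, i.e.\ lie on a common line through $\bar\alpha$; condition (4) says all chosen points are pairwise distinct; and condition (3) forbids every collinear triple except the two unavoidable kinds, namely three points on one side and the full triple $\mathcal B_\ell$ ($\ell\ge 4$) lying on its line.

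This dictates the construction. For each $\ell\in\{4,\dots,K\}$ I would pick a $2$-dimensional subspace $L_\ell$ with $\bar\alpha\in L_\ell$ and $L_\ell\notin\{L_1,L_2,L_3\}$, where $L_1=\langle\alpha_1,\alpha_2+\alpha_3\rangle$, $L_2=\langle\alpha_2,\alpha_1+\alpha_3\rangle$, $L_3=\langle\alpha_3,\alpha_1+\alpha_2\rangle$ are the ``medians'' on which $\mathcal B_1,\mathcal B_2,\mathcal B_3$ already lie. Since $L_\ell$ contains $\bar\alpha$ while no side does, $L_\ell$ and each side $\langle\alpha_{i_1},\alpha_{i_2}\rangle$ are distinct $2$-subspaces of $\mathbb F^3$ and hence meet in exactly a line; let $\beta^{(\ell)}_{i_1,i_2}$ span $L_\ell\cap\langle\alpha_{i_1},\alpha_{i_2}\rangle$. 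Conditions (1) and (2) then hold automatically, so the whole content of the lemma is to choose the $L_\ell$ so that (3) and (4) hold as well.

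I would do this greedily, adding $L_4,\dots,L_K$ one at a time and keeping the invariant that the points placed so far satisfy (3) and (4). The key fact is the perspectivity: as $L$ ranges over the pencil of lines through $\bar\alpha$, the point $L\cap\langle\alpha_{i_1},\alpha_{i_2}\rangle$ ranges bijectively over the side, so a newly created point is forced onto a prescribed position of a side by exactly one line of the pencil. Every possible new violation involving at least one new point therefore excludes only finitely many $L_\ell$: (i) every previously placed point (vertex, midpoint, or an earlier $\beta^{(m)}$) lies on one of $L_1,\dots,L_{\ell-1}$, and two distinct pencil lines meet only at $\bar\alpha$, so taking $L_\ell\notin\{L_1,\dots,L_{\ell-1}\}$ keeps the new points off all old points and off the vertices and midpoints, giving (4); (ii) a triple with two new points lies on $L_\ell$, so a third old point would have to lie on $L_\ell$, again excluded by (i); (iii) a triple with exactly one new point $\beta^{(\ell)}_{i_1,i_2}$ and two old points $P,Q$ is collinear iff $\beta^{(\ell)}_{i_1,i_2}$ is the intersection of line $PQ$ with the side, which by the perspectivity corresponds to a single pencil line — and whenever the triple is not already all on one side this is a proper condition. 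There are at most $O(\ell^2)$ such forbidden lines, hence $O(K^2)$ in total, while the pencil through $\bar\alpha$ has $|\mathbb F|+1$ lines; taking $|\mathbb F|$ above this bound (for instance $\mathbb F=GF(p)$ with $p$ a large prime) leaves an admissible $L_\ell$ at every step.

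The one point that genericity cannot fix is the fixed sub-configuration $\mathcal B_1,\mathcal B_2,\mathcal B_3$, which carries no free parameter; this is the main obstacle. Checking (3) on these six points reduces to a few $3\times 3$ determinants, all nonzero except for the triple of midpoints $\{\alpha_2+\alpha_3,\alpha_1+\alpha_3,\alpha_1+\alpha_2\}$, whose determinant is $2$. I would therefore require the characteristic of $\mathbb F$ to be different from $2$, which a large prime field supplies automatically; this shows the ``sufficiently large $\mathbb F$'' hypothesis genuinely fails in characteristic $2$. With the characteristic hypothesis handling the fixed triples and the greedy pencil argument handling every triple that involves a new point, all four conditions hold and the construction is complete.
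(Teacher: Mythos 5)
Your proof is correct and is essentially the paper's own argument in projective language: the paper's inductive step likewise chooses a plane $\langle\beta^{(K)},\bar{\alpha}\rangle$ through $\bar{\alpha}$ avoiding the finitely many planes $\langle\bar{\alpha},\gamma\rangle$ with $\gamma\in\Psi_{K-1}$ (precisely your forbidden pencil lines), and defines $\mathcal B_K$ as the intersections of that plane with the three coordinate planes, which is exactly your greedy step, with your $O(K^2)$ count making the paper's ``sufficiently large'' quantitative. Your characteristic-$2$ caveat is, however, a genuine catch that the paper misses: in characteristic $2$ the triple $\{\alpha_2+\alpha_3,\alpha_1+\alpha_3,\alpha_1+\alpha_2\}$ sums to zero and violates condition (3), so the paper's base case (``Clearly, when $K=3$ \dots'') and hence the lemma as stated fail over large fields of characteristic $2$; the hypothesis must be read as also requiring odd characteristic, consistent with the paper's later examples which take $\mathbb F=GF(p)$ for a large prime $p$.
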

\begin{proof}
We can prove this lemma by induction.

Clearly, when $K=3$, the collection $\{\mathcal B_1, \mathcal B_2,
\mathcal B_3\}$ satisfies conditions (1)$-$(4).

Now suppose $K>3$ and there is a collection $\{\mathcal B_1$,
$\cdots$, $\mathcal B_{K-1}\}$ which satisfies conditions
(1)$-$(4). We want to construct a subset $\mathcal
B_{K}=\{\beta^{(K)}_{1,2},\beta^{(K)}_{1,3},\beta^{(K)}_{2,3}\}
\subseteq\mathbb F^3$ such that the collection $\{\mathcal B_1,
\cdots, \mathcal B_{K-1}, \mathcal B_K\}$ satisfies conditions
(1)$-$(4). The subset $\mathcal B_{K}$ can be constructed as
follows:

Let $\Phi_{K-1}$ be the set of all pairs
$\{\gamma,\gamma'\}\subseteq\cup_{\ell=1}^{K-1}\mathcal B_\ell$
such that
$\{\gamma,\gamma'\}\nsubseteq\langle\alpha_{i_1},\alpha_{i_2}\rangle,
\forall\{i_1,i_2\}\subseteq\{1,2,3\}$. Then
$\langle\gamma,\gamma'\rangle
\cap\langle\alpha_{i_1},\alpha_{i_2}\rangle$ is an 1-dimensional
subspace of $\mathbb F^3$. Let
$\langle\gamma,\gamma'\rangle_{i_1,i_2}$ be a fixed non-zero
vector in $\langle\gamma,\gamma'\rangle
\cap\langle\alpha_{i_1},\alpha_{i_2}\rangle$. Let
\begin{align}
\Psi_{K-1}=\bigcup_{\{\gamma,\gamma'\}\in\Phi_{K-1}}\{\langle\gamma,
\gamma'\rangle_{1,2}, \langle\gamma,\gamma'\rangle_{1,3},
\langle\gamma,\gamma'\rangle_{2,3}\}. \notag 
\end{align}
Since $\mathbb F$ is sufficiently large, then there exists a
$\beta^{(K)}\in\mathbb F^3$ such that
\begin{align}
\beta^{(K)}\notin\langle\bar{\alpha}, \gamma\rangle,
\forall\gamma\in\Psi_{K-1}.\label{eq-genc-code-1}
\end{align}
For each $\{i_1,i_2\}\subseteq\{1,2,3\}$, let
\begin{align}
0\neq\beta^{(K)}_{i_1,i_2}\in\langle\beta^{(K)},
\bar{\alpha}\rangle\cap\langle\alpha_{i_1},
\alpha_{i_2}\rangle\label{eq-genc-code-2}
\end{align}
where $0$ is the zero vector of $\mathbb F^3$. Let $\mathcal
B_{K}=\{\beta^{(K)}_{1,2}$, $\beta^{(K)}_{1,3},
\beta^{(K)}_{2,3}\}$. We shall prove that the collection
$\{\mathcal B_1$, $\cdots$, $\mathcal B_{K-1}$, $\mathcal B_K\}$
satisfies conditions (1)$-$(4).

By Equation (\ref{eq-genc-code-2}), we have
$\beta^{(K)}_{i_1,i_2}\in\langle\alpha_{i_1}, \alpha_{i_2}\rangle,
\forall\{i_1,i_2\}\subseteq\{1,2,3\}$. So $\{\mathcal B_1$,
$\cdots$, $\mathcal B_{K-1}$, $\mathcal B_K\}$ satisfies condition
(1).

By assumption, $\{\mathcal B_1$, $\cdots$, $\mathcal B_{K-1}\}$
satisfies condition (2), then for any $\ell\in\{1,\cdots,K-1\}$
and $\{\gamma,\gamma'\}\subseteq\mathcal B_\ell$, the pair
$\{\gamma,\gamma'\}$ is in $\Phi_{K-1}$. Moreover, since
$\{\mathcal B_1$, $\cdots$, $\mathcal B_{K-1}\}$ satisfies
condition (1), then $\{\gamma,\gamma'\}\subseteq\mathcal
B_\ell\subseteq\langle\alpha_{1},\alpha_{2}\rangle\cup\langle\alpha_{1},
\alpha_{3}\rangle\cup\langle\alpha_{2},\alpha_{3}\rangle$. So
$\{\gamma,\gamma'\}\subseteq\{\langle\gamma,
\gamma'\rangle_{1,2}$, $\langle\gamma,\gamma'\rangle_{1,3}$,
$\langle\gamma,\gamma'\rangle_{2,3}\}$. Thus, we have
$\cup_{\ell=1}^{K-1}\mathcal B_\ell\subseteq\Psi_{K-1}$. By
Equation (\ref{eq-genc-code-1}), for any
$\gamma\in\cup_{\ell=1}^{K-1}\mathcal B_\ell$,
\begin{align}
\gamma\notin\langle\beta^{(K)},\bar{\alpha}\rangle.\label{eq-genc-code-4}
\end{align}
In particular, we have
$\alpha_{j}\notin\langle\beta^{(K)},\bar{\alpha}\rangle, j=1,2,3$.
So by Equation (\ref{eq-genc-code-2}), $\beta^{(K)}_{1,2}$,
$\beta^{(K)}_{1,3}$ and $\beta^{(K)}_{2,3}$ are mutually linearly
independent and $\bar{\alpha}\in\langle\gamma,\gamma'\rangle,
\forall\{\gamma, \gamma'\}\subseteq\mathcal B_K$. Thus,
$\{\mathcal B_1$, $\cdots$, $\mathcal B_{K-1},$ $\mathcal B_{K}\}$
satisfies condition (2).

Now, we prove that $\cup_{\ell=1}^{K}\mathcal B_\ell$ satisfies
condition (3). Suppose
$\{\gamma,\gamma',\gamma''\}\subseteq\cup_{\ell=1}^{K}\mathcal
B_\ell$ such that $\{\gamma, \gamma',\gamma''\}
\nsubseteq\langle\alpha_{i_1},\alpha_{i_2}\rangle$ for any
$\{i_1,i_2\}\subseteq\{1,2,3\}$ and $\{\gamma, \gamma', \gamma''\}
\neq\{\beta^{(\ell)}_{1,2}$, $\beta^{(\ell)}_{1,3}$,
$\beta^{(\ell)}_{2,3}\}$ for any $\ell\in\{1,\cdots,K\}$. We have
the following three cases:

Case 1:
$\{\gamma,\gamma',\gamma''\}\subseteq\cup_{\ell=1}^{K-1}\mathcal
B_\ell$. By the induction assumption, $\gamma, \gamma'$ and
$\gamma''$ are linearly independent.

Case 2: $\{\gamma,\gamma'\}\subseteq\cup_{\ell=1}^{K-1}\mathcal
B_\ell$ and $\gamma''\in\mathcal B_K$. We have the following two
subcases:

Case 2.1: $\{\gamma,\gamma'\}\subseteq\langle\alpha_{\ell_1},
\alpha_{\ell_2}\rangle$ for some
$\{\ell_1,\ell_2\}\subseteq\{1,2,3\}$. By assumption of $\{\gamma,
\gamma',\gamma''\}$, we have
$\gamma''\notin\langle\alpha_{\ell_1}, \alpha_{\ell_2}\rangle$. So
$\gamma, \gamma'$ and $\gamma''$ are linearly independent.

Case 2.2: $\{\gamma,\gamma'\}\nsubseteq\langle\alpha_{\ell_1},
\alpha_{\ell_2}\rangle, \forall
\{\ell_1,\ell_2\}\subseteq\{1,2,3\}$. Then the pair
$\{\gamma,\gamma'\}$ is in the set $\Phi_{K-1}$. So we have
$\gamma''\notin\langle\gamma,\gamma'\rangle. ~($Otherwise,
$\gamma''\in\{\langle\gamma, \gamma'\rangle_{1,2},
\langle\gamma,\gamma'\rangle_{1,3},
\langle\gamma,\gamma'\rangle_{2,3}\}\subseteq\Psi_{K-1}$ and by
Equation (\ref{eq-genc-code-2}),
$\beta^{(K)}\in\langle\bar{\alpha}, \gamma''\rangle$, which
contradicts to Equation (\ref{eq-genc-code-1}).$)$ Thus, $\gamma,
\gamma'$ and $\gamma''$ are linearly independent.

Case 3: $\gamma\in\cup_{\ell=1}^{K-1}\mathcal B_\ell$ and
$\{\gamma',\gamma''\}\subseteq\mathcal B_K$. By Equations
(\ref{eq-genc-code-2}) and (\ref{eq-genc-code-4}),
$\gamma\notin\langle\beta^{(K)},\bar{\alpha}\rangle=
\langle\gamma',\gamma''\rangle$. So $\gamma, \gamma'$ and
$\gamma''$ are linearly independent.

Thus, $\{\mathcal B_1,\cdots,\mathcal B_{K-1},\mathcal B_K\}$
satisfies conditions (3).

Clearly, if $\{\mathcal B_1,\cdots,\mathcal B_{K-1},\mathcal
B_K\}$ satisfies conditions (3), then for any $\{\gamma,
\gamma'\}\subseteq\cup_{\ell=1}^K\mathcal B_\ell$, we can find a
$\gamma''\in\cup_{\ell=1}^K\mathcal B_\ell$ such that $\gamma,
\gamma'$ and $\gamma''$ are linearly independent. So $\gamma$ and
$\gamma'$ are linearly independent and $\{\mathcal
B_1,\cdots,\mathcal B_{K-1},\mathcal B_K\}$ satisfies conditions
(4).

By induction, for all $K\geq 3$, we can always find a collection
$\{\mathcal B_1$, $\cdots$, $\mathcal B_{K-1}$, $\mathcal B_K\}$
which satisfies conditions (1)$-$(4).
\end{proof}

We give an example of Lemma \ref{genc-code} in the below. To
simplify our discussion, we assume that $\mathbb F=GF(p)$, where
$p$ is a sufficiently large prime.
\begin{exam}\label{ex-genc-code}
According to Lemma \ref{genc-code}, $\mathcal B_1=\{\alpha_1,
\alpha_2+\alpha_3\}, \mathcal B_2=\{\alpha_2, \alpha_1+\alpha_3\},
\mathcal B_3=\{\alpha_3, \alpha_1+\alpha_2\}$. Then
$\Phi_3=\{\{\alpha_1, \alpha_2+\alpha_3\}, \{\alpha_2,
\alpha_1+\alpha_3\}, \{\alpha_3, \alpha_1+\alpha_2\},
\{\alpha_2+\alpha_3, \alpha_1+\alpha_3\}, \{\alpha_2+\alpha_3,
\alpha_1+\alpha_2\}, \{\alpha_1+\alpha_3, \alpha_1+\alpha_2\}\}$
and $\Psi_3=\{\alpha_1, \alpha_2+\alpha_3\}\cup\{\alpha_2,
\alpha_1+\alpha_3\}\cup\{\alpha_3,
\alpha_1+\alpha_2\}\cup\{\alpha_2+\alpha_3, \alpha_1+\alpha_3,
\alpha_1-\alpha_2\}\cup\{\alpha_2+\alpha_3, \alpha_1+\alpha_2,
\alpha_1-\alpha_3\}\cup\{\alpha_1+\alpha_3, \alpha_1+\alpha_2,
\alpha_2-\alpha_3\}$. We can check that
$\alpha_1+3\alpha_2\notin\langle\bar{\alpha}, \gamma\rangle,
\forall\gamma\in\Psi_3$. Let $\beta^{(4)}=\alpha_1+3\alpha_2$ and
$\mathcal B_4=\{\alpha_1+3\alpha_2, 2\alpha_1+3\alpha_3,
2\alpha_2-\alpha_3\}$. Then the collection $\{\mathcal
B_1,\mathcal B_2,\mathcal B_3,\mathcal B_4\}$ satisfies conditions
(1)$-$(4) of Lemma \ref{genc-code}.

Similarly, we can construct a subset $\mathcal
B_5=\{2\alpha_1+3\alpha_2, \alpha_1+3\alpha_3,
\alpha_2-2\alpha_3\}$ such that the collection $\{\mathcal
B_1,\mathcal B_2,\mathcal B_3,\mathcal B_4,\mathcal B_5\}$
satisfies conditions (1)$-$(4) of Lemma \ref{genc-code}.
\end{exam}

\begin{lem}\label{lem-weak-dec-code}
Let $\{i_1,i_2\}\subseteq\{1,2,3\}$ and
$\{\Delta_1,\cdots,\Delta_K\}$ be a partition of
$\text{reg}(S_{i_1},S_{i_2})$ such that
$\text{reg}(\Delta_i)=\Delta_i, i=1,\cdots,K$. Let
$\tilde{C}_{i_1,i_2}=\{d_R;
R\in\text{reg}(S_{i_1},S_{i_2})\}\subseteq\langle\alpha_{i_1},
\alpha_{i_2}\rangle$ be such that:
\begin{itemize}
    \item[(1)] If $\{R,R'\}\subseteq\Delta_i$ for some $i\in[K]$,
    then $d_R=d_{R'}$;
    \item[(2)] If $\{R,R'\}\nsubseteq\Delta_i$ for any $i\in[K]$,
    then $d_R$ and $d_{R'}$ are linearly independent.
\end{itemize}
Then $d_R\in\langle d_{R'}; R'\in\text{In}(R)\rangle, ~\forall
R\in\text{reg}^\circ(S_{i_1},S_{i_2}).$
\end{lem}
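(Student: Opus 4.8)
The plan is to verify the linear-code membership region by region, the crucial point being that every vector in $\tilde{C}_{i_1,i_2}$ lives in the two-dimensional space $\langle\alpha_{i_1},\alpha_{i_2}\rangle$. Fix an arbitrary $R\in\text{reg}^\circ(S_{i_1},S_{i_2})$. Since $R\notin\{S_{i_1},S_{i_2}\}$, it cannot have entered the super region by rule (1) of Definition \ref{g-reg}, so it entered by rule (2); hence $\text{In}(R)\subseteq\text{reg}(S_{i_1},S_{i_2})$, and in particular each parent $R'$ of $R$ carries a vector from the hypothesis set. Moreover $\text{In}(R)\neq\emptyset$, because a region with no parent would be a source region, and the only sources inside $\text{reg}(S_{i_1},S_{i_2})$ are $S_{i_1}$ and $S_{i_2}$, both excluded from $\text{reg}^\circ$. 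I would then split according to how the parents of $R$ distribute among the parts $\Delta_1,\dots,\Delta_K$.

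First I would handle the case where all parents of $R$ lie in a single part $\Delta_j$, i.e. $\text{In}(R)\subseteq\Delta_j$. Here I invoke the closure hypothesis $\text{reg}(\Delta_j)=\Delta_j$: applying rule (2) of Definition \ref{g-reg} to the generating set $\Delta_j$, the containment $\text{In}(R)\subseteq\Delta_j$ forces $R\in\text{reg}(\Delta_j)=\Delta_j$. Thus $R$ and each of its parents lie in the same part, so by condition (1) of the hypotheses $d_R=d_{R'}$ for every $R'\in\text{In}(R)$, and $d_R\in\langle d_{R'};R'\in\text{In}(R)\rangle$ holds trivially.

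In the remaining case at least two parents of $R$, say $R'$ and $R''$, lie in distinct parts. Then condition (2) makes $d_{R'}$ and $d_{R''}$ linearly independent; since both lie in the two-dimensional space $\langle\alpha_{i_1},\alpha_{i_2}\rangle$ they span it, giving $\langle\alpha_{i_1},\alpha_{i_2}\rangle=\langle d_{R'},d_{R''}\rangle\subseteq\langle d_{R'''};R'''\in\text{In}(R)\rangle$. As $d_R\in\tilde{C}_{i_1,i_2}\subseteq\langle\alpha_{i_1},\alpha_{i_2}\rangle$, the desired membership follows. Ranging over all $R\in\text{reg}^\circ(S_{i_1},S_{i_2})$ completes the argument.

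The step I expect to be the crux is the first case: the whole content of the lemma is that the closure property $\text{reg}(\Delta_j)=\Delta_j$ is precisely what rules out the degenerate situation in which a region's parents all carry one common vector while the region itself has been assigned an independent vector. Once this is recognized, the second case is immediate from the two-dimensionality of $\langle\alpha_{i_1},\alpha_{i_2}\rangle$, and no field-size condition or induction is needed.
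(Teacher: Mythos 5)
Your proof is correct and takes essentially the same route as the paper's: the identical two-case split on whether $\text{In}(R)$ is contained in a single part $\Delta_j$, using the closure hypothesis $\text{reg}(\Delta_j)=\Delta_j$ together with condition (1) in the first case, and linear independence from condition (2) plus the two-dimensionality of $\langle\alpha_{i_1},\alpha_{i_2}\rangle$ in the second. The only cosmetic difference is that you justify $\text{In}(R)\neq\emptyset$ directly (no parent would force $R$ to be a source region), whereas the paper cites Definition \ref{BRD} that each non-source region has at least two parents; both rest on the same standing assumption.
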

\begin{proof}
Suppose $R\in\text{reg}^\circ(S_{i_1},S_{i_2})$. Then by
Definition \ref{g-reg},
$\text{In}(R)\subseteq\text{reg}(S_{i_1},S_{i_2})$. We have the
following two cases:

Case 1: $\text{In}(R)\subseteq\Delta_i$ for some
$i\in\{1,\cdots,K\}$. Then by Definition \ref{g-reg},
$R\in\text{reg}(\Delta_i)$. Since by the assumption of this lemma,
$\text{reg}(\Delta_i)=\Delta_i$, then $R\in\Delta_i$ and, by
condition (1), $d_R=d_{R'}$ for all $R'\in\text{In}(R)$. Thus,
$d_R\in\langle d_{R'}; R'\in\text{In}(R)\rangle$.

Case 2: $\text{In}(R)\nsubseteq\Delta_i$ for any
$i\in\{1,\cdots,K\}$. By Definition \ref{BRD}, each non-source
region has at least two parents. Since
$\{\Delta_1,\cdots,\Delta_K\}$ is a partition of
$\text{reg}(S_{i_1},S_{i_2})$, then there exists a subset
$\{i_1,i_2\}\subseteq\{1,\cdots,K\}$ such that
$\text{In}(R)\cap\Delta_{i_1}\neq\emptyset$ and
$\text{In}(R)\cap\Delta_{i_2}\neq\emptyset$. Assume
$R'_1\in\text{In}(R)\cap\Delta_{i_1}$ and
$R'_2\in\text{In}(R)\cap\Delta_{i_2}$. Then by condition (2),
$d_{R'_1}$ and $d_{R'_2}$ are linearly independent and
$d_R\in\langle d_{R'_1},d_{R'_2}\rangle=\langle
\alpha_{1},\alpha_{2}\rangle$. So $d_R\in\langle d_{R'};
R'\in\text{In}(R)\rangle$.
\end{proof}

\begin{defn}\label{ind-set-ptn}
Let $\mathcal I=\{[S_1],[S_2],[S_3],\cdots,[R_K]\}$ be a partition
of $\Pi$. A subset $\{Q,Q',Q''\}\subseteq\Pi$ is called an
$\mathcal I$-independent set if the following three conditions
hold:
\begin{itemize}
  \item [(1)] $|\{Q,Q',Q''\}\cap[[R]]|\leq 1$ for any equivalent
  class $[R]\in\mathcal I$ and any subclass $[[R]]$ of $[R]$;
  \item [(2)] $\{Q,Q',Q''\}\nsubseteq[R]$ for any equivalent
  class $R\in\mathcal I$;
  \item [(3)] $\{Q,Q',Q''\}\nsubseteq[S_{i}]_{i} \cup[S_{j}]_{j}
  \cup(\cup_{\ell=4}^K[R_\ell]_{i,j})$ for any pair
  $\{i,j\} \subseteq\{1,2,3\}$.
\end{itemize}
\end{defn}

Now we can prove Lemma \ref{comp-code}
\begin{proof}[Proof of Lemma \ref{comp-code}]
Since $\mathcal I$ is compatible, by Definition \ref{cmptl}, we
can assume $\mathcal I=\{[S_1],[S_2],[S_3],\cdots,[R_K]\}$. Let
$\mathcal B_{1}$, $\cdots$, $\mathcal B_{K}$ be as in Lemma
\ref{genc-code}. We construct a code $\tilde{C}_\Pi=\{d_R;
R\in\Pi\}\subseteq\mathbb F^3$ as follows:
\begin{itemize}
  \item For $j\in\{1,2,3\}$ and $R\in[S_j]_j$, let $d_R=\alpha_j$;
  \item For $j\in\{1,2,3\}$ and $R\in[S_j]_{i_1,i_2}$,
  let $d_R=\alpha_{i_1}+\alpha_{i_2}$, where
  $\{i_1,i_2\}=\{1,2,3\}\backslash\{j\}$;
  \item For $j\in\{4,\cdots,K\}$, $\{i_1,i_2\}\subseteq\{1,2,3\}$
  and $R\in[R_j]_{i_1,i_2}$, let $d_R=\beta^{(j)}_{i_1,i_2}$.
\end{itemize}
We shall prove that $\tilde{C}_\Pi=\{d_R;
R\in\Pi\}\subseteq\mathbb F^3$ satisfies the conditions of Lemma
\ref{lmd-solv}.

By the construction of $\tilde{C}_\Pi$, we have
$d_{S_j}=\alpha_j,j=1,2,3$. Moreover, since $\mathcal I$ is
compatible, then for each $[R_\ell]\in\mathcal I$ and
$\{i_1,i_2\}\subseteq\{1,2,3\}$, we have
$[R_\ell]_{i_1,i_2}=\text{reg}([R_\ell]_{i_1,i_2})$. $($Otherwise,
by Definition \ref{g-reg}, there is an
$R\in\text{reg}([R_\ell]_{i_1,i_2})\backslash[R_\ell]_{i_1,i_2}$.
By condition (2) of Definition \ref{cntd}, $[R_\ell]$ and $[R]$
are connected, which contradicts to the assumption that $\mathcal
I$ is compatible.$)$ Now, let $\Delta_i=[R_i]_{i_1,i_2},
i=1,\cdots,K$, where $[R_i]=[S_i], i=1,2,3$. By the construction,
$\tilde{C}_{i_1,i_2}=\{d_R; R\in\text{reg}(S_{i_1},S_{i_2})\}$
satisfies the conditions of Lemma \ref{lem-weak-dec-code}. So
$d_R\in\langle d_{R'};R'\in\text{In}(R)\rangle, \forall
R\in\text{reg}^\circ(S_{i_1},S_{i_2})$.

Finally, we prove that $\tilde{C}_\Pi$ satisfies condition (3) of
Lemma \ref{lmd-solv}. For each $\Lambda_j, j\in[n]$, we have the
following two cases:

Case 1: There is an $[R_\ell]\in\mathcal I$ such that $\Lambda_j$
intersects with at least two subclasses of $[R_\ell]$. Suppose
$Q_1\in\Lambda_j\cap[[R_\ell]]_1$ and
$Q_2\in\Lambda_j\cap[[R_\ell]]_2$, where $[[R_\ell]]_1$ and
$[[R_\ell]]_2$ are two different subclasses of $[R_\ell]$. Then by
the construction of $\tilde{C}_\Pi$, $\{d_{Q_1},
d_{Q_2}\}\subseteq\mathcal B_{\ell}$ and $\bar{\alpha}\in\langle
d_{Q_1}, d_{Q_2}\rangle$.

Case 2: For each $[R_\ell]\in\mathcal I$, $\Theta_j$ intersects
with at most one subclass of $[R_\ell]$. Since $\mathcal I$ is
compatible, then we can always find a subset
$\{Q_1,Q_2,Q_3\}\subseteq\Lambda_j$ such that $\{Q_1,Q_2,Q_3\}$ is
an $\mathcal I-$independent set. By the construction of
$\tilde{C}_\Pi$, $\{d_{Q_1}, d_{Q_2}, d_{Q_3}\}
\nsubseteq\langle\alpha_{i_1},\alpha_{i_2}\rangle,
\forall\{i_1,i_2\}\subseteq\{1,2,3\}$, and $\{d_{Q_1}, d_{Q_2},
d_{Q_3}\}\neq\{\beta^{(\ell)}_{1,2}, \beta^{(\ell)}_{1,3},
\beta^{(\ell)}_{2,3}\}, \forall \ell\in\{4,\cdots,K\}$. So
$d_{Q_1}, d_{Q_2}$ and $d_{Q_3}$ are linearly independent. Thus,
$\bar{\alpha}\in\langle d_{Q_1}, d_{Q_2}, d_{Q_3}\rangle=\mathbb
F^3$.

By the above discussion, $\tilde{C}_\Pi$ satisfies the conditions
of Lemma \ref{lmd-solv}. So $\text{RG}(D^{**})$ is feasible.
\end{proof}

Here, we make an example to illustrate the construction of
$\tilde{C}_\Pi$ in the proof of Lemma \ref{comp-code}.

\section{Proof of Lemma \ref{chpt-lem}}

Here, we prove Lemma \ref{chpt-lem}.

Suppose $\tilde{C}_\Pi=\{d_R; R\in\Pi\}\subseteq\mathbb F^3$ is a
collection that satisfies the conditions of Lemma \ref{lmd-solv}.
Note that $\text{RG}(D^{**})$ is acyclic and $d_{S_i}=\alpha_i\neq
0, i=1,2,3$. If there is an $R\in\Pi$ such that $d_R=0$, then we
can always find an $R_0\in\Pi$ such that $d_{R_0}=0$ and
$d_{R'}\neq 0, \forall R'\in\text{In}(R_0)$. We redefine $d_{R_0}$
by letting $d_{R_0}=d_{R'}$ for a fixed $R'\in\text{In}(R_0)$.
Then the resulted code $\tilde{C}_\Pi=\{d_R; R\in\Pi\}$ still
satisfies the conditions of Lemma \ref{lmd-solv} and $d_{R_0}\neq
0$. We can perform this operation continuously until $d_R\neq 0$
for all $R\in\Pi$ and the resulted code $\tilde{C}_\Pi=\{d_R;
R\in\Pi\}$ still satisfies the conditions of Lemma \ref{lmd-solv}.
So we can assume, without loss of generality, that $d_R\neq 0$ for
all $R\in\Pi$.

To prove Lemma \ref{chpt-lem}, the key is to prove that all
equivalent class $[R]\in\mathcal I_c$ satisfies the following
property:
\begin{itemize}
  \item \emph{Property (a)}: For any pair $\{Q,Q'\}\subseteq[R]$,
  $d_{Q'}\in\langle d_{Q}, \bar{\alpha}\rangle$.
\end{itemize}

To prove this, we first prove the following two lemmas.

\begin{lem}\label{claim 1}
Let $\mathcal I$ be a partition of $\Pi$ and $[R]\in\mathcal I$
satisfies Property (a). Then for any
$\{i_1,i_2\}\subseteq\{1,2,3\}$ and any pair
$\{Q,Q'\}\subseteq[R]_{i_1,i_2}$, $d_{Q'}\in\langle d_{Q}\rangle$.
Moreover, for any subclass $[[R]]$ of $[R]$ and any pair
$\{Q,Q'\}\subseteq[[R]]$, $d_{Q'}\in\langle d_{Q}\rangle$.
\end{lem}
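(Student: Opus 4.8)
The plan is to prove the two conclusions in sequence, the first one being the essential content and the second following as a short corollary. I begin with the first assertion: for any pair $\{Q,Q'\}\subseteq[R]_{i_1,i_2}$, we have $d_{Q'}\in\langle d_{Q}\rangle$. By the definition in Equation (\ref{eq-nt-3}), $[R]_{i_1,i_2}=[R]\cap\text{reg}(S_{i_1},S_{i_2})$, so both $Q$ and $Q'$ lie in $\text{reg}(S_{i_1},S_{i_2})$. By Remark \ref{rem-g-reg-code} applied to $\Theta=\{S_{i_1},S_{i_2}\}$, every region in $\text{reg}(S_{i_1},S_{i_2})$ has its global encoding vector confined to the two-dimensional subspace $\langle\alpha_{i_1},\alpha_{i_2}\rangle$. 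Hence both $d_Q$ and $d_{Q'}$ belong to $\langle\alpha_{i_1},\alpha_{i_2}\rangle$. On the other hand, since $Q,Q'\in[R]$ and $[R]$ satisfies Property (a), we have $d_{Q'}\in\langle d_Q,\bar{\alpha}\rangle$.

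The core of the argument is now to combine these two containments. We have assumed (after the reduction at the start of Appendix B) that $d_Q\neq 0$, so $d_Q$ spans a one-dimensional subspace of $\langle\alpha_{i_1},\alpha_{i_2}\rangle$. Write $d_{Q'}=a\,d_Q+b\,\bar{\alpha}$ for some $a,b\in\mathbb F$ using Property (a). Since $d_{Q'}\in\langle\alpha_{i_1},\alpha_{i_2}\rangle$ and $d_Q\in\langle\alpha_{i_1},\alpha_{i_2}\rangle$, the vector $b\,\bar{\alpha}=d_{Q'}-a\,d_Q$ must also lie in $\langle\alpha_{i_1},\alpha_{i_2}\rangle$. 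But $\bar{\alpha}=\alpha_1+\alpha_2+\alpha_3\notin\langle\alpha_{i_1},\alpha_{i_2}\rangle$, because $\bar{\alpha}$ has a nonzero component in the third coordinate $j\in\{1,2,3\}\setminus\{i_1,i_2\}$ while every vector in $\langle\alpha_{i_1},\alpha_{i_2}\rangle$ has a zero in that coordinate. Therefore $b=0$, and $d_{Q'}=a\,d_Q\in\langle d_Q\rangle$, as required. I expect this intersection step — forcing the $\bar{\alpha}$-coefficient to vanish by comparing the supports of $\bar{\alpha}$ and the coordinate plane $\langle\alpha_{i_1},\alpha_{i_2}\rangle$ — to be the main (though short) obstacle, and the point at which Property (a) and Remark \ref{rem-g-reg-code} are genuinely used together.

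For the second assertion, I would observe that each subclass $[[R]]$ of $[R]$ is, by Definition \ref{sub-class} together with Equations (\ref{eq-nt-3})--(\ref{eq-nt-4}), either a single set of the form $[R]_{i_1,i_2}$ or (in the source-region case) a union $[S_i]_i=[S_i]_{i,j_1}\cup[S_i]_{i,j_2}$. In the first situation the claim is exactly what was just proved. In the remaining situation $[[R]]=[S_i]_i$, take any $\{Q,Q'\}\subseteq[S_i]_i$; every such region lies in $\text{reg}(S_i,S_{j_1})\cup\text{reg}(S_i,S_{j_2})$, so by Remark \ref{rem-g-reg-code} its vector lies in $\langle\alpha_i,\alpha_{j_1}\rangle\cup\langle\alpha_i,\alpha_{j_2}\rangle$, and the only common direction forced by Property (a) together with the nonzero assumption is $\langle\alpha_i\rangle$; repeating the coefficient-vanishing argument above (now comparing against both coordinate planes) yields $d_Q,d_{Q'}\in\langle\alpha_i\rangle$, whence $d_{Q'}\in\langle d_Q\rangle$. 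In either case the conclusion holds for arbitrary pairs drawn from a single subclass, completing the proof.
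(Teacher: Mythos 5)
Your proof of the first assertion is correct and is essentially the paper's own argument: both come down to the computation $\langle\alpha_{i_1},\alpha_{i_2}\rangle\cap\langle d_Q,\bar{\alpha}\rangle=\langle d_Q\rangle$, which you carry out by expanding $d_{Q'}=a\,d_Q+b\,\bar{\alpha}$ and forcing $b=0$, using the nonzero-vector reduction made at the start of Appendix B.

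The gap is in the second assertion, in the case $[[R]]=[S_i]_i=[S_i]_{i,j_1}\cup[S_i]_{i,j_2}$ when $Q\in[S_i]_{i,j_1}$ and $Q'\in[S_i]_{i,j_2}$ lie in \emph{different} coordinate planes. Your claim that ``repeating the coefficient-vanishing argument, now comparing against both coordinate planes, yields $d_Q,d_{Q'}\in\langle\alpha_i\rangle$'' does not follow from the constraints you invoke. Concretely, take $d_Q=\alpha_i+\alpha_{j_1}$ and $d_{Q'}=\alpha_{j_2}$: both are nonzero, $d_Q\in\langle\alpha_i,\alpha_{j_1}\rangle$, $d_{Q'}\in\langle\alpha_i,\alpha_{j_2}\rangle$, and since $d_{Q'}=\bar{\alpha}-d_Q$ (equivalently $d_Q=\bar{\alpha}-d_{Q'}$), Property (a) holds for the pair $\{Q,Q'\}$ in both directions; yet neither vector lies in $\langle\alpha_i\rangle$ and $d_{Q'}\notin\langle d_Q\rangle$. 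So Property (a) restricted to the pair $\{Q,Q'\}$, together with the two plane memberships, cannot force the conclusion: when the two vectors sit in different planes, the coefficient of $\bar{\alpha}$ need not vanish. The missing idea is to use $S_i$ itself as a pivot. Since $\text{reg}(S_i,S_{j_1})\cap\text{reg}(S_i,S_{j_2})=\{S_i\}$, the region $S_i$ belongs to both $[S_i]_{i,j_1}$ and $[S_i]_{i,j_2}$, and $d_{S_i}=\alpha_i$ by condition (1) of Lemma \ref{lmd-solv}. Applying your (correctly proved) first assertion to the pairs $\{Q,S_i\}\subseteq[S_i]_{i,j_1}$ and $\{Q',S_i\}\subseteq[S_i]_{i,j_2}$ gives $\alpha_i=d_{S_i}\in\langle d_Q\rangle$ and $d_{Q'}\in\langle d_{S_i}\rangle=\langle\alpha_i\rangle$, hence $d_{Q'}\in\langle d_Q\rangle$; note that this pivot argument also rules out the configuration in the counterexample above, since $d_Q=\alpha_i+\alpha_{j_1}$ violates Property (a) for the pair $\{Q,S_i\}$. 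This is exactly how the paper closes this case; with that one substitution your proof is complete.
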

\begin{proof}
Since $\tilde{C}_\Pi$ satisfies conditions (1) and (2) of Lemma
\ref{lmd-solv}, then by Definition \ref{g-reg}, we have
$d_W\in\langle\alpha_{i_1},\alpha_{i_2}\rangle, \forall
W\in\text{reg}(S_{i_1},S_{i_2})$. By assumption and Equation
(\ref{eq-nt-3}),
$\{Q,Q'\}\subseteq[R]_{i_1,i_2}\subseteq\text{reg}(S_{i_1},S_{i_2})$.
So $d_Q,d_{Q'}\in\langle\alpha_{i_1},\alpha_{i_2}\rangle$.
Meanwhile, since $[R]\in\mathcal I$ satisfies Property (a), then
$d_{Q'}\in\langle d_{Q}, \bar{\alpha}\rangle$. So
$d_{Q'}\in\langle\alpha_{i_1},\alpha_{i_2}\rangle\cap\langle
d_{Q},\bar{\alpha}\rangle=\langle d_{Q}\rangle$ and the first
claim is true.

We now prove the second claim. If $[R]\neq[S_i], \forall
i\in\{1,2,3\}$, then by Definition \ref{sub-class},
$[[R]]=[R]_{i_1,i_2}$ for some $\{i_1,i_2\}\subseteq\{1,2,3\}$ and
by the proven result, $d_{Q'}\in\langle d_{Q}\rangle$. If
$[R]=[S_i]$ for some $i\in\{1,2,3\}$, then by Definition
\ref{sub-class}, we have the following two cases:

Case 1: $[[R]]=[S_i]_i=[S_i]_{i,j_1}\cup[S_i]_{i,j_2}$, where
$\{j_1,j_2\}=\{1,2,3\}\backslash\{i\}$. By the proven result, we
have $\alpha_i=d_{S_i}\in\langle d_Q\rangle$ and $d_{Q'}\in\langle
d_{S_i}\rangle$. So $d_{Q'}\in\langle d_{Q}\rangle$.

Case 2: $[[R]]=[S_i]_{j_1,j_2}$, where
$\{j_1,j_2\}=\{1,2,3\}\backslash\{i\}$. By the proven result, we
have $d_{Q'}\in\langle d_{Q}\rangle$.

In both cases, we have $d_{Q'}\in\langle d_{Q}\rangle$. So the
second claim is true.
\end{proof}

\begin{lem}\label{claim 2}
Suppose $\mathcal I=\{[S_1],[S_2],[S_3],\cdots,[R_K]\}$ is a
partition of $\Pi$ in which all equivalent classes satisfy
Property (a). Suppose $\{[R],[R']\}\subseteq\mathcal I$ and there
is a $\Lambda_j$ such that $\Lambda_j\subseteq[[R']]\cup[[R'']]$,
where $[[R']]~($resp. $[[R'']])$ is a subclass of $[R']~($resp.
$[R''])$. Then $\bar{\alpha}\in\langle d_{P'},d_{P''}\rangle$ for
any $P'\in\Lambda_j\cap[[R']]$ and $P''\in\Lambda_j\cap[[R'']]$.
\end{lem}
\begin{proof}
Since $\tilde{C}_\Pi$ satisfies condition (3) of Lemma
\ref{lmd-solv} and $\Lambda_j\subseteq[[R']]\cup[[R'']]$, then
$\bar{\alpha}\in\langle d_R; R\in\Lambda_j\rangle=\langle d_R;
R\in(\Lambda_j\cap[[R']])\cup(\Lambda_j\cap[[R'']])\rangle$. By
Lemma \ref{claim 1}, $\langle d_R;
R\in(\Lambda_j\cap[[R']])\cup(\Lambda_j\cap[[R'']])\rangle=\langle
d_{P'},d_{P''}\rangle$. So $\bar{\alpha}\in\langle
d_{P'},d_{P''}\rangle$.
\end{proof}

\begin{lem}\label{claim 3}
Suppose $\mathcal I=\{[S_1],[S_2],[S_3],\cdots,[R_K]\}$ is a
partition of $\Pi$ and $\mathcal I'$ is a contraction of $\mathcal
I$ by combining two connected equivalent classes $[R']$ and
$[R'']$ in $\mathcal I$. If all equivalent classes in $\mathcal I$
satisfy Property (a), then all equivalent classes in $\mathcal I'$
satisfy Property (a).
\end{lem}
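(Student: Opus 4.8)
The plan is to exploit the fact that $\mathcal I'$ differs from $\mathcal I$ only by merging the two connected classes $[R']$ and $[R'']$ into the single class $[R']\cup[R'']$; every other class of $\mathcal I'$ already appears in $\mathcal I$ and so satisfies \emph{Property (a)} by hypothesis. Thus the whole proof reduces to checking that $[R']\cup[R'']$ satisfies \emph{Property (a)}, i.e.\ that $d_{Q'}\in\langle d_Q,\bar\alpha\rangle$ for every pair $\{Q,Q'\}\subseteq[R']\cup[R'']$. When both $Q,Q'$ lie in $[R']$ (or both in $[R'']$) this is immediate from \emph{Property (a)} of the original class, so the only case to treat is $Q\in[R']$ and $Q'\in[R'']$.

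Before the case analysis I would record one uniform observation that streamlines every span manipulation: since $\Pi=\text{reg}(S_1,S_2)\cup\text{reg}(S_1,S_3)\cup\text{reg}(S_2,S_3)$, Remark \ref{rem-g-reg-code} gives $d_P\in\langle\alpha_{i_1},\alpha_{i_2}\rangle$ for every $P\in\Pi$ and the appropriate pair, while $\bar\alpha\notin\langle\alpha_{i_1},\alpha_{i_2}\rangle$ for any pair. As we have arranged $d_P\neq 0$ for all $P\in\Pi$, this means $d_P$ is never parallel to $\bar\alpha$, so each $\langle d_P,\bar\alpha\rangle$ is genuinely two-dimensional. Two consequences follow. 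First, the relation $d_{Q'}\in\langle d_Q,\bar\alpha\rangle$ is symmetric in $Q,Q'$ (the two-dimensional spaces $\langle d_Q,\bar\alpha\rangle$ and $\langle d_{Q'},\bar\alpha\rangle$ then coincide), so I need only one inclusion. Second, whenever $\bar\alpha\in\langle d_P,d_{P'}\rangle$ the coefficient of $d_{P'}$ is forced to be nonzero, hence $d_{P'}\in\langle d_P,\bar\alpha\rangle$.

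Now for $Q\in[R']$, $Q'\in[R'']$ I would split along the two ways $[R']$ and $[R'']$ can be connected (Definition \ref{cntd}). In the first case there is a $\Lambda_j\subseteq[[R']]\cup[[R'']]$ for subclasses $[[R']],[[R'']]$. The hard part here is to show that $\Lambda_j$ actually meets both subclasses rather than sitting inside one: were $\Lambda_j$ contained in a single subclass, Lemma \ref{claim 1} would make all $d_P$ with $P\in\Lambda_j$ mutually parallel, so $\langle d_P;P\in\Lambda_j\rangle$ would be a single line, which by the observation above cannot contain $\bar\alpha$ — contradicting condition (3) of Lemma \ref{lmd-solv}. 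Hence $\Lambda_j$ meets both subclasses; choosing $P'\in\Lambda_j\cap[[R']]$ and $P''\in\Lambda_j\cap[[R'']]$, Lemma \ref{claim 2} gives $\bar\alpha\in\langle d_{P'},d_{P''}\rangle$, whence $d_{P''}\in\langle d_{P'},\bar\alpha\rangle$. Then the chain $d_{Q'}\in\langle d_{P''},\bar\alpha\rangle\subseteq\langle d_{P'},\bar\alpha\rangle\subseteq\langle d_Q,\bar\alpha\rangle$ — using \emph{Property (a)} of $[R'']$ for the first step and of $[R']$ for the last — yields $d_{Q'}\in\langle d_Q,\bar\alpha\rangle$.

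In the second case there is a pair $\{i_1,i_2\}$ with $\text{reg}([R']_{i_1,i_2})\cap\text{reg}([R'']_{i_1,i_2})\neq\emptyset$; pick $W$ in this intersection and $P_0'\in[R']_{i_1,i_2}$, $P_0''\in[R'']_{i_1,i_2}$ (both subclasses are nonempty since their super regions are). By Remark \ref{rem-g-reg-code} together with the first claim of Lemma \ref{claim 1} (all vectors of a single subclass $[R']_{i_1,i_2}$ are parallel), $d_W\in\langle d_{P_0'}\rangle$ and likewise $d_W\in\langle d_{P_0''}\rangle$; since $d_W\neq 0$ this forces $d_{P_0''}\in\langle d_{P_0'}\rangle$. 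The same chaining, $d_{Q'}\in\langle d_{P_0''},\bar\alpha\rangle$ and $d_{P_0''}\in\langle d_{P_0'}\rangle\subseteq\langle d_Q,\bar\alpha\rangle$, closes this case. Having verified \emph{Property (a)} for $[R']\cup[R'']$ in both cases, every class of $\mathcal I'$ satisfies it. I expect the nonemptiness argument of the first case to be the main obstacle, since it is the only point where the structural consequences of Lemma \ref{claim 1}, condition (3) of Lemma \ref{lmd-solv}, and the subclass bookkeeping of Definition \ref{sub-class} must be combined carefully.
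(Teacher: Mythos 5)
Your proof is correct and follows essentially the same route as the paper's: reduce to the merged class $[R']\cup[R'']$, split on the two connectedness conditions of Definition \ref{cntd}, and chain the spans $\langle d_P,\bar{\alpha}\rangle$ using Lemma \ref{claim 2} in the first case and a common vector of $\text{reg}([R']_{i_1,i_2})\cap\text{reg}([R'']_{i_1,i_2})$ in the second. The only difference is that you explicitly verify two steps the paper asserts without comment — that $\Lambda_j$ must intersect both subclasses (via Lemma \ref{claim 1}, condition (3) of Lemma \ref{lmd-solv}, and $\bar{\alpha}\notin\langle\alpha_{i_1},\alpha_{i_2}\rangle$), and that the vector at the shared super-region element is parallel to the subclass vectors — which fills in, rather than changes, the paper's argument.
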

\begin{proof}
Suppose $[R]\in\mathcal I'$. If $[R]\neq[R']\cup[R'']$, then
$[R]\in\mathcal I$, and by assumption, $[R]$ satisfies property
(a). Now we suppose $[R]=[R']\cup[R'']$. Since, $[R']$ and $[R'']$
are connected, by Definition \ref{cntd}, we have the following two
cases:

Case 1: There is a $\Lambda_j\subseteq[[R']]\cup[[R'']]$, where
$[[R']]~($resp. $[[R'']])$ is a subclass of $[R']~($resp.
$[R''])$. By Lemma \ref{claim 2}, $\bar{\alpha}\in\langle
d_{P'},d_{P''}\rangle$, where $P'\in\Lambda_j\cap[[R']]$ and
$P''\in\Lambda_j\cap[[R'']]$. Then $d_{P''}\in\langle
d_{P'},\bar{\alpha}\rangle$ and $d_{P'}\in\langle
d_{P''},\bar{\alpha}\rangle$. Since, by assumption, $[R']$ and
$[R'']$ satisfy property (a), then $\langle
d_{Q},\bar{\alpha}\rangle=\langle
d_{P'},\bar{\alpha}\rangle=\langle d_{P''},\bar{\alpha}\rangle$
and $d_{Q'}\in\langle d_{Q},\bar{\alpha}\rangle, \forall
\{Q,Q'\}\subseteq[R]=[R']\cup[R'']$.

Case 2: There is a subset $\{i_1,i_2\}\subseteq\{1,2,3\}$ such
that
$\text{reg}([R']_{i_1,i_2})\cap\text{reg}([R'']_{i_1,i_2})\neq\emptyset$.
Pick a
$Q_0\in\text{reg}([R']_{i_1,i_2})\cap\text{reg}([R'']_{i_1,i_2})$.
Since, by assumption, $[R']$ and $[R'']$ satisfy property (a),
then $\langle d_{Q},\bar{\alpha}\rangle=\langle
d_{Q_0},\bar{\alpha}\rangle=\langle d_{Q'},\bar{\alpha}\rangle$
and $d_{Q'}\in\langle d_{Q},\bar{\alpha}\rangle, \forall
\{Q,Q'\}\subseteq[R]=[R']\cup[R'']$.

In both cases, $[R]=[R']\cup[R'']$ satisfies property (a). Thus,
all equivalent classes in $\mathcal I'$ satisfy Property (a).
\end{proof}

Now we can prove Lemma \ref{chpt-lem}.
\begin{proof}[Proof of Lemma \ref{chpt-lem}]
Since each equivalent class $[R]$ in $\mathcal I_0$ contains
exactly one region $R$, so $[R]$ naturally satisfies property (a)
and $[S_i]\neq[S_j]$ for all $\{i,j\}\subseteq\{1,2,3\}$.

By Definition \ref{chpt}, $\mathcal I_c=\mathcal I_L$, where
$\mathcal I_0,\mathcal I_1,\cdots,\mathcal I_L=\mathcal I_c$ is a
sequence of partitions of $\Pi$ such that $\mathcal I_\ell$ is a
contraction of $\mathcal I_{\ell-1}$ by combining two connected
equivalent classes in $\mathcal I_{\ell-1}$ and, for any
$\{i,j\}\subseteq\{1,2,3\}$, $[S_i]\neq[S_j]$ in $\mathcal
I_{\ell-1}$, $\ell=1,\cdots,L$. So by Lemma \ref{claim 3}, all
equivalent classes in $\mathcal I_\ell$ satisfy Property (a). In
particular, all equivalent classes in $\mathcal I_c=\mathcal I_L$
satisfies property (a). Then the conclusion of Lemma
\ref{chpt-lem} is obtained by Lemma \ref{claim 1}.
\end{proof}

\section{Proof of Theorem \ref{3s-3t}}

Here, we prove Theorem \ref{3s-3t}. First, we prove some lemmas.

\begin{lem}\label{3s-2t}
If $\text{RG}(D^{**})$ has two terminal regions, then
$\text{RG}(D^{**})$ is feasible.
\end{lem}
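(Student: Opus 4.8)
The plan is to recognize $\text{RG}(D^{**})$, with its three source regions $S_1,S_2,S_3$ and its two terminal regions $T_1,T_2$, as an instance of a three-source, two-terminal sum-network in the sense of Definition \ref{lnc-reg-g}: feasibility of $\text{RG}(D^{**})$ is exactly solvability of this two-terminal sum-network, where the source regions emit $\alpha_1,\alpha_2,\alpha_3$ and both sinks are required to recover $\bar{\alpha}$. Since \cite{Rama08} already characterizes two-terminal sum-networks (solvable if and only if every source is connected to every terminal), the whole lemma reduces to verifying that each $S_i$ is connected to each $T_j$ in $\text{RG}(D^{**})$. Note that connectivity is genuinely necessary here: if some source failed to reach $T_j$, then every admissible $d_{T_j}$ would lie in a two-dimensional coordinate subspace $\langle\alpha_a,\alpha_b\rangle$, which cannot contain $\bar{\alpha}$.

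First I would establish this connectivity from the standing Assumption 1 that $T_j\notin\Pi$. Fix a terminal $T_j$ and a pair $\{a,b\}\subseteq\{1,2,3\}$ with remaining index $c$. The set $A$ of ancestors of $T_j$ (all $R$ with $R\rightarrow T_j$, including $T_j$) is closed under taking parents, since $R'\in\text{In}(R)$ and $R\rightarrow T_j$ give $R'\rightarrow T_j$. Suppose for contradiction that $S_c\nrightarrow T_j$, so $S_c\notin A$; then the only source regions in $A$ are among $\{S_a,S_b\}$, the minimal elements of $A$ are exactly these sources, and every non-source region of $A$ has all its parents in $A$. Processing $A$ in topological (ancestor-first) order and applying condition (2) of Definition \ref{g-reg}, every region of $A$ — in particular $T_j$ — lands in $\text{reg}(S_a,S_b)$, whence $T_j\in\Pi$, contradicting Assumption 1. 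This is the same finite-DAG plus super-region recursion device used in Lemma \ref{g-reg-cap}. Hence $S_c\rightarrow T_j$; since $\{a,b\}$ and $j$ were arbitrary, all three sources reach both terminal regions.

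With full source–terminal connectivity in hand, I would finish by invoking the two-terminal result of \cite{Rama08} applied to the region graph, concluding that $\text{RG}(D^{**})$ is feasible. The main obstacle is transferring that result from its original unit-capacity edge-network formulation to the region-graph setting, where the vertices are regions each carrying a single vector and having at least two parents; I expect this to be routine and would make it precise either by unfolding $\text{RG}(D^{**})$ into an equivalent unit-capacity sum-network and citing \cite{Rama08} verbatim, or — to keep the argument self-contained — by a direct construction over a sufficiently large field in which each terminal's recovery condition $d_{T_j}=\bar{\alpha}$ becomes realizable precisely because all three sources reach $T_j$, and the two conditions can be met simultaneously since only two sinks are involved.
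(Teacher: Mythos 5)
Your proof is essentially correct, but it follows a genuinely different route from the paper. The paper's proof is self-contained within its region-graph machinery: it splits on whether $\Omega_{1,2}=\emptyset$; in the non-terminal-separable case it pushes $\bar{\alpha}$ through a common region $Q$ (as in Remark \ref{non-ts}), and in the terminal-separable case it enumerates the possibilities for $|\Lambda_1|,|\Lambda_2|$ and $\Lambda_1\cap\Lambda_2$ (using Lemma \ref{in-reg-lmd} to get $|\Lambda_j|\geq 2$ and $\Lambda_j\nsubseteq\text{reg}(S_{i_1},S_{i_2})$), exhibits an explicit compatible partition of $\Pi$ in each subcase, and invokes Lemma \ref{comp-code}. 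You instead prove full source--terminal connectivity from Assumption 1 (your ancestor-set argument is sound: in $\text{RG}(D^{**})$ every non-source region has a parent, so if $S_c\nrightarrow T_j$ the ancestor set of $T_j$ collapses into $\text{reg}(S_a,S_b)$, forcing $T_j\in\Pi$), then unfold the region graph into a unit-capacity sum-network (each region $R$ becoming an edge $u_R\rightarrow v_R$) and cite the two-terminal characterization of \cite{Rama08}. This buys brevity and uniformity --- you need no case split on terminal-separability at all, and your treatment of the non-separable case is arguably tighter than the paper's appeal to Remark \ref{non-ts} --- but it outsources the constructive content to \cite{Rama08}, so it requires (i) the routine but necessary verification that region-graph feasibility coincides with linear solvability of the unfolded network (your first alternative; your second, ``direct construction'' alternative is too vague to count as a proof), and (ii) the fact that the achievability in \cite{Rama08} is by linear codes, which matches this paper's definition of solvability and its own reading of that reference. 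The paper's longer route has the side benefit of exercising the compatible-partition technique (Lemma \ref{comp-code}) in its simplest setting, which is the same machinery that drives Theorem \ref{lc-md} and the three-terminal analysis.
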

\begin{proof}
Suppose $\text{RG}(D^{**})$ has two terminal regions $T_1$ and
$T_2$. We have the following two cases:

Case 1: $\Omega_{1,2}\neq\emptyset$. Then there is a $Q\in
D^{**}\backslash\Pi$ such that $Q\rightarrow T_i, i=1,2$. Similar
to what we did in Remark \ref{non-ts}, we can first construct a
code on the set $\{R\in D^{**}; R\rightarrow Q\}$ such that
$d_Q=\bar{\alpha}$. Then for all $R$ such that $Q\rightarrow
R\rightarrow T_i$ for some $i\in\{1,2\}$, let $d_R=\bar{\alpha}$.
By this construction, we obtain a solution of $\text{RG}(D^{**})$.
So $\text{RG}(D^{**})$ is feasible.

Case 2: $\Omega_{1,2}=\emptyset$. Then $\text{RG}(D^{**})$ is
terminal separable. From Lemma \ref{in-reg-lmd}, we have
$\Lambda_j\nsubseteq\text{reg}(S_{i_1},S_{i_2}),
\forall\{i_1,i_2\}\subseteq\{1,2,3\},$ and $|\Lambda_j|\geq 2,
j=1,2$. By enumerating, we have the following three subcases:

Case 2.1: $|\Lambda_1|>2$ and $|\Lambda_2|>2$. Let $\mathcal
I_0=\{[R];R\in\Pi\}$, where $[R]=\{R\}$ for all $R\in\Pi$.
Clearly, $\mathcal I_0$ is a partition of $\Pi$ and is compatible.
By Lemma \ref{comp-code}, $\text{RG}(D^{**})$ is feasible.

Case 2.2: $|\Lambda_1|>2$ and $|\Lambda_2|=2$. Let $\mathcal
I=\{\Lambda_{2}\}\cup\{[R];R\in\Pi\backslash\Lambda_{2}\}$, where
$[R]=\{R\}$ for all $R\in\Pi\backslash\Lambda_{2}$. Clearly,
$\mathcal I$ is a partition of $\Pi$ and is compatible. By Lemma
\ref{comp-code}, $\text{RG}(D^{**})$ is feasible.

Case 2.3: $|\Lambda_1|=|\Lambda_2|=2$ and
$\Lambda_1\cap\Lambda_2=\emptyset$. Let $\mathcal
I=\{\Lambda_{1},\Lambda_{2}\}\cup\{[R];R\in\Pi\backslash
(\Lambda_{1}\cup\Lambda_{2})\}$, where $[R]=\{R\}$ for all
$R\in\Pi\backslash(\Lambda_{1}\cup\Lambda_{2})$. Clearly,
$\mathcal I$ is a partition of $\Pi$ and is compatible. By Lemma
\ref{comp-code}, $\text{RG}(D^{**})$ is feasible.

Case 2.4: $|\Lambda_1|=|\Lambda_2|=2$ and
$\Lambda_1\cap\Lambda_2\neq\emptyset$. If $\Lambda_1=\Lambda_2$,
then it is easy to construct a code $\tilde{C}_\Pi$ satisfies the
conditions of Lemma \ref{lmd-solv}. So $\text{RG}(D^{**})$ is
feasible. Thus, we can assume $\Lambda_1\neq\Lambda_2$. Then by
proper naming, we can assume $\Lambda_1=\{Q_1,Q_2\}$ and
$\Lambda_2=\{Q_1,Q_3\}$. By Lemma \ref{in-reg-lmd},
$\{Q_1,Q_2\}\nsubseteq\text{reg}(S_{i_1},S_{i_2})$ and
$\{Q_1,Q_3\}\nsubseteq\text{reg}(S_{i_1},S_{i_2})$ for all
$\{i_1,i_2\}\subseteq\{1,2,3\}$. Then one of the following two
cases hold:

Case 2.4.1: $\{Q_2,Q_3\}\subseteq\text{reg}(S_{i_1},S_{i_2})$ for
some $\{i_1,i_2\}\subseteq\{1,2,3\}$. Let $\mathcal
I=\{[Q_1]\}\cup\{[R];R\in\Pi\backslash[Q_1]\}$, where
$[Q_1]=\{Q_1\}\cup\text{reg}(Q_2,Q_3)$ and $[R]=\{R\}$ for all
$R\in\Pi\backslash[Q_1]$. Clearly, $\mathcal I$ is a partition of
$\Pi$ and is compatible. By Lemma \ref{comp-code},
$\text{RG}(D^{**})$ is feasible.

Case 2.4.2: $\{Q_2,Q_3\}\nsubseteq\text{reg}(S_{i_1},S_{i_2})$ for
all $\{i_1,i_2\}\subseteq\{1,2,3\}$. Let $\mathcal
I=\{[Q_1]\}\cup\{[R];R\in\Pi\backslash[Q_1]\}$, where
$[Q_1]=\{Q_1,Q_2,Q_3\}$ and $[R]=\{R\}$ for all
$R\in\Pi\backslash[Q_1]$. Clearly, $\mathcal I$ is a partition of
$\Pi$ and is compatible. By Lemma \ref{comp-code},
$\text{RG}(D^{**})$ is feasible.

By the above discussions, we proved that $\text{RG}(D^{**})$ is
feasible.
\end{proof}


\begin{lem}\label{simple-case}
Suppose $\text{RG}(D^{**})$ has three terminal regions and is
terminal separable. Then $\text{RG}(D^{**})$ is feasible if one of
the following conditions hold:
\begin{itemize}
  \item [(1)] $|\Lambda_{j_1}|\geq 3$ and $|\Lambda_{j_2}|\geq 3$
  for some $\{j_1,j_2\}\subseteq\{1,2,3\}$;
  \item [(2)] For any $\{j_1,j_2\}\subseteq\{1,2,3\}$,
  if $|\Lambda_{j_1}|=|\Lambda_{j_2}|=2$, then
  $\Lambda_{j_1}\cap\Lambda_{j_2}=\emptyset$;
  \item [(3)] $S_i\notin\Lambda_j$ for all $i,j\in\{1,2,3\}$;
  \item [(4)] There is a subset $\{\ell',\ell''\}\subseteq\{1,2,3\}$
  such that $\Lambda_j\cap\text{reg}^\circ(S_{\ell'},S_{\ell''})
  \neq\emptyset$ for all $j\in\{1,2,3\}$.
\end{itemize}
\end{lem}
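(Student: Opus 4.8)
The plan is to reduce each of the four cases to producing either a compatible partition of $\Pi$, so that Lemma \ref{comp-code} applies, or an explicit family $\tilde{C}_\Pi=\{d_R;R\in\Pi\}$ satisfying the three conditions of Lemma \ref{lmd-solv}. Two facts drive everything. First, by Lemma \ref{in-reg-lmd} every $\Lambda_j$ has $|\Lambda_j|\geq 2$ and $\Lambda_j\nsubseteq\text{reg}(S_{i_1},S_{i_2})$ for all $\{i_1,i_2\}\subseteq\{1,2,3\}$. Second, since each non-source region of $D^{**}$ has at least two parents (Definition \ref{BRD}) and each source region has none, $\text{reg}(\{R\})=\{R\}$ for every single region $R$; consequently condition (2) of Definition \ref{cntd} (type-2 connectedness) cannot hold between two singleton subclasses. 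These make the trivial partition $\mathcal I_0$ transparent: its subclasses are singletons, the set $[S_{i_1}]_{i_1}\cup[S_{i_2}]_{i_2}\cup(\cup_{\ell}[R_\ell]_{i_1,i_2})$ equals $\text{reg}(S_{i_1},S_{i_2})$, and so condition (2) of Definition \ref{cmptl} is precisely $\Lambda_j\nsubseteq\text{reg}(S_{i_1},S_{i_2})$, which holds. Hence $\mathcal I_0$ can fail to be compatible only through a type-1 connection, and for singletons that happens only when some $\Lambda_j$ has exactly two elements.

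For case (1), two of the three sets have size at least $3$, so the only one that can have size exactly $2$ is the third; if it does I would merge its two elements into a single class and otherwise keep $\mathcal I_0$. For case (2) the size-$2$ sets are pairwise disjoint, so I would merge each of them into its own class and leave all other regions as singletons; disjointness makes this a genuine partition. In both constructions every class has at most two elements, and since $\Lambda_j\nsubseteq\text{reg}(S_{i_1},S_{i_2})$ the two members of a merged class lie in distinct super-regions $\text{reg}^\circ(S_a,S_b)$ by Equation (\ref{eq-nt-2}), hence in distinct subclasses, so every subclass is again a singleton. Type-2 connectedness therefore remains impossible, every subclass-union has size at most $2$, and a short check over the possible pairs of classes shows that the only type-1 connection that could arise, namely one witnessed by a size-$2$ set, has already been absorbed into a merge; thus no pair is connected. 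Condition (2) of Definition \ref{cmptl} again reduces to $\Lambda_j\nsubseteq\text{reg}(S_{i_1},S_{i_2})$, with the only correction that if a merged class contains a source then its non-source partner is dropped from the relevant $\{i_1,i_2\}$-set, which only shrinks it. The partition is then compatible and Lemma \ref{comp-code} gives feasibility.

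Case (3) I would settle through the character partition (Definition \ref{chpt}) and Theorem \ref{lc-md}. The crux is that a singleton source class $\{S_i\}$ is inert under contraction: a type-1 connection to another class would force $S_i\in\Lambda_m$ for some $m$, contradicting $S_i\notin\Lambda_j$, while a type-2 connection would force $S_i\in\text{reg}([R]_{i_1,i_2})$, hence $S_i\in[R]_{i_1,i_2}\subseteq[R]$ (as $S_i$ is a source and $\text{reg}(\{S_i\})=\{S_i\}$), which is impossible since $[R]\neq\{S_i\}$. Therefore, in the contraction process the source classes stay singletons and never coincide, so the process terminates with no connected pair. In the resulting character partition $\mathcal I_c$ the sources are singletons, whence condition (2) of Definition \ref{cmptl} once more reads $\Lambda_j\nsubseteq\text{reg}(S_{i_1},S_{i_2})$ and holds; thus $\mathcal I_c$ is compatible and $\text{RG}(D^{**})$ is feasible by Theorem \ref{lc-md}.

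Case (4) is the one I would solve by a direct colouring rather than a partition. Naming $\{\ell',\ell''\}=\{2,3\}$, set $d_{S_i}=\alpha_i$, put $d_R=\alpha_2+\alpha_3$ for every $R\in\text{reg}^\circ(S_2,S_3)$, and $d_R=\alpha_1$ for every $R\in\text{reg}^\circ(S_1,S_2)\cup\text{reg}^\circ(S_1,S_3)$; by Equation (\ref{eq-nt-2}) this colours each region of $\Pi$ exactly once. Condition (2) of Lemma \ref{lmd-solv} follows by a topological induction using the two-parent property: a region of $\text{reg}^\circ(S_2,S_3)$ has at least two distinct parents in $\text{reg}(S_2,S_3)$, whose colours (drawn from $\alpha_2,\alpha_3,\alpha_2+\alpha_3$) always contain $\alpha_2+\alpha_3$ in their span, and every region of $\text{reg}^\circ(S_1,S_2)\cup\text{reg}^\circ(S_1,S_3)$ has a parent coloured $\alpha_1$. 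For condition (3), each $\Lambda_j$ meets $\text{reg}^\circ(S_2,S_3)$ by hypothesis, contributing $\alpha_2+\alpha_3$, and being not contained in $\text{reg}(S_2,S_3)$ it also contains a region outside it, necessarily coloured $\alpha_1$; hence $\langle d_R;R\in\Lambda_j\rangle\supseteq\langle\alpha_1,\alpha_2+\alpha_3\rangle\ni\bar{\alpha}$, and Lemma \ref{lmd-solv} yields feasibility. The step I expect to be most delicate is the compatibility bookkeeping in cases (1) and (2)---checking that the merges create no new connected pair and handling the sub-case where a size-$2$ set $\Lambda_j$ contains a source, which turns the merged class into a source class and shifts the subclass memberships entering Definition \ref{cmptl}.
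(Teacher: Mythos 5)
Your proposal is correct. For conditions (1) and (2) it is essentially the paper's own proof: merge each size-$2$ set $\Lambda_j$ into one equivalent class (possible by disjointness), keep everything else as singletons, verify compatibility, and invoke Lemma \ref{comp-code}; your bookkeeping observation that with singleton subclasses condition (2) of Definition \ref{cmptl} collapses to $\Lambda_j\nsubseteq\text{reg}(S_{i_1},S_{i_2})$, which is Lemma \ref{in-reg-lmd}, is exactly what the paper's ``clearly compatible'' claims rest on. Where you genuinely diverge is in conditions (3) and (4). For (3) the paper just exhibits the one-shot partition $\{\{S_1\},\{S_2\},\{S_3\},\,\Pi\setminus\{S_1,S_2,S_3\}\}$ and checks it is compatible, whereas you run the contraction process, show source classes are inert, and apply Theorem \ref{lc-md} to the resulting character partition; your route is heavier but proves the stronger statement that under hypothesis (3) the character partition itself is compatible. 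One sentence there is imprecise: a type-1 connection between $\{S_i\}$ and another class $[R'']$ does not force $S_i\in\Lambda_m$, since Definition \ref{cntd} also allows $\Lambda_m\subseteq[[R'']]$ with the empty subclass of $\{S_i\}$ contributing nothing; that branch must be killed separately by Lemma \ref{in-reg-lmd}, because $[[R'']]\subseteq\text{reg}(S_{i_1},S_{i_2})$ --- a repair you have available, since you invoke exactly this containment argument in your opening paragraph, so this is an imprecision rather than a gap. For (4) the paper uses the partition $\{[S_1],[S_2],[S_3]\}$ with $[S_1]=\text{reg}(S_1,S_2)\cup\text{reg}(S_1,S_3)\cup\text{reg}^\circ(S_2,S_3)$ and again cites Lemma \ref{comp-code}; your explicit assignment ($\alpha_2+\alpha_3$ on $\text{reg}^\circ(S_2,S_3)$, $\alpha_1$ on $\text{reg}^\circ(S_1,S_2)\cup\text{reg}^\circ(S_1,S_3)$) is precisely the code that the proof of Lemma \ref{comp-code} would generate from that partition, but verified directly against Lemma \ref{lmd-solv}, which makes your case (4) self-contained and avoids the machinery of Lemma \ref{genc-code} entirely. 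The trade-off is uniformity versus transparency: the paper disposes of all four cases by the same ``exhibit a compatible partition'' template, while your treatment of (4) shows concretely what the solution looks like and your treatment of (3) connects the hypothesis to the character-partition criterion of Theorem \ref{lc-md}.
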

\begin{proof}
1) Suppose condition (1) holds. Let
$j_3\in\{1,2,3\}\backslash\{j_1,j_2\}$. From Lemma
\ref{in-reg-lmd}, we have
$\Lambda_{j_3}\nsubseteq\text{reg}(S_{i_1},S_{i_2}),
\forall\{i_1,i_2\}\subseteq\{1,2,3\},$ and $|\Lambda_{j_3}|\geq
2$. Then we have the following two cases:

Case 1: $|\Lambda_{j_3}|>2$. Let $\mathcal I_0=\{[R];R\in\Pi\}$,
where $[R]=\{R\}$ for all $R\in\Pi$. Clearly, $\mathcal I_0$ is a
partition of $\Pi$ and is compatible. By Lemma \ref{comp-code},
$\text{RG}(D^{**})$ is feasible.

Case 2: $|\Lambda_{j_3}|=2$. Let $\mathcal
I=\{\Lambda_{j_3}\}\cup\{[R];R\in\Pi\backslash\Lambda_{j_3}\}$,
where $[R]=\{R\}$ for all $R\in\Pi\backslash\Lambda_{j_3}$.
Clearly, $\mathcal I$ is a partition of $\Pi$ and is compatible.
By Lemma \ref{comp-code}, $\text{RG}(D^{**})$ is feasible.

2) Suppose condition (2) holds. Let $A\subseteq\{1,2,3\}$ be such
that $|\Lambda_j|=2, \forall j\in A$, and $|\Lambda_j|>2, \forall
j\in\{1,2,3\}\backslash A$. Let $\mathcal I=\{\Lambda_{j};j\in
A\}\cup\{[R];R\in\Pi\backslash(\cup_{j\in A}\Lambda_{j})\}$, where
$[R]=\{R\}$ for all $R\in\Pi\backslash(\cup_{j\in A}\Lambda_{j})$.
Clearly, $\mathcal I$ is a partition of $\Pi$ and is compatible.
By Lemma \ref{comp-code}, $\text{RG}(D^{**})$ is feasible.

3) Suppose condition (3) holds. Let $\mathcal
I=\{[S_1],[S_2],[S_3]$, $[R]\}$, where $[S_i]=\{S_i\}$ for
$i\in\{1,2,3\}$ and $[R]=\Pi\backslash\{S_1,S_2,S_3\}$. Clearly,
$\mathcal I$ is a partition of $\Pi$ and is compatible. By Lemma
\ref{comp-code}, $\text{RG}(D^{**})$ is feasible.

4) Without loss of generality, assume $\ell=1, \ell'=2$ and
$\ell''=3$. Let $\mathcal I=\{[S_1],[S_2],[S_3]\}$, where
$[S_1]=\text{reg}(S_1,S_2)
\cup\text{reg}(S_1,S_3)\cup\text{reg}^\circ(S_2,S_3)$,
$[S_2]=\{S_2\}$ and $[S_3]=\{S_3\}$. Clearly, $\mathcal I$ is a
partition of $\Pi$ and is compatible. By Lemma \ref{comp-code},
$\text{RG}(D^{**})$ is feasible.
\end{proof}

\begin{lem}\label{sgl-3t}
Suppose $\text{RG}(D^{**})$ has three terminal regions and is
terminal separable. If $\text{RG}(D^{**})$ is not feasible, then
the condition (C-IR) holds.
\end{lem}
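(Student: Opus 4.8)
The plan is to argue by contradiction, extracting the rigid structure of (C-IR) from the simultaneous failure of every sufficient condition in Lemma \ref{simple-case}. Since $\text{RG}(D^{**})$ is terminal separable, has three terminal regions, and is \emph{not} feasible, the contrapositive of Lemma \ref{simple-case} tells us that none of its conditions (1)--(4) can hold; at the same time Lemma \ref{in-reg-lmd} gives $|\Lambda_j|\ge 2$ and $\Lambda_j\nsubseteq\text{reg}(S_{i_1},S_{i_2})$ for every $j\in\{1,2,3\}$ and every $\{i_1,i_2\}\subseteq\{1,2,3\}$. From the failure of condition (1) at most one $\Lambda_j$ can have size exceeding $2$, so at least two of them have size exactly $2$; from the failure of condition (2) two such size-$2$ sets must meet, and after proper naming I take these to be $\Lambda_1=\{P_1,A_1\}$ and $\Lambda_2=\{P_1,A_2\}$ with a common element $P_1$.

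First I would pin down the locations of $P_1$, $A_1$ and $A_2$. By (\ref{eq-nt-1}) and (\ref{eq-nt-2}) the set $\Pi$ splits into the three sources together with the mutually disjoint interiors $\text{reg}^\circ(S_1,S_2)$, $\text{reg}^\circ(S_1,S_3)$, $\text{reg}^\circ(S_2,S_3)$, and each source $S_i$ lies only in the two closed regions $\text{reg}(S_i,S_j)$. Because neither $\Lambda_1$ nor $\Lambda_2$ is contained in a single $\text{reg}(S_{i_1},S_{i_2})$, the two members of each must straddle this decomposition, leaving only finitely many configurations. For every configuration that does not match (C-IR) the strategy is to exhibit a compatible partition of $\Pi$ and invoke Lemma \ref{comp-code} to contradict non-feasibility: grouping a source with the interior points it reaches typically produces equivalence classes, no two of which are \emph{connected} in the sense of Definition \ref{cntd}, and for which condition (2) of Definition \ref{cmptl} also holds (for instance, a ``source-shared'' case $P_1=S_1$ with $A_1,A_2\in\text{reg}^\circ(S_2,S_3)$ collapses into the single class $\{S_1\}\cup\text{reg}^\circ(S_2,S_3)$, forcing feasibility). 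Running this elimination, together with the failure of condition (3) (which forces some source to lie in some $\Lambda_j$) and the failure of condition (4) applied to the pair $\{2,3\}$ (which forces $\Lambda_3\cap\text{reg}^\circ(S_2,S_3)=\emptyset$ once $P_1$ is interior to $\text{reg}^\circ(S_2,S_3)$), should collapse the admissible picture to $P_1\in\text{reg}^\circ(S_2,S_3)$, $\Lambda_1=\{S_1,P_1\}$, and $\Lambda_2=\{P_1,P_2\}$ with $P_2\in\text{reg}^\circ(S_1,S_2)$.

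It then remains to prove the containment $\Lambda_3\subseteq\text{reg}(S_1,P_2)\cup\text{reg}(S_1,S_3)$. I would again proceed by contradiction: assume some $Q\in\Lambda_3$ lies outside $\text{reg}(S_1,P_2)\cup\text{reg}(S_1,S_3)$; since we already know $Q\notin\text{reg}^\circ(S_2,S_3)$, the point $Q$ must sit in $\text{reg}(S_1,S_2)\setminus\text{reg}(S_1,P_2)$ or be $S_2$. In each such case I would assemble a partition of $\Pi$ whose classes merge $S_1$ with $P_2$, $P_1$, and the regions they generate in a way that keeps $S_1,S_2,S_3$ in distinct classes and leaves no connected pair, so that Lemma \ref{comp-code} once more yields feasibility, the desired contradiction. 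The point is that failure of this containment is exactly what a compatible partition can repair, so its impossibility under non-feasibility is what pins down (C-IR).

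The step I expect to be the main obstacle is this final containment together with the configuration elimination of the second paragraph: both require that, for every candidate counterexample, one constructs an explicit partition and then verifies compatibility, i.e. checks the two clauses of Definition \ref{cntd} for all pairs of classes and condition (2) of Definition \ref{cmptl} for all $j\in\{1,2,3\}$. Keeping consistent track of which interior points each class reaches under $\text{reg}(\cdot)$ across all the branches --- so that no overlooked $\Lambda_j$ silently connects two classes --- is where the delicacy lies; once the bookkeeping is controlled, the rest is a mechanical application of Lemma \ref{comp-code} and the decomposition recorded in (\ref{eq-nt-1}) and (\ref{eq-nt-2}).
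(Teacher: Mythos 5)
Your overall strategy is the paper's own: assume non-feasibility, invoke Lemma \ref{in-reg-lmd} and the contrapositive of Lemma \ref{simple-case} to force two intersecting $\Lambda$-sets of size two, and then eliminate every configuration other than (C-IR) by exhibiting a compatible partition and applying Lemma \ref{comp-code}. However, there is a genuine gap: the case elimination \emph{is} the proof, and you defer essentially all of it ("running this elimination \ldots should collapse the admissible picture"), while the one configuration you do work out concretely is handled incorrectly. In the "source-shared" case $\Lambda_1=\{S_1,A_1\}$, $\Lambda_2=\{S_1,A_2\}$ with $A_1,A_2\in\text{reg}^\circ(S_2,S_3)$, the partition consisting of the class $\{S_1\}\cup\text{reg}^\circ(S_2,S_3)$ with all other regions as singletons need not be compatible: if, say, $\Lambda_3=\{S_2,Q\}$ with $Q\in\text{reg}^\circ(S_1,S_3)$ (a placement allowed by Lemma \ref{in-reg-lmd}), then $\Lambda_3\subseteq[S_2]_2\cup[Q]_{1,3}$, so $[S_2]$ and $[Q]$ are connected by condition (1) of Definition \ref{cntd} and condition (1) of Definition \ref{cmptl} fails. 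The paper's Case 2.1.1 must instead use a partition tailored to the location of $\Lambda_{j_3}$, namely $[S_{\ell_1}]=\{S_{\ell_1}\}\cup\text{reg}(P_1,P_2)$ together with a second merged class $\{Q_1\}\cup\text{reg}(S_{\ell_1},S_{\ell_3})\setminus\{S_{\ell_1}\}$. This is exactly the kind of dependence your proposal dismisses as bookkeeping; it is why each of the paper's dozen-odd subcases (1.1.x, 1.2.x, 2.1.x, 2.2.x) carries its own partition construction, and why two of them (1.2.4 and 2.2.3 via the containment argument) terminate in (C-IR) rather than in feasibility.

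A second, smaller omission: your setup $\Lambda_1=\{P_1,A_1\}$, $\Lambda_2=\{P_1,A_2\}$ does not exclude $A_1=A_2$, i.e.\ $\Lambda_1=\Lambda_2$. This case is incompatible with the target structure of (C-IR) (which forces $\Lambda_1\neq\Lambda_2$), and it cannot be eliminated by the compatible-partition device in your second paragraph; the paper dispatches it separately at the very start of its proof by reducing to the two-terminal situation and invoking Lemma \ref{3s-2t}. Without that step, your elimination does not even begin with the right list of configurations.
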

\begin{proof}
If $\Lambda_{j_1}=\Lambda_{j_2}$ for some
$\{j_1,j_2\}\subseteq\{1,2,3\}$, then by lemma \ref{3s-2t}, we can
construct a code $\tilde{C}_{\Pi}$ satisfies the conditions of
Lemma \ref{lmd-solv}. So $\text{RG}(D^{**})$ is feasible. Thus, we
assume $\Lambda_1,\Lambda_2$ and $\Lambda_3$ are mutually
different. Since $\text{RG}(D^{**})$ is not feasible, then by (1),
(2) of Lemma \ref{simple-case}, there is a
$\{j_1,j_2\}\subseteq\{1,2,3\}$ such that
\begin{align}
|\Lambda_{j_1}|=|\Lambda_{j_2}|=2 ~\text{and}~
|\Lambda_{j_1}\cap\Lambda_{j_2}|=1.\label{eq-sgl-3t-1}
\end{align}
Let $j_3\in\{1,2,3\}\backslash\{j_1,j_2\}$. By enumerating, we can
divide our discussion into the following cases:

Case 1:
$\Lambda_{j_1}\cup\Lambda_{j_2}\subseteq\text{reg}^\circ(S_1,S_2)\cup
\text{reg}^\circ(S_1,S_3)\cup\text{reg}^\circ(S_2,S_3).$ By
(\ref{eq-sgl-3t-1}), we can assume
\begin{align}
\Lambda_{j_1}=\{P_1,P_2\} ~\text{and}~
\Lambda_{j_2}=\{P_0,P_2\}.\label{eq-sgl-3t-4}
\end{align}
By Lemma \ref{in-reg-lmd},
$\Lambda_{j_1}\nsubseteq\text{reg}(S_{i_1},S_{i_2}),
\forall\{i_1,i_2\}\subseteq\{1,2,3\}$. Then by proper naming, we
can assume
\begin{align}
P_2\in\text{reg}^\circ(S_{\ell_1},S_{\ell_2}) ~\text{and}~
P_1\in\text{reg}^\circ(S_{\ell_2},S_{\ell_3}).\label{eq-sgl-3t-5}
\end{align}
where $\{\ell_1,\ell_2,\ell_3\}$ is a fixed permutation of
$\{1,2,3\}$. Also, by Lemma \ref{in-reg-lmd},
$\Lambda_{j_2}\nsubseteq\text{reg}(S_{i_1},S_{i_2}),
\forall\{i_1,i_2\}\subseteq\{1,2,3\}$. Then for $P_0$, we have the
following subcases:

Case 1.1: $P_0\in\text{reg}^\circ(S_{\ell_2},S_{\ell_3})$. We can
further divide this case into the following two subcases:

Case 1.1.1:
$\Lambda_{j_3}\cap(\text{reg}^\circ(S_{\ell_1},S_{\ell_2})
\cup\text{reg}^\circ(S_{\ell_2},S_{\ell_3}))\neq\emptyset$. By (4)
of Lemma \ref{simple-case}, $\text{RG}(D^{**})$ is feasible.

Case 1.1.2:
$\Lambda_{j_3}\cap(\text{reg}^\circ(S_{\ell_1},S_{\ell_2})
\cup\text{reg}^\circ(S_{\ell_2},S_{\ell_3}))=\emptyset$. Then
$\Lambda_{j_3}\subseteq\text{reg}(S_{\ell_1},S_{\ell_3})\cup\{S_{\ell_2}\}$.
Moreover, since by Lemma \ref{in-reg-lmd},
$\Lambda_{j_3}\nsubseteq\text{reg}(S_{i_1},S_{i_2}),\forall
\{i_1,i_2\}\subseteq\{1,2,3\}$, then either
$\Lambda_{j_3}=\{S_1,S_2,S_3\}$ or
$\{Q,S_{\ell_2}\}\subseteq\Lambda_{j_3}$ for some
$Q\in\text{reg}^\circ(S_{\ell_1},S_{\ell_3})$.

Let $\mathcal I=\{[S_{\ell_1}],[S_{\ell_2}],[S_{\ell_3}],[P_2]\}$,
where $[S_{\ell_1}]=\{S_{\ell_1}\}$,
$[S_{\ell_2}]=\{S_{\ell_2}\}\cup\text{reg}^\circ(S_{\ell_1},S_{\ell_3})$,
$[S_{\ell_3}]=\{S_{\ell_3}\}$ and
$[P_2]=\text{reg}^\circ(S_{\ell_1},
S_{\ell_2})\cup\text{reg}^\circ(S_{\ell_2},S_{\ell_3})$. Clearly,
$\mathcal I$ is a partition of $\Pi$ and is compatible. By Lemma
\ref{comp-code}, $\text{RG}(D^{**})$ is feasible.

Case 1.2: $P_0\in\text{reg}^\circ(S_{\ell_1},S_{\ell_3})$. This
case can be further divided into the following subcases:

Case 1.2.1: $|\Lambda_{j_3}|=3$ or
$\Lambda_{j_3}\subseteq\{P_0,P_1,P_2\}$. Let $\mathcal
I=\{[P_2]\}\cup\{[R]; R\in\Pi\backslash[P_0]\}$, where
$[P_2]=\{P_0,P_1,P_2\}$ and $[R]=\{R\}, \forall
R\in\Pi\backslash[P_2]$. Clearly, $\mathcal I$ is a partition of
$\Pi$ and is compatible. By Lemma \ref{comp-code},
$\text{RG}(D^{**})$ is feasible.

Case 1.2.2: $|\Lambda_{j_3}|=2$ and
$\Lambda_{j_3}\cap\{P_0,P_1,P_2\}=\emptyset$. Assume
$\Lambda_{j_3}=\{P_3,P_4\}$. Let $\mathcal
I=\{[P_2],[P_3]\}\cup\{[R]; R\in\Pi\backslash([P_2]\cup[P_3])\}$,
where $[P_2]=\{P_0,P_1,P_2\}, [P_3]=\{P_3,P_4\}$ and $[R]=\{R\},
\forall R\in\Pi\backslash([P_2]\cup[P_3])$. Clearly, $\mathcal I$
is a partition of $\Pi$ and is compatible. By Lemma
\ref{comp-code}, $\text{RG}(D^{**})$ is feasible.

Case 1.2.3: $|\Lambda_{j_3}|=2$ and
$\Lambda_{j_3}\cap\{P_0,P_1,P_2\}=\{P_2\}$. By (4) of Lemma
\ref{simple-case}, $\text{RG}(D^{**})$ is feasible.

Case 1.2.4: $|\Lambda_{j_3}|=2$ and
$\Lambda_{j_3}\cap\{P_0,P_1,P_2\}=\{P_1\}~($or $\{P_0\})$. By
proper naming, we can assume $\Lambda_{j_3}=\{P_1,P_3\}$, where
$P_3\notin\{P_0,P_1,P_2\}$. If $P_3\neq S_\ell,
\forall\ell\in\{1,2,3\}$, then by (3) of Lemma \ref{simple-case},
$\text{RG}(D^{**})$ is feasible. So we assume $P_3=S_\ell$ for
some $\ell\in\{1,2,3\}$. Since
$P_1\in\text{reg}^\circ(S_{\ell_2},S_{\ell_3})$ and, by Lemma
\ref{in-reg-lmd},
$\Lambda_{j_3}=\{P_1,P_3\}\nsubseteq\text{reg}(S_{i_1},S_{i_2}),
\forall \{i_1,i_2\}\subseteq\{1,2,3\}$, then $P_3=S_{\ell_1}$. Let
$j_3=1, j_1=2, j_2=3$ and $\ell_i=i~ (i=1,2,3)$. Then the
condition (C-IR) holds.

Case 2: There is an $\ell_1\in\{1,2,3\}$ such that
$S_{\ell_1}\in\Lambda_{j_1}\cup\Lambda_{j_2}$. Let
$\{\ell_2,\ell_3\}=\{1,2,3\}\backslash\{\ell_1\}$. We can further
divide this case into the following subcases:

Case 2.1: $S_{\ell_1}\in\Lambda_{j_1}\cap\Lambda_{j_2}$. By proper
naming, we assume
\begin{align}
\Lambda_{j_1}=\{S_{\ell_1},P_1\} ~\text{and}~
\Lambda_{j_2}=\{S_{\ell_1},P_2\}.\label{eq-sgl-3t-6}
\end{align}
Since, by Lemma \ref{in-reg-lmd}, $\Lambda_{j_1}$,
$\Lambda_{j_2}\nsubseteq\text{reg}(S_{i_1},S_{i_2}),
\forall\{i_1,i_2\}\subseteq\{1,2,3\}$, then we have
\begin{align}
P_1,P_2\in\text{reg}^\circ(S_{\ell_2},S_{\ell_3}).\label{eq-sgl-3t-7}
\end{align}
If
$\Lambda_{j_3}\cap\text{reg}^\circ(S_{\ell_2},S_{\ell_3})\neq\emptyset$,
then by (4) of Lemma \ref{simple-case}, $\text{RG}(D^{**})$ is
feasible. So we assume
$\Lambda_{j_3}\cap\text{reg}^\circ(S_{\ell_2},S_{\ell_3})=\emptyset$.
Then
\begin{align}
\Lambda_{j_3}\subseteq\text{reg}(S_{\ell_1},S_{\ell_2})\cup
\text{reg}(S_{\ell_1},S_{\ell_3}). \label{eq-sgl-3t-8}
\end{align}
We have the following two subcases:

Case 2.1.1:
$\Lambda_{j_3}\cap(\text{reg}^\circ(S_{\ell_1},S_{\ell_2})\cup
\text{reg}^\circ(S_{\ell_1},S_{\ell_3}))\neq\emptyset$. Without
loss of generality, assume
$Q_1\in\Lambda_{j_3}\cap\text{reg}^\circ(S_{\ell_1},S_{\ell_2})$.
Since, by Lemma \ref{in-reg-lmd},
$\Lambda_{j_3}\nsubseteq\text{reg}(S_{i_1},S_{i_2}),\forall
\{i_1,i_2\}\subseteq\{1,2,3\}$, then there is a
$Q_2\in\text{reg}(S_{\ell_1},S_{\ell_3})\backslash\{S_{\ell_1}\}$
such that $Q_2\in\Lambda_{j_3}$. Let $\mathcal I=\{[S_{\ell_1}],
[S_{\ell_3}]\}\cup\{[R];
R\in\Pi\backslash([S_{\ell_1}]\cup[S_{\ell_3}])\}$, where
$[S_{\ell_1}]=\{S_{\ell_1}\}\cup\text{reg}(P_1,P_2)$,
$[S_{\ell_3}]=\{Q_1\}\cup\text{reg}(S_{\ell_1},
S_{\ell_3})\backslash\{S_{\ell_1}\}$ and $[R]=\{R\}, \forall
R\in\Pi\backslash([S_{\ell_1}]\cup[S_{\ell_3}])$. Clearly,
$\mathcal I$ is a partition of $\Pi$ and is compatible. By Lemma
\ref{comp-code}, $\text{RG}(D^{**})$ is feasible.

Case 2.1.2:
$\Lambda_{j_3}\cap(\text{reg}^\circ(S_{\ell_1},S_{\ell_2})\cup
\text{reg}^\circ(S_{\ell_1},S_{\ell_3}))=\emptyset$. By
(\ref{eq-sgl-3t-8}), $\Lambda_{j_3}\subseteq\{S_1,S_2,S_3\}$.
Since, by Lemma \ref{in-reg-lmd},
$\Lambda_{j_3}\nsubseteq\text{reg}(S_{i_1},S_{i_2}),
\forall\{i_1,i_2\}\subseteq\{1,2,3\}$, then
$\Lambda_{j_3}=\{S_1,S_2,S_3\}$. Let $\mathcal
I=\{[S_{\ell_1}]\}\cup\{[R]; R\in\Pi\backslash[S_{\ell_1}]\}$,
where $[S_{\ell_1}]=\{S_{\ell_1}\}\cup\text{reg}(P_1,P_2)$ and
$[R]=\{R\}, \forall R\in\Pi\backslash[S_{\ell_1}]$. Clearly,
$\mathcal I$ is a partition of $\Pi$ and is compatible. By Lemma
\ref{comp-code}, $\text{RG}(D^{**})$ is feasible.

Case 2.2: $S_{\ell_1}\notin\Lambda_{j_1}\cap\Lambda_{j_2}$. Since
$S_{\ell_1}\in\Lambda_{j_1}\cup\Lambda_{j_2}$, then by
(\ref{eq-sgl-3t-1}) and proper naming, we can assume
$\Lambda_{j_1}=\{S_{\ell_1},P_1\},\Lambda_{j_2}=\{P_{1},P_2\}$.
Since, by Lemma \ref{in-reg-lmd}, $\Lambda_{j_1},
\Lambda_{j_2}\nsubseteq\text{reg}(S_{i_1},S_{i_2}),
\forall\{i_1,i_2\}\subseteq\{1,2,3\}$, then
$P_1\in\text{reg}^\circ(S_{\ell_2},S_{\ell_3})$ and, by proper
naming, we can assume
$P_2\in\text{reg}^\circ(S_{\ell_1},S_{\ell_2})$, where
$\{\ell_2,\ell_3\}=\{1,2,3\}\backslash\{\ell_1\}$. Let
$j_3\in\{1,2,3\}\backslash\{j_1,j_2\}$. If
$\Lambda_{j_3}\cap\text{reg}^\circ(S_{\ell_2},S_{\ell_3})\neq\emptyset$,
then by (4) of Lemma \ref{simple-case}, $\text{RG}(D^{**})$ is
feasible. So we assume
$\Lambda_{j_3}\cap\text{reg}^\circ(S_{\ell_2},S_{\ell_3})=\emptyset$.
Then
\begin{align}
\Lambda_{j_3}\subseteq\text{reg}(S_{\ell_1},S_{\ell_2})\cup
\text{reg}(S_{\ell_1},S_{\ell_3}). \label{eq-sgl-3t-2}
\end{align}
Now, suppose
\begin{align}
\Lambda_{j_3}\nsubseteq\text{reg}(S_{\ell_1},P_2)\cup\text{reg}(S_{\ell_1},
S_{\ell_3}). \label{eq-sgl-3t-3}
\end{align}
We shall prove $\text{RG}(D^{**})$ is feasible. We have the
following three subcases:

Case 2.2.1: $\Lambda_{j_3}\cap\text{reg}(S_{\ell_1},P_2)\neq
\emptyset$. Since, by Lemma \ref{in-reg-lmd},
$\Lambda_{j_3}\nsubseteq\text{reg}(S_{i_1},S_{i_2})$ for all
$\{i_1,i_2\}\subseteq\{1,2,3\}$, then by (\ref{eq-sgl-3t-2}),
$\Lambda_{j_3}\cap(\text{reg}(S_{\ell_1},S_{\ell_3})
\backslash\{S_{\ell_1}\})\neq\emptyset.$ Moreover, by
(\ref{eq-sgl-3t-3}),
$\Lambda_{j_3}\cap(\text{reg}(S_{\ell_1},S_{\ell_2})
\backslash\text{reg}(S_{\ell_1},P_2))\neq\emptyset$. Let $\mathcal
I=\{[S_{\ell_1}]\}\cup\{[R]; R\in\Pi\backslash[S_{\ell_1}]\}$,
where $[S_{\ell_1}]=\text{reg}(S_{\ell_1},P_2)\cup\{P_1\}$ and
$[R]=\{R\}, \forall R\in\Pi\backslash[S_{\ell_1}]\}$. Then
$\mathcal I$ is a partition of $\Pi$ and is compatible. By Lemma
\ref{comp-code}, $\text{RG}(D^{**})$ is feasible.

Case 2.2.2:
$\Lambda_{j_3}\cap\text{reg}(S_{\ell_1},P_2)=\emptyset$ and
$|\Lambda_{j_3}|\geq 3.$ As in Case 2.2.1, we can prove $\mathcal
S$ is regular.

Case 2.2.3:
$\Lambda_{j_3}\cap\text{reg}(S_{\ell_1},P_2)=\emptyset$ and
$|\Lambda_{j_3}|=2.$ Since, by Lemma \ref{in-reg-lmd},
$\Lambda_{j_3}\nsubseteq\text{reg}(S_{i_1},S_{i_2}),\forall
\{i_1,i_2\}\subseteq\{1,2,3\}$. Then by Equations
(\ref{eq-sgl-3t-2}), (\ref{eq-sgl-3t-3}) and proper naming, we can
assume $\Lambda_{j_3}=\{P_3,P_4\}$, where
$P_3\in\text{reg}(S_{\ell_1},S_{\ell_2})
\backslash(\text{reg}(S_{\ell_1},P_2)\cup\{S_{\ell_2}\})$ and
$P_4\in\text{reg}(S_{\ell_1},S_{\ell_3})
\backslash\{S_{\ell_1}\}$. Let $\mathcal
I=\{[S_{\ell_1}],[P_3]\}\cup\{[R];
R\in\Pi\backslash[S_{\ell_1}]\cup[P_3]\}$, where
$[S_{\ell_1}]=\text{reg}(S_{\ell_1},P_2)\cup\{P_1\}$,
$[P_3]=\{P_3,P_4\}$ and $[R]=\{R\}, \forall
R\in\Pi\backslash[S_{\ell_1}]\cup[P_3]\}$. Then $\mathcal I$ is a
partition of $\Pi$ and is compatible. By Lemma \ref{comp-code},
$\text{RG}(D^{**})$ is feasible.

So for case 2.2, if $\text{RG}(D^{**})$ is not feasible, then
$\Lambda_{j_3}\subseteq\text{reg}(S_{\ell_1},P_2)
\cup\text{reg}(S_{\ell_1}, S_{\ell_3})$. Let
$\ell_i=j_i=i,i=1,2,3$. Then the condition (C-IR) holds.

Combining the discussions for all cases above, we can conclude
that if $\text{RG}(D^{**})$ is not feasible, then the condition
(C-IR) holds.
\end{proof}

\begin{lem}\label{sgl-3t-infsb}
Suppose $\text{RG}(D^{**})$ has three terminal regions and is
terminal separable. If the condition (C-IR) holds, then
$\text{RG}(D^{**})$ is not feasible.
\end{lem}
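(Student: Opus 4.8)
The plan is to argue by contradiction: assume $\text{RG}(D^{**})$ is feasible, so by Lemma \ref{lmd-solv} there is a collection $\tilde{C}_\Pi=\{d_R; R\in\Pi\}\subseteq\mathbb F^3$ satisfying conditions (1)--(3) of that lemma, and then derive a contradiction from condition (3) applied to $\Lambda_3$. The key idea is that conditions (3) for $j=1$ and $j=2$ rigidly pin down the encoding vectors $d_{P_1}$ and $d_{P_2}$, forcing $d_{P_2}$ onto the $\alpha_1$-axis; this in turn collapses the super region $\text{reg}(S_1,P_2)$ so tightly that every vector attached to $\Lambda_3$ is trapped in $\langle\alpha_1,\alpha_3\rangle$, which cannot span $\bar{\alpha}$.

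First I would exploit the memberships $P_1\in\text{reg}^\circ(S_2,S_3)$ and $P_2\in\text{reg}^\circ(S_1,S_2)$. Since $\tilde{C}_\Pi$ satisfies conditions (1)--(2) of Lemma \ref{lmd-solv} and the relevant super regions lie inside $\Pi$, Remark \ref{rem-g-reg-code} gives $d_{P_1}\in\langle\alpha_2,\alpha_3\rangle$ and $d_{P_2}\in\langle\alpha_1,\alpha_2\rangle$. Applying condition (3) with $j=1$ and $\Lambda_1=\{S_1,P_1\}$, the requirement $\bar{\alpha}\in\langle\alpha_1,d_{P_1}\rangle$ forces $d_{P_1}$ to be a nonzero scalar multiple of $\alpha_2+\alpha_3$. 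Next, writing $d_{P_2}=e\alpha_1+f\alpha_2$ and imposing condition (3) with $j=2$ and $\Lambda_2=\{P_1,P_2\}$, a short coordinate comparison of $\bar{\alpha}=\lambda d_{P_1}+\mu d_{P_2}$ against the basis $\alpha_1,\alpha_2,\alpha_3$ forces $f=0$ and $e\neq0$; that is, $d_{P_2}$ is a nonzero multiple of $\alpha_1$.

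With $d_{P_2}\in\langle\alpha_1\rangle$ in hand, I would close the argument on $\Lambda_3$. Because $d_{S_1}=\alpha_1$ and $d_{P_2}\in\langle\alpha_1\rangle$, Remark \ref{rem-g-reg-code} applied to $\Theta=\{S_1,P_2\}$ yields $d_R\in\langle\alpha_1\rangle$ for every $R\in\text{reg}(S_1,P_2)$, while $\Theta=\{S_1,S_3\}$ yields $d_R\in\langle\alpha_1,\alpha_3\rangle$ for every $R\in\text{reg}(S_1,S_3)$. Since (C-IR) guarantees $\Lambda_3\subseteq\text{reg}(S_1,P_2)\cup\text{reg}(S_1,S_3)$, it follows that $\langle d_R; R\in\Lambda_3\rangle\subseteq\langle\alpha_1,\alpha_3\rangle$, so $\bar{\alpha}\notin\langle d_R; R\in\Lambda_3\rangle$, contradicting condition (3) of Lemma \ref{lmd-solv} for $j=3$. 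Hence no such $\tilde{C}_\Pi$ exists and $\text{RG}(D^{**})$ is not feasible. I expect the main obstacle to be the rank-two span computation showing that $d_{P_2}$ has no $\alpha_2$-component (the vanishing of $f$), together with the bookkeeping needed to justify that each super region invoked is contained in $\Pi$ so that Remark \ref{rem-g-reg-code} legitimately applies.
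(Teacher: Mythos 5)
Your proposal is correct and follows essentially the same argument as the paper's proof: both pin down $d_{P_1}\in\langle\alpha_2+\alpha_3\rangle$ via $\Lambda_1$, deduce $d_{P_2}\in\langle\alpha_1\rangle$ via $\Lambda_2$ (the paper computes this as $\langle d_{P_1},\bar{\alpha}\rangle\cap\langle\alpha_1,\alpha_2\rangle$, which is equivalent to your coordinate comparison), and then show that (C-IR) traps $\langle d_R; R\in\Lambda_3\rangle$ inside $\langle\alpha_1,\alpha_3\rangle$, contradicting condition (3) of Lemma \ref{lmd-solv}. The bookkeeping point you flag is indeed harmless, since $\text{reg}(S_1,P_2)\subseteq\text{reg}(S_1,S_2)\subseteq\Pi$ and similarly for the other super regions, so condition (2) of Lemma \ref{lmd-solv} supports the inductive span argument throughout.
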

\begin{proof}
We prove this lemma by contradiction. For this purpose, we suppose
$\text{RG}(D^{**})$ is feasible and the condition (C-IR) holds.
Then there is a code
$\tilde{C}_\Pi=\{d_R;R\in\Pi\}\subseteq\mathbb F^3$ satisfying
conditions of Lemma \ref{lmd-solv}. Since
$P_1\in\text{reg}^\circ(S_2,S_3)$, then by Definition \ref{g-reg}
and condition (2) of Lemma \ref{lmd-solv}, we have
$$d_{P_1}\in\langle\alpha_{2},\alpha_{3}\rangle.$$ Moreover,
since $\Lambda_1=\{S_1,P_1\}$, then by conditions (1), (3) of
Lemma \ref{lmd-solv}, we have
$\bar{\alpha}\in\langle\alpha_1,d_{P_1}\rangle$. So
$$d_{P_1}\in\langle\alpha_1,\bar{\alpha}\rangle\cap\langle\alpha_{2},
\alpha_{3}\rangle=\langle\alpha_{2}+\alpha_{3}\rangle.$$
Similarly, since $\Lambda_2=\{P_1,P_2\}$ and
$P_2\in\text{reg}^\circ(S_1,S_2)$, then
$$d_{P_2}\in\langle d_{P_1},\bar{\alpha}\rangle\cap\langle\alpha_{1},
\alpha_{2}\rangle=\langle\alpha_{1}\rangle.$$ By Definition
\ref{g-reg} and condition (2) of Lemma \ref{lmd-solv},
$d_R\in\langle\alpha_{1}\rangle$ for all
$R\in\text{reg}(S_{1},P_2)$ and $d_R\in\langle\alpha_{1},
\alpha_{3}\rangle$ for all $R\in\text{reg}(S_{1},S_{3})$. Since
$\Lambda_3\subseteq\text{reg}(S_1,P_2)\cup\text{reg}(S_1,S_3)$,
then $$\langle d_R;
R\in\Lambda_3\rangle\subseteq\langle\alpha_{1},
\alpha_{3}\rangle.$$ By condition (3) of Lemma \ref{lmd-solv}, we
have $\bar{\alpha}\in\langle
d_R;R\in\Lambda_3\rangle\subseteq\langle\alpha_{1},
\alpha_{3}\rangle$, a contradiction. Thus, we can conclude that if
the condition (C-IR) holds, then $\text{RG}(D^{**})$ is not
feasible.
\end{proof}

Now, we can prove Theorem \ref{3s-3t}.
\begin{proof}[Proof of Theorem \ref{3s-3t}]
By enumerating, one of the following three cases hold:

Case 1: $\Omega_{1,2,3}\neq\emptyset$. Then there is a $Q\in
D^{**}\backslash\Pi$ such that $Q\rightarrow T_i, i=1,2,3$.
Similar to what we did in Remark \ref{non-ts}, we can first
construct a code on the set $\{R\in D^{**}; R\rightarrow Q\}$ such
that $d_Q=\bar{\alpha}$. Then for all $R$ such that $Q\rightarrow
R\rightarrow T_i$ for some $i\in\{1,2,3\}$, let
$d_R=\bar{\alpha}$. By this construction, we obtain a solution of
$\text{RG}(D^{**})$. So $\text{RG}(D^{**})$ is feasible.

Case 2: $\Omega_{1,2,3}=\emptyset$ and
$\Omega_{i_1,i_2}\neq\emptyset$ for some
$\{i_1,i_2\}\subseteq\{1,2,3\}$. Then there is a $Q\in
D^{**}\backslash\Pi$ such that $Q\rightarrow T_{i_1}$ and
$Q\rightarrow T_{i_2}$. Let
$i_3\in\{1,2,3\}\backslash\{i_1,i_2\}$. We can view $T_{i_3}$ and
$Q$ as two terminal regions and, by Lemma \ref{3s-2t}, we can
construct a code on the set $\{R\in D^{**}; R\rightarrow Q~
\text{or}~ R\rightarrow T_{i_3}\}$ such that
$d_Q=d_{T_{i_3}}=\bar{\alpha}$. Moreover, for all $R$ such that
$Q\rightarrow R\rightarrow T_{i_1}$ or $Q\rightarrow R\rightarrow
T_{i_2}$, let $d_R=\bar{\alpha}$. Then we obtain a solution of
$\text{RG}(D^{**})$. So $\text{RG}(D^{**})$ is feasible.

Case 3: $\text{RG}(D^{**})$ is terminal separable. By Lemma
\ref{sgl-3t} and \ref{sgl-3t-infsb}, $\text{RG}(D^{**})$ is not
feasible if and only if, by proper naming, the condition (C-IR)
holds.

By the above discussion, we proved Theorem \ref{3s-3t}.
\end{proof}

\begin{thebibliography}{1}
\bibitem{Ahlswede00}
R. Ahlswede, N. Cai, S.-Y. R. Li, and R. W. Yeung, ``Network
information flow,'' \emph{IEEE Trans. Inf. Theory}, vol. 46, no.
4, pp. 1204-1216, Jul. 2000.

\bibitem{Li03}
S.-Y. R. Li, R. W. Yeung, and N. Cai, ``Linear network coding,''
\emph{IEEE Trans. Inf. Theory}, vol. 49, no. 2, pp. 371-381,
Feb.2003.

\bibitem{Rama08}
A. Ramamoorthy, ``Communicating the sum of sources over a
network,'' in \emph{Proc ISIT}, Toronto, Canada, July 06-11, pp.
1646-1650, 2008.

\bibitem{Rai091}
B. K. Rai, B. K. Dey, and A. Karandikar, ``Some results on
communicating the sum of sources over a network,'' in \emph{Proc
NetCod} 2009.



\bibitem{Langberg09}
M. Langberg and A. Ramamoorthy, ``Communicating the sum of sources
in a 3-sources/3-terminals network,'' in \emph{Proc ISIT}, Seoul,
Korea, 2009.



\bibitem{Shenvi10}
S. Shenvi and B. K. Dey, ``A necessary and sufficient condition
for solvability of a 3s/3t sum network,''  in \emph{Proc ISIT},
Texas, U.S.A, 2010.

\bibitem{Rai12}
B. K. Rai and B. K. Dey, ``On Network Coding for Sum-Networks,''
\emph{IEEE Trans. Inf. Theory}, vol. 58, no. 1, pp. 50-63, Jan.
2012.

\bibitem{Rai13}
B. K. Rai and N. Das, ``Sum-Networks: Min-Cut=2 Does Not Guarantee
Solvability,'' \emph{IEEE Communications Letters}, vol. 17, no.
11, pp. 2144-2147, Nov. 2013.


\bibitem{Giridhar}
A. Giridhar and P. R. Kumar, ``Computing and communicating
functions over sensor networks,'' \emph{IEEE J. Select. Areas
Commun.}, vol. 23, no. 4, pp. 755-764, 2005.

\bibitem{Kanoria}
Y. Kanoria and D. Manjunath, ``On distributed computation in noisy
random planar networks,'' in \emph{Proc. ISIT}, Nice, France,
2008.

\bibitem{Appuswamy}
R. Appuswamy, M. Franceschetti, N. Karamchandani, and K. Zeger,
¡°Network coding for computing: cut-set bounds,¡± \emph{IEEE
Trans. Inf. Theory}, vol. 57, no. 2, pp. 1015-1030, Feb. 2011.

\bibitem{Kannan}
S. Kannan and P. Viswanath, ¡°Multi-session function computation
and multicasting in undirected graphs,¡± \emph{IEEE J. Select.
Areas Commun.}, vol. 31, no. 4, pp. 702-713, 2013.

\bibitem{Fragouli06}
C. Fragouli and E. Soljanin,``Information flow decomposition for
network coding,'' \emph{IEEE Trans. Inf. Theory}, vol. 52, no. 3,
pp. 829-848, Mar. 2006.

\bibitem{Wentu11}
W. Song, K. Cai, R. Feng and C. Yuen, ``The Complexity of Network
Coding With Two Unit-Rate Multicast Sessions,'' \emph{IEEE Trans.
Inf. Theory}, vol. 59, no. 9, pp. 5692-5707, Sept. 2013.

\bibitem{Wentu12}
W. Song, R. Feng, K. Cai and J. Zhang, ``Network Coding for
2-unicast with Rate (1,2)'' in \emph{Proc. ISIT}, Boston, 2012.
\end{thebibliography}
\end{document}